\titleformat{\section}[runin]
  {\normalfont\large\bfseries}{\thesection}{.5em}{}
\titleformat{\subsection}[runin]
  {\normalfont\large\bfseries}{\thesubsection}{.5em}{}
\theoremstyle{theorem}\newtheorem{theorem}{Theorem}
\theoremstyle{definition}\newtheorem{definition}{Definition}
\theoremstyle{definition}
\theoremstyle{theorem}\newtheorem{lemma}{Lemma}
\theoremstyle{theorem}\newtheorem{proposition}{Proposition}
\theoremstyle{theorem}
\theoremstyle{remark}\newtheorem{remark}{Remark}
\theoremstyle{remark}\newtheorem{example}{Example}
\title{Optimal Bond Portfolio under constraints}
\title{Optimal Model Points choice in Life Insurance}
\title{Optimal characterization of bond portfolios and model points in Life Insurance}
\title{Optimal model points portfolio in Life Insurance}
\title{Infinite dimensional portfolio representation as applied to model points selection in life insurance}
\author{Enrico Ferri}
\address{Banco Santander, Madrid}
\email{enrico.ferri@gruposantander.com}
\date{\today}
\keywords{infinite-dimensional processes, Malliavin calculus, model points, life insurance, sensitivity analysis.}
\subjclass[2010]{60H05-60H07-91G10-91B30}
\thanks{Written during author's stay at Universidade da Coru\~na. This work has been funded by EU H2020-ITN-EID-2014 WAKEUPCALL (Grant Agreement 643045).}
\begin{document}

\maketitle

\begin{abstract}
We consider the problem of seeking an optimal set of model points associated to a fixed portfolio of life insurance policies. Such an optimal set is characterized by minimizing a certain risk functional, which gauges the average discrepancy with the fixed portfolio in terms of the fluctuation of the interest rate term structure within a given time horizon. We prove a representation theorem which provides two alternative formulations of the risk functional and which may be understood in connection with the standard approaches for the portfolio immunization based on sensitivity analysis. For this purpose, a general framework concerning some techniques of stochastic integration in Banach space and Malliavin calculus is introduced. A numerical example is discussed when considering a portfolio of whole life policies. 
\end{abstract}

\

\section{Introduction.} 
This paper is motivated by problem of the efficient 
portfolio representation, in which one seeks to substitute a portfolio of market securities with a simpler one which owns similar risk, when certain contingent restrictions are permitted. 
One encounters this question when defining a hedging strategy subject to policy and budget constraints or reducing the scale, and hence the complexity, of a specific portfolio for analysis and management purposes, without misrepresenting its inherent risk structure. 
	
As the main purpose of this article, we set up the problem of replacing a given portfolio of life insurance policies by considering a small group of representative contracts, usually known as the related model points. Life insurance companies are allowed by regulators to estimate the performance of any portfolio of policies on the basis of suitable model points in order to reduce the computational difficulties of the operation, provided that it does not result in the loss of any significant attribute of the portfolio itself.

We address this problem by defining a reasonable notion of optimality based on a portfolio comparison criterion. 
In particular, we present an approach consisting in the minimization of a certain risk functional, which gauges the average discrepancy between the original portfolio and a given set of model points, in terms of the fluctuation of the interest rate term structure within a given time horizon.  


We show two different formulations of this functional within the theory of the stochastic integration in UMD Banach spaces advanced in \cite{vanweis}, and by considering some tools of Malliavin calculus developed in this framework, as presented in \cite{maas2,maas} and \cite{pronk}. 
The main idea is to follow the approach considered in \cite{carteh,car} and \cite{eke}, by modelling the discounted price curve as an infinite-dimensional dynamics in a Banach space of continuous functions driven by a cylindrical Wiener process. Thus, any bond portfolio may be represented as an element of the dual of such a space. Within this framework, a natural tool to characterize the diffusion component of the dynamics is via the Banach space of $\gamma$-radonifying operators, which provides an extension of the Hilbert-Schmidt operators class considered in \cite{carteh,car}.

The first formulation is shown in terms of the Malliavin derivative operator.  
In particular, the criterion we obtain turns out to be similar to the minimization approach suggested in \cite{ket} to address the problem of the optimal hedging of bond portfolios, in which a refined notion of duration is introduced by using Malliavin calculus for Gaussian random fields in the Hilbert space framework. 
The second formulation is obtained under further conditions on the model and it mainly involves the diffusive component of the discount curve dynamics. 

Both alternative formulations we present are assessed by invoking notions of differential calculus in Banach spaces. In this respect, they naturally generalize the standard techniques for portfolio immunization based on sensitivity analysis.

\

This paper is organized in the following way.
Section \ref{sec_preliminaries} reviews the results within theory of stochastic integration in UMD Banach spaces and the notions of Malliavin calculus we will use throughout the paper. Although most of the results we present in this section are known, for convenience, we prove those ones we could not find in a suitable form in the existing literature.
Section \ref{sec:optimlaity} collects the main mathematical results of the paper. In particular, Theorem \ref{pr:optimal_strategy} which shows the equivalence of the two different formulations discussed above. 
Sections \ref{sec:financial_setup} and \ref{sec:examples} are dedicated to a detailed description of the setup we propose from the financial point of view. Further, a couple of examples regarding standard problems in portfolio hedging are discussed throughout this approach.  
Section \ref{sec: Interest Rates} shows a direct application when dealing with the theory of the optimal hedging portfolio within the fixed income framework. 
In particular, we discuss a numerical example in which a refined interpretation of the bond duration naturally arises.
Section \ref{sec:life_insurance} is entirely devoted to the problem of characterizing an optimal set of model points when considering a generic portfolio of homogeneous life insurance policies. Then, a similar numerical example is discussed when considering a portfolio of whole life policies.
  
\

\section{Setting.} \label{sec_preliminaries}
All the vector spaces we consider are assumed to be real. Given a vector space $S$ and real values $a$ and $b$, we write $a\lesssim_S b$ to indicate that there exists some constant $\beta$ only depending on $S$ such that $a\leq \beta b$. Further, we write $a\triangleq b$, when the identity $a=b$ holds by definition.

\ 
 
\noindent{\emph{Notation}.} 
Here and in the sequel of the article, we fix a separable Hilbert space $H$ and we write $\langle\cdot,\cdot\rangle_H$ to denote its inner product. We will always identify $H$ with its dual via the Riesz representation theorem. Further, we consider a Banach space $E$ together with its dual $E^\ast$. The duality pairing between $E$ and $E^\ast$ is denoted by $\langle \cdot, \cdot \rangle_E$. Moreover, we write $\mathcal{L}(H,E)$ to denote the space of bounded and linear operators mapping $H$ into $E$.

Let $(\Omega,\mathscr{F},\mathbb{P})$ be a reference complete probability space and write $I$ to denote the unit interval on the real line. 
Throughout this article, an $E$-valued process is a one-parameter family of $E$-valued random variables indexed by  $I$. In most cases, we identify the generic $E$-valued process with the induced map $\Omega\times I\rightarrow E$. Further, a $\mathcal{L}(H,E)$-valued process $\theta\triangleq \lbrace \theta_t : t\in I \rbrace$ is said $H$-strongly measurable if the $E$-valued process $\theta h\triangleq \lbrace \theta_t h, t\in I\rbrace$ is strongly measurable, for any $h\in H$.

We fix a cylindrical $H$-Wiener process $W\triangleq\lbrace W_t : t \in I \rbrace$. i.e. a one-parameter family of bounded and linear operators from $H$ to $L^2(\Omega)$, satisfying the usual conditions:
\begin{itemize}
\item[(i)] For any $h\in H$, the process $Wh\triangleq \lbrace W_th: t\in I \rbrace$ is a standard Brownian motion;
\item[(ii)] For any $t,s\in I$ and $h_1,h_2\in H$, one has that $\mathbb{E}\lbrace W_th_1W_sh_2\rbrace=(s\wedge t)\langle h_1,h_2\rangle_H$.
\end{itemize}

Thus, we define $\mathscr{G}^W\triangleq\lbrace\mathscr{G}^W_t : t \in I \rbrace$ to be the augmented filtration generated by the $H$-Wiener process $W$. We say that an $E$-valued process is adapted when it is adapted to the filtration $\mathscr{G}^W$. On the other hand, an $H$-strongly measurable process $\theta:\Omega\times I \mapsto \mathcal{L}(H,E)$ is said to be adapted if the $E$-valued process $\theta h$ is adapted, for any $h\in H$. 

\
 
\noindent{\emph{$\gamma$-Radonifying operators.}}
We write $\gamma(H,E)$ to denote the subspace of $\gamma$-radonifying operators in $\mathcal{L}(H,E)$. In particular, one has that $\vartheta \in \gamma(H,E)$ if for some (or, equivalently, for any) orthonormal basis $h_1,h_2,...$ of $H$, the random sum $\sum_n \gamma_n\vartheta h_n$ converges in the topology of $L^2(\Omega;E)$, where $\gamma_1,\gamma_2,...$ represents a sequence of independent and real-valued random variables, such that the law of $\gamma_n$ is standard Gaussian, for any $n\geq 1$. 

The space $\gamma(H,E)$ is a Banach space, when endowed with the norm defined by,
\begin{equation}
\Vert \vartheta \Vert_{\gamma(H,E)} \triangleq \bigg \lbrace \mathbb{E}\bigg\Vert \sum_{n\geq 1} \gamma_n \vartheta h_n\bigg\Vert_E^2 \bigg\rbrace^{1/2}, \hspace{1cm} \text{for any $\vartheta\in \gamma(H,E)$} \nonumber.
\end{equation} 

We recall that $\gamma(H,E)$ is an operator ideal in $\mathcal{L}(H,E)$, i.e. the following ideal property holds true.
\begin{lemma} \label{le:ideal_property}
Let $H_1$ and $H_2$ be Hilbert spaces and $E_1$ and $E_2$ Banach spaces. Let $S\in\mathcal{L}(H_2,H_1)$ and $T\in \mathcal{L}(E_1,E_2)$. Then, if $\vartheta\in\gamma(H_1,E_1)$ one has $T \vartheta S\in \gamma(H_2,E_2)$ and moreover the following norm equality holds true, 
\begin{equation}
\Vert T\vartheta S \Vert_{\gamma(H_2,E_2)}\leq \Vert T \Vert_{ \mathcal{L}(E_1,E_2)} \Vert \vartheta \Vert_{\gamma(H_1,E_1)} \Vert S \Vert_{\mathcal{L}(H_2,H_1)}. \nonumber
\end{equation}
\begin{proof}
See, e.g., Theorem 6.2 in \cite{vanner_gamma}.
\end{proof}
\end{lemma}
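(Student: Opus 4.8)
The plan is to prove the two constituent ``sides'' of the ideal property separately and then concatenate them. Writing $T\vartheta S=T(\vartheta S)$, it suffices to establish the \emph{left} estimate $\Vert T\psi\Vert_{\gamma(H_2,E_2)}\leq \Vert T\Vert_{\mathcal{L}(E_1,E_2)}\Vert \psi\Vert_{\gamma(H_2,E_1)}$ for every $\psi\in\gamma(H_2,E_1)$, together with the \emph{right} estimate $\Vert \vartheta S\Vert_{\gamma(H_2,E_1)}\leq \Vert \vartheta\Vert_{\gamma(H_1,E_1)}\Vert S\Vert_{\mathcal{L}(H_2,H_1)}$. Applying the former with $\psi=\vartheta S$ then yields the lemma and simultaneously proves $T\vartheta S\in\gamma(H_2,E_2)$.

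For the left estimate, fix an orthonormal basis $(g_n)$ of $H_2$. Since $\psi\in\gamma(H_2,E_1)$, the series $\sum_n\gamma_n\psi g_n$ converges in $L^2(\Omega;E_1)$, and because $T$ is bounded and linear it commutes with the convergent sum, so that $\sum_n\gamma_n T\psi g_n=T\big(\sum_n\gamma_n\psi g_n\big)$ converges in $L^2(\Omega;E_2)$. This already shows $T\psi\in\gamma(H_2,E_2)$, and taking $L^2(\Omega)$-norms while using $\Vert Tx\Vert_{E_2}\leq \Vert T\Vert_{\mathcal{L}(E_1,E_2)}\Vert x\Vert_{E_1}$ pointwise gives the bound.

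The right estimate is where the real work lies, since here the argument of $\vartheta$ is deformed rather than its value, and one cannot simply ``pull $S$ out of the sum.'' By homogeneity of the norm I may assume $\Vert S\Vert_{\mathcal{L}(H_2,H_1)}\leq 1$ (the case $S=0$ being trivial). The key device is an \emph{isometric dilation} of the contraction $S$: on $K\triangleq H_1\oplus H_2$ define $V:H_2\to K$ by $Vg\triangleq\big(Sg,(I-S^\ast S)^{1/2}g\big)$, which is well defined since $I-S^\ast S\geq 0$ and is an isometry because $\Vert Vg\Vert_K^2=\Vert Sg\Vert_{H_1}^2+\langle(I-S^\ast S)g,g\rangle_{H_2}=\Vert g\Vert_{H_2}^2$. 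Letting $P:K\to H_1$ be the coordinate projection, one has $S=PV$, hence $\vartheta S=(\vartheta P)V$. Moreover $\vartheta P\in\gamma(K,E_1)$ with $\Vert\vartheta P\Vert_{\gamma(K,E_1)}=\Vert\vartheta\Vert_{\gamma(H_1,E_1)}$: computing the $\gamma$-norm in the basis of $K$ obtained by juxtaposing a basis of $H_1$ with one of $H_2$, the terms arising from the $H_2$-summands are annihilated by $\vartheta P$.

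It then remains to show that right composition with the isometry $V$ does not increase the $\gamma$-norm, which is the main obstacle. This follows from a \emph{monotonicity} property of Gaussian sums: since $V$ is an isometry, $(Vg_n)$ is an orthonormal system in $K$, which I complete to an orthonormal basis. Then $\sum_n\gamma_n(\vartheta P)Vg_n$ is a sub-sum of the full Gaussian series for $\Vert\vartheta P\Vert_{\gamma(K,E_1)}$, and for independent symmetric summands the expected squared norm is non-decreasing under the addition of further terms; concretely, for independent mean-zero $\xi,\eta$ in $E_1$ the convexity of $t\mapsto t^2$ together with $\Vert a\Vert\leq\frac12\Vert a+b\Vert+\frac12\Vert a-b\Vert$ and the symmetry of $\eta$ give $\mathbb{E}\Vert\xi+\eta\Vert^2\geq\mathbb{E}\Vert\xi\Vert^2$. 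Consequently $\Vert(\vartheta P)V\Vert_{\gamma(H_2,E_1)}\leq\Vert\vartheta P\Vert_{\gamma(K,E_1)}=\Vert\vartheta\Vert_{\gamma(H_1,E_1)}$, as required. A technically cleaner route, which I would keep in reserve to finesse the convergence of sub-series, is to prove both estimates first for finite-rank $\vartheta$, where all sums are finite and convergence is automatic, and then pass to general $\vartheta$ by the density of finite-rank operators in $\gamma(H_1,E_1)$.
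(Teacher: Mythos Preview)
Your argument is correct. The left estimate is immediate, and for the right estimate the isometric dilation $V:H_2\to H_1\oplus H_2$, $Vg=(Sg,(I-S^\ast S)^{1/2}g)$, together with the monotonicity of Gaussian sums under the addition of independent symmetric summands, is a clean and standard route. The convergence of the sub-series is not a genuine gap: since Gaussian series are sign-invariant in distribution, the Cauchy property of the full sum $\sum_m\gamma_m(\vartheta P)f_m$ transfers to any sub-sum via precisely the monotonicity inequality you already invoke, so your reserve device of finite-rank approximation is unnecessary (though of course also valid).

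As for comparison with the paper: there is essentially nothing to compare. The paper does not give a proof at all; it simply cites Theorem~6.2 of van~Neerven's survey on $\gamma$-radonifying operators. Your write-up is therefore strictly more informative than what the paper provides. The dilation-plus-monotonicity argument you chose is in fact one of the standard proofs found in that literature, so in spirit you have reconstructed what lies behind the citation.
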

In the case when $\vartheta\in \gamma(H,E)$ and $x^\ast\in E^\ast$, we write $\langle x^\ast,\vartheta \rangle_E$ to denote the $H$-valued dual pairing between $\gamma(H,E)$ and $E^\ast$, defined by setting $\langle x^\ast,\vartheta \rangle_E\triangleq \vartheta^\ast x^\ast$, where $\vartheta^\ast$ denotes the Banach space adjoint operator of $\vartheta$.
On the other hand, from Lemma \ref{le:ideal_property}, one has that
\begin{equation}\label{eq:ideal_dual}
\Vert \langle x^\ast,\vartheta\rangle_E \Vert_H \leq \Vert x^\ast\Vert_{E^\ast} \Vert \vartheta \Vert_{\gamma(H,E)}.
\end{equation}

We recall that, if $E$ is further assumed to be a Hilbert space, then $\gamma(H,E)$ boils down to the space of Hilbert-Schmidt operators mapping $H$ into $E$. Hence, we have the natural identifications $\gamma(H,\mathbb{R})=H$ and $\gamma(\mathbb{R},E)=E$, endowed with the related norms. 

The following lemma introduces the notation for the trace operator $\text{tr}(\cdot;\cdot)$. 
\begin{lemma}
Let $h_1,h_2,...$ be an orthonormal basis of $H$.
For any $T\in \mathcal{L}(E,E^\ast)$ and $\vartheta\in \gamma(H,E)$, the series  
\begin{equation}\label{eq:trace_operator}
\text{tr}(T;\vartheta)\triangleq \sum_{n\geq 1} \langle T(\vartheta h_n),\vartheta h_n\rangle_E,
\end{equation}
converges and its sum does not depend on the choice of the orthonormal basis $h_1,h_2,...$ of $H$.
\begin{proof}
See, e.g., Lemma 2.3 in \cite{brz}.
\end{proof}
\end{lemma}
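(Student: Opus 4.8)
The plan is to realise the trace as the expectation of a quadratic Gaussian functional, which will settle both the convergence and the basis independence simultaneously. Fix the orthonormal basis $h_1,h_2,\ldots$ and the independent standard Gaussians $\gamma_1,\gamma_2,\ldots$ from the definition of $\gamma(H,E)$, and set $\xi_N\triangleq\sum_{n=1}^N\gamma_n\vartheta h_n$. By the very definition of $\gamma(H,E)$ the sequence $\xi_N$ converges in $L^2(\Omega;E)$ to a centred $E$-valued Gaussian variable $\xi$. Since $T\in\mathcal{L}(E,E^\ast)$, the map $(x,y)\mapsto\langle Tx,y\rangle_E$ is a bounded bilinear form on $E\times E$; expanding it on the finite sum $\xi_N$, taking expectations and using $\mathbb{E}[\gamma_m\gamma_n]=\delta_{mn}$ to kill the cross terms yields
\begin{equation}
\mathbb{E}\langle T\xi_N,\xi_N\rangle_E=\sum_{m,n=1}^N\mathbb{E}[\gamma_m\gamma_n]\,\langle T(\vartheta h_m),\vartheta h_n\rangle_E=\sum_{n=1}^N\langle T(\vartheta h_n),\vartheta h_n\rangle_E. \nonumber
\end{equation}
Thus the $N$-th partial sum of the series defining $\text{tr}(T;\vartheta)$ coincides with $\mathbb{E}\langle T\xi_N,\xi_N\rangle_E$.

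Next I would pass to the limit $N\to\infty$. Writing $\langle T\xi_N,\xi_N\rangle_E-\langle T\xi,\xi\rangle_E=\langle T(\xi_N-\xi),\xi_N\rangle_E+\langle T\xi,\xi_N-\xi\rangle_E$ and invoking the boundedness of $T$ together with the Cauchy--Schwarz inequality in $L^2(\Omega)$, each term is bounded by $\Vert T\Vert_{\mathcal{L}(E,E^\ast)}\,\Vert\xi_N-\xi\Vert_{L^2(\Omega;E)}(\Vert\xi_N\Vert_{L^2(\Omega;E)}+\Vert\xi\Vert_{L^2(\Omega;E)})$, which tends to $0$ because $\xi_N\to\xi$ in $L^2(\Omega;E)$. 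Hence the partial sums converge and
\begin{equation}
\text{tr}(T;\vartheta)=\mathbb{E}\langle T\xi,\xi\rangle_E, \nonumber
\end{equation}
which proves convergence of the series and exhibits the trace as a quantity attached to the variable $\xi$.

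The basis independence then follows because this right-hand side depends only on the law of $\xi$. The covariance of $\xi$ is computed, for $x^\ast,y^\ast\in E^\ast$, by the same orthogonality of the $\gamma_n$ and Parseval's identity in $H$, giving $\mathbb{E}[\langle x^\ast,\xi\rangle_E\langle y^\ast,\xi\rangle_E]=\sum_{n\geq1}\langle\vartheta^\ast x^\ast,h_n\rangle_H\langle\vartheta^\ast y^\ast,h_n\rangle_H=\langle\vartheta^\ast x^\ast,\vartheta^\ast y^\ast\rangle_H$, an expression making no reference to the chosen basis. Since a centred Gaussian measure on $E$ is determined by its covariance, the law of $\xi$, and therefore $\mathbb{E}\langle T\xi,\xi\rangle_E$, is the same for every orthonormal basis of $H$.

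The bilinear expansion and the Cauchy--Schwarz estimate are routine; the point requiring the most care is the interchange of limit and expectation, which here is handled cleanly by the $L^2(\Omega;E)$ convergence built into the definition of $\gamma(H,E)$ rather than by any \emph{ad hoc} dominated-convergence argument. A secondary subtlety is that the scalar series $\sum_n\langle T(\vartheta h_n),\vartheta h_n\rangle_E$ need not converge absolutely in general, so the Gaussian representation is precisely what circumvents the need for absolute summability.
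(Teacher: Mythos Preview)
Your argument is correct. The paper does not give its own proof but simply cites Lemma~2.3 in \cite{brz}, so there is no detailed comparison to make; your Gaussian-expectation representation $\text{tr}(T;\vartheta)=\mathbb{E}\langle T\xi,\xi\rangle_E$ is in fact the standard route to this result and is essentially what one finds in the cited reference. The two key steps---the $L^2(\Omega;E)$ limit argument replacing any appeal to absolute summability, and the identification of the law of $\xi$ through the basis-free covariance $\langle\vartheta^\ast x^\ast,\vartheta^\ast y^\ast\rangle_H$---are handled cleanly.
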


We refer to \cite{vanner_gamma} and \cite{vanner} for an exhaustive description of the space $\gamma(H,E)$ and further properties. 

\ 

\noindent{\emph{Mallivian Derivative.}}
Throughout this paper, we write $\mathscr{H}=L^2(I;H)$. 

For any $E$-valued variable $Z$ differentiable in the Malliavin sense, we write $D Z$ for its Malliavin derivative. More precisely, we understand the Malliavin derivative as a closable operator from $L^2(\Omega;E)$ into $L^2(\Omega;\gamma(\mathscr{H},E))$, (see, e.g., \cite{maas2,maas} and \cite{pronk}). We denote by $D$ such a closure and by $\mathbb{H}^{1,2}(E)$ the domain of $D$ in $L^2(\Omega;E)$. 

We recall that $\mathbb{H}^{1,2}(E)$ is a Banach space when it is endowed with the norm 
\begin{equation}
\Vert Z \Vert_{\mathbb{H}^{1,2}(E)} \triangleq \lbrace\Vert Z\Vert_{L^{2}(\Omega; E)}^2 + \Vert D Z \Vert_{L^{2}(\Omega; \gamma(\mathscr{H},E))}^2 \rbrace^{1/2}, \hspace{1cm} \text{for any $Z\in \mathbb{H}^{1,2}(E)$.} \nonumber
\end{equation} 

Similarly, $\mathbb{H}^{2,2}(E)$ denotes the domain of $D^2\triangleq D\circ D$ in $L^2(\Omega,E)$. In the particular case when $E=\mathbb{R}$, we write $\mathbb{H}^{k,2}\triangleq\mathbb{H}^{k,2}(\mathbb{R})$, for $k=1,2$.\\

\noindent{\emph{Stochastic evolution.}}
Throughout this paper, we always assume $E$ to be a UMD space with type 2 (see \cite{maas} and \cite{vanner}). 

Let $\xi_0\in L^2(\Omega;E)$ be a strongly $\mathscr{G}^W_0$-measurable random variable. Consider an adapted and strongly measurable $E$-valued stochastic process $b=\lbrace b_t : t\in I \rbrace$, that belongs to $L^2(\Omega;L^2(I;E))$. Let $\sigma=\lbrace \sigma_t : t\in I \rbrace$ be some adapted and $H$-strongly measurable $\mathcal{L}(H,E)$-valued process that belongs to $L^2(\Omega;L^2(I;\gamma(H,E)))$. 

Moreover, we suppose that the following conditions hold true,
\begin{equation}\label{eq:malliavin_hypotesis}
\xi_0\in\mathbb{H}^{1,2}(E), \ \ \ b\in \mathbb{H}^{1,2}(L^2(I;E)), \ \ \ \sigma \in \mathbb{H}^{2,2}(L^2(I;\gamma(H,E))).
\end{equation}

The result below will be useful later on in this paper.
\begin{lemma}\label{le:embedding_gamma}
The process $\sigma$ is well defined as an element of $L^2(\Omega;\gamma(\mathscr{H},E))$. Moreover, for $k=1,2$, we have that $\sigma\in \mathbb{H}^{k,2}(\gamma(\mathscr{H},E))$, and the following norm inequality holds,
\begin{equation}\label{eq:embedding_gamma_norm_inequality}
\Vert \sigma \Vert_{\mathbb{H}^{k,2}(\gamma(\mathscr{H},E))}\lesssim_{E} \Vert \sigma \Vert_{\mathbb{H}^{k,2}(L^2(I,\gamma(H,E)))}. 
\end{equation}
\begin{proof}
Since the space $E$ is assumed to have type 2, there exists a continuous and linear embedding $$\imath_2 : L^2(I,\gamma(H,E)) \hookrightarrow \gamma(\mathscr{H},E),$$ 
with operatorial norm satisfying $\Vert \imath_2 \Vert \leq \beta_2$, where $\beta_2$ denotes the type $2$ constant of $E$, (see \cite{vanweis2}, Lemma 6.1). Thus, the process $\sigma$ turns out to be well defined as an element of $L^2(\Omega;\gamma(\mathscr{H},E))$. Moreover, one has $\Vert \cdot \Vert_{\gamma(\mathscr{H},E)} \lesssim_E  \Vert \cdot \Vert_{L^2(I;\gamma(H,E))}$, and hence for $k=1,2$, the following inequality holds true,
\begin{equation}\label{eq:embedding_gamma:norm:inequality2}
\Vert \cdot \Vert_{\mathbb{H}^{k,2}(\gamma(\mathscr{H},E))}\lesssim_E \Vert \cdot \Vert_{\mathbb{H}^{k,2}(L^2(I,\gamma(H,E)))}. 
\end{equation}

Thus, since we assumed that $\sigma \in \mathbb{H}^{2,2}(L^2(I;\gamma(H,E)))$, from inequality (\ref{eq:embedding_gamma:norm:inequality2}) we obtain that $\sigma \in \mathbb{H}^{1,2}(\gamma(\mathscr{H},E))$.
\end{proof}
\end{lemma}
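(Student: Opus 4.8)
The plan is to realise the embedding at the level of the value spaces and then transport it through the Malliavin structure by means of the operator-ideal property. The decisive ingredient is the type~2 embedding
\begin{equation}
\imath_2 : L^2(I;\gamma(H,E)) \hookrightarrow \gamma(\mathscr{H},E), \nonumber
\end{equation}
whose existence, with operator norm bounded by the type~2 constant $\beta_2$ of $E$, is guaranteed by the standing assumption that $E$ has type~2 (Lemma~6.1 in \cite{vanweis2}). Since $\imath_2$ is bounded and linear and $\sigma\in L^2(\Omega;L^2(I;\gamma(H,E)))$, the composition $\imath_2\sigma$ is strongly measurable and square integrable, so that $\sigma$ is well defined as an element of $L^2(\Omega;\gamma(\mathscr{H},E))$; this settles the first assertion and simultaneously records the fibrewise estimate $\Vert\cdot\Vert_{\gamma(\mathscr{H},E)}\lesssim_E \Vert\cdot\Vert_{L^2(I;\gamma(H,E))}$.

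First I would record the commutation rule for bounded linear maps: for $T\in\mathcal{L}(F_1,F_2)$ and $Z\in\mathbb{H}^{1,2}(F_1)$ one has $TZ\in\mathbb{H}^{1,2}(F_2)$ with $D(TZ)=T\circ DZ$, where the right-hand side is the post-composition of the operator $DZ\in\gamma(\mathscr{H},F_1)$ with $T$. This identity is immediate on smooth cylindrical variables, where $D$ differentiates only the scalar coefficients and $T$ factors out, and it extends to all of $\mathbb{H}^{1,2}(F_1)$ by density together with the closability of $D$. Applying the ideal property of Lemma~\ref{le:ideal_property} with $S=\mathrm{Id}$ bounds the derivative fibrewise by $\Vert T\circ DZ\Vert_{\gamma(\mathscr{H},F_2)}\leq \Vert T\Vert_{\mathcal{L}(F_1,F_2)}\Vert DZ\Vert_{\gamma(\mathscr{H},F_1)}$, and after integration over $\Omega$ one obtains $\Vert TZ\Vert_{\mathbb{H}^{1,2}(F_2)}\leq \Vert T\Vert\,\Vert Z\Vert_{\mathbb{H}^{1,2}(F_1)}$.

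Next I would iterate this with $F=L^2(I;\gamma(H,E))$, $G=\gamma(\mathscr{H},E)$ and $T=\imath_2$. The commutation rule used twice gives $D^2(\imath_2\sigma)=\imath_2\circ D^2\sigma$, with $\imath_2$ acting in the innermost value slot of the iterated space $\gamma(\mathscr{H},\gamma(\mathscr{H},F))$, and the ideal property again controls each order of differentiation by the factor $\beta_2$. Summing the contributions of $D^0$, $D^1$ and $D^2$ yields the norm inequality (\ref{eq:embedding_gamma_norm_inequality}) for $k=1,2$ with an implicit constant depending only on the type~2 constant of $E$. Since by hypothesis $\sigma\in\mathbb{H}^{2,2}(L^2(I;\gamma(H,E)))\subset\mathbb{H}^{1,2}(L^2(I;\gamma(H,E)))$, both right-hand sides are finite, and the memberships $\sigma\in\mathbb{H}^{k,2}(\gamma(\mathscr{H},E))$ for $k=1,2$ follow.

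I expect the only delicate point to be the commutation identity $D(\imath_2 Z)=\imath_2\circ DZ$ and its second-order analogue, since it is precisely here that the Banach-space geometry intervenes: the object $\imath_2\circ DZ$ must be read as an element of the correct $\gamma$-radonifying space rather than as a bare composition, and the elementary identity on cylindrical functions has to be shown to survive the closure procedure in both the $L^2(\Omega;\cdot)$ and the $\gamma(\mathscr{H},\cdot)$ topologies. Once this operator-ideal bookkeeping is secured, the remaining summation of norms is routine.
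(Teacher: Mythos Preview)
Your proposal is correct and follows essentially the same route as the paper: both invoke the type~2 embedding $\imath_2 : L^2(I;\gamma(H,E)) \hookrightarrow \gamma(\mathscr{H},E)$ from \cite{vanweis2} and then lift the resulting fibrewise norm inequality to the Malliavin--Sobolev scale. The paper compresses the passage from $\Vert\cdot\Vert_{\gamma(\mathscr{H},E)}\lesssim_E\Vert\cdot\Vert_{L^2(I;\gamma(H,E))}$ to the $\mathbb{H}^{k,2}$ inequality into a single ``hence'', whereas you make explicit the underlying mechanism---the commutation $D(\imath_2 Z)=\imath_2\circ DZ$ together with the ideal property of Lemma~\ref{le:ideal_property}---which is indeed what justifies that step.
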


We will use the notion of stochastic integration as defined in 
\cite{vanner} and \cite{vanweis}.
Moreover, for any $\mathcal{L}(H,E)$-valued stochastically integrable process $\psi\triangleq \lbrace \psi_t : t\in I \rbrace$, we will write
\begin{equation}
\int_I \psi_t dW_t = \delta (\psi), \nonumber
\end{equation}
where $\delta$ denotes the divergence operator defined on $L^2(\Omega;\gamma(\mathscr{H},E))$, (see \cite{maas}, Theorem 5.4). 

For any stochastically integrable $\mathcal{L}(H,\mathbb{R})$-valued process, the version of the It\^o's isometry is regarded as follows.
\begin{lemma}\label{le:ito_isometry}
Let $\psi\triangleq \lbrace \psi_t : t\in I \rbrace$ be an adapted $H$-strongly measurable and stochastically integrable process, taking values in $\mathcal{L}(H,\mathbb{R})$. Then, for any $t\in I$, one has
\begin{equation}
\mathbb{E} \bigg\lbrace \bigg\vert\int_0^t \psi_s dW_s \bigg\vert^2\bigg\rbrace=\mathbb{E}\bigg\lbrace \int_0^t \Vert\psi_s\Vert^2_H ds \bigg\rbrace. \nonumber
\end{equation}
\begin{proof}
Fix $t\geq 1$ and let $h_n$, for $n\geq 1$, be an orthonormal basis of $H$. Then, for any $n\geq 1$, the process $\psi h_n\triangleq \lbrace \langle\psi_t,h_n\rangle_H : t\in I \rbrace$ is stochastically integrable with respect to $Wh_n\triangleq \lbrace W_th_n : t\in I \rbrace$, and for any $t\in I$ the following representation holds, 
\begin{equation}\label{eq:series_expansion_Ito_integral}
\int_0^t \psi_s dW_s = \sum_{n\geq 1}\int_0^t \langle\psi_s,h_n\rangle_H dW_sh_n, 
\end{equation} 
where the convergence of the series in (\ref{eq:series_expansion_Ito_integral}) is understood in the topology of $L^2(\Omega)$, (see \cite{vanner}, Corollary 3.9). 

Besides, since the processes $Wh_n$ and $Wh_m$ are independent for any $n,m\geq 1$ such that $n\neq m$, jointly with the It\^o's isometry, for any $t\in I$ we have that, 
\begin{eqnarray}\label{eq:ito_delta}
\mathbb{E}\bigg\lbrace \int_0^t \langle\psi_s,h_n\rangle_H dW_sh_n \cdot \int_0^t \langle\psi_s, h_m\rangle_H dW_s h_m \bigg\rbrace &=& \delta_{nm}  \mathbb{E} \bigg\lbrace \bigg\vert \int_0^t \langle\psi_s,h_n\rangle_H dW_sh_n \bigg\vert^2\bigg\rbrace   \nonumber \\
&=& \delta_{nm} \mathbb{E} \bigg \lbrace \int_0^t \vert \langle\psi_s,h_n\rangle_H \vert^2 ds \bigg\rbrace 
\end{eqnarray}
where $\delta_{nm}$ denotes the Kronecker delta $\delta_{nm}=1$ if $n=m$ and $\delta_{nm}=0$ otherwise. 

Thus, we have 
\begin{eqnarray}
\mathbb{E}\bigg\lbrace \bigg\vert \int_0^t \psi_s dW_s \bigg\vert^2\bigg\rbrace &\stackrel{\text{(i)}}{=}& \mathbb{E} \bigg\lbrace \bigg\vert \sum_{n\geq 1}\int_0^t \langle\psi_s,h_n\rangle_H dW_sh_n \bigg\vert^2\bigg\rbrace \nonumber\\
&=& \sum_{n\geq 1} \sum_{m\geq 1}\mathbb{E}\bigg\lbrace \int_0^t \langle\psi_s,h_n\rangle_H dW_sh_n \cdot \int_0^t \langle\psi_s, h_m\rangle_H dW_s h_m \bigg\rbrace \nonumber  \\
&\stackrel{\text{(ii)}}{=}& \sum_{n\geq 1} \mathbb{E} \bigg \lbrace \int_0^t \vert \langle \psi_s,h_n\rangle_H \vert^2 ds \bigg\rbrace \nonumber \\
&=& \mathbb{E} \bigg \lbrace \int_0^t \Vert \psi_s \Vert^2_H ds \bigg\rbrace. \nonumber
\end{eqnarray}
where in $\text{(i)}$ we have used the representation (\ref{eq:series_expansion_Ito_integral}) and in $\text{(ii)}$ the identity (\ref{eq:ito_delta}).
\end{proof}
\end{lemma}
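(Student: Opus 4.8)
The plan is to reduce the (here scalar-valued) Banach-space stochastic integral to a countable family of \emph{independent} one-dimensional It\^o integrals indexed by an orthonormal basis of $H$, and then to invoke the classical scalar It\^o isometry termwise. First I would fix an orthonormal basis $\{h_n\}_{n\geq 1}$ of $H$ and record that, since $\psi$ is $\mathcal{L}(H,\mathbb{R})$-valued and stochastically integrable, one has the series decomposition $\int_0^t \psi_s\,dW_s=\sum_{n\geq 1}\int_0^t\langle\psi_s,h_n\rangle_H\,dW_s h_n$ with convergence in the topology of $L^2(\Omega)$; this is precisely the representation (\ref{eq:series_expansion_Ito_integral}) available from the stochastic integration theory in \cite{vanner}. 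Here each integrand $\langle\psi_s,h_n\rangle_H$ is a scalar adapted process (under the Riesz identification $\mathcal{L}(H,\mathbb{R})=H$) and each integrator $W h_n=\{W_t h_n:t\in I\}$ is, by condition (i), a standard Brownian motion.

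The second step is to exploit the covariance structure (ii) of the cylindrical Wiener process: for $n\neq m$ one has $\mathbb{E}\{W_t h_n\,W_s h_m\}=(s\wedge t)\langle h_n,h_m\rangle_H=0$, so the jointly Gaussian increments of $W h_n$ and $W h_m$ are uncorrelated and hence the two Brownian motions are independent. Expanding the square of the $L^2(\Omega)$-convergent series into a double sum, I would argue that all off-diagonal expectations vanish, by the independence of $W h_n$ and $W h_m$ together with the fact that each scalar It\^o integral is centred, and that each diagonal term equals $\mathbb{E}\{\int_0^t|\langle\psi_s,h_n\rangle_H|^2\,ds\}$ by the one-dimensional It\^o isometry; this is the content of the identity (\ref{eq:ito_delta}). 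Summing over $n$ and applying Parseval's identity $\sum_{n\geq 1}|\langle\psi_s,h_n\rangle_H|^2=\|\psi_s\|_H^2$ inside the expectation and the time integral then yields the claimed equality.

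The main point requiring care is the passage from the square of the series to the sum of squares: one must justify both that the cross terms genuinely contribute nothing and that the interchange of the infinite sum with $\mathbb{E}$ and $\int_0^t(\cdot)\,ds$ is legitimate. I expect this to rest on two facts already secured by the setup, namely the $L^2(\Omega)$ convergence of the defining series (so that the inner product of the two sums equals the sum of the pairwise inner products) and the independence-plus-centredness of the component integrals. Once these are in hand, the final exchange of summation and integration is harmless, since the summands $|\langle\psi_s,h_n\rangle_H|^2$ are nonnegative and Tonelli's theorem applies directly.
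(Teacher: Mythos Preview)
Your proposal is correct and follows essentially the same route as the paper: expand the scalar stochastic integral along an orthonormal basis via the series representation from \cite{vanner}, use independence of the coordinate Brownian motions $Wh_n$ to kill the cross terms, apply the one-dimensional It\^o isometry on the diagonal, and finish with Parseval. If anything, your write-up is slightly more careful in noting that the vanishing of the off-diagonal terms requires both independence \emph{and} the centredness of the scalar It\^o integrals, and in invoking Tonelli for the final interchange.
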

\begin{lemma}
The process $\sigma$ is stochastically integrable. 
\begin{proof}
The result follows directly from Corollary 3.10 in \cite{vanner}, since  $\sigma\in L^2(\Omega;L^2(I;\gamma(H,E)))$.
%
%
\end{proof}
\end{lemma}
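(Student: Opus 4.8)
The plan is to deduce stochastic integrability of $\sigma$ from the general theory of \cite{vanner} by checking that $\sigma$ falls within the class of integrands for which the stochastic integral has been constructed, namely the $H$-strongly measurable, adapted processes that represent elements of $L^2(\Omega;\gamma(\mathscr{H},E))$, with $\mathscr{H}=L^2(I;H)$.

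First I would recall the abstract integrability criterion in UMD spaces: an adapted and $H$-strongly measurable $\mathcal{L}(H,E)$-valued process is stochastically integrable with respect to the cylindrical $H$-Wiener process $W$ as soon as it defines, after averaging over $\Omega$, an element of $L^2(\Omega;\gamma(\mathscr{H},E))$. This is exactly the sufficient condition packaged in Corollary 3.10 of \cite{vanner}. The process $\sigma$ is assumed adapted and $H$-strongly measurable from the outset, so the two measurability requirements are met automatically, and the only genuine point to verify is the $\gamma$-radonifying bound in the $L^2(\Omega)$ sense.

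Second, that bound is already at hand through Lemma \ref{le:embedding_gamma}. Since $E$ is UMD of type $2$, the continuous embedding $\imath_2:L^2(I;\gamma(H,E))\hookrightarrow \gamma(\mathscr{H},E)$ transports the standing hypothesis $\sigma\in L^2(\Omega;L^2(I;\gamma(H,E)))$ into the required membership $\sigma\in L^2(\Omega;\gamma(\mathscr{H},E))$, with the norm control (\ref{eq:embedding_gamma_norm_inequality}). Combining this with the abstract criterion then yields the claim immediately.

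I do not anticipate any real obstacle in this argument. The entire substance is concentrated in the type-$2$ embedding $\imath_2$, which is precisely where the assumption on $E$ is consumed and which has already been isolated in Lemma \ref{le:embedding_gamma}; everything else is a matter of matching the hypotheses of the cited integrability result to the properties of $\sigma$. Should one prefer to avoid invoking the embedding, an alternative but heavier route would be to verify the decoupled square-function (It\^o-type) bounds for $\sigma$ directly, in the spirit of Lemma \ref{le:ito_isometry}, though this merely reproves the same content with more effort.
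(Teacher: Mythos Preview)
Your proposal is correct and follows essentially the same route as the paper: both arguments reduce to Corollary~3.10 in \cite{vanner}, using that $\sigma$ is adapted, $H$-strongly measurable, and lies in $L^2(\Omega;L^2(I;\gamma(H,E)))$. The only difference is that you insert Lemma~\ref{le:embedding_gamma} as an intermediate step to pass to $L^2(\Omega;\gamma(\mathscr{H},E))$, whereas the paper applies Corollary~3.10 directly to the $L^2(\Omega;L^2(I;\gamma(H,E)))$ hypothesis; since that corollary already absorbs the type-$2$ embedding, your detour is harmless but redundant.
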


We consider the $E$-valued stochastic process $\xi\triangleq \lbrace \xi_t : t\in I \rbrace$ defined by setting,
\begin{equation} \label{eq:diffusive_dynamics_general}
\xi_t \triangleq \xi_0 + \int_0^t b_sds + \int_0^t\sigma_s dW_s, \hspace{1cm} \text{for any $t\in I$}. 
\end{equation}
The following result is taken from \cite{pronk}; we include a proof for convenience.

\begin{lemma}\label{le:xi_malliavin_domain}
For any $t\in I$, we have that $\xi_t\in \mathbb{H}^{1,2}(E)$.
\end{lemma}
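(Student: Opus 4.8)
We must show that for each fixed $t\in I$, the random variable $\xi_t$ defined by the stochastic evolution equation \eqref{eq:diffusive_dynamics_general} belongs to the Malliavin domain $\mathbb{H}^{1,2}(E)$. The natural strategy is to treat the three summands in \eqref{eq:diffusive_dynamics_general} separately, since $\mathbb{H}^{1,2}(E)$ is a vector space and it suffices to prove that each term lies in it.

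Let me plan this out.

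**Term-by-term plan.** The initial condition $\xi_0$ is assumed directly in \eqref{eq:malliavin_hypotesis} to lie in $\mathbb{H}^{1,2}(E)$, so there is nothing to do there. For the Bochner (drift) integral $\int_0^t b_s\,ds$, I would view it as the image of $b$ under the bounded linear map $L^2(I;E)\to E$ given by integration against $\mathbf{1}_{[0,t]}$; since $b\in\mathbb{H}^{1,2}(L^2(I;E))$ by \eqref{eq:malliavin_hypotesis} and the Malliavin derivative commutes with bounded linear operators (it is closable and operator-compatible by the chain rule for linear maps), the drift term inherits membership in $\mathbb{H}^{1,2}(E)$, with $D$ acting by the same integration. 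The genuine content is the stochastic integral term.

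**The stochastic integral term — the crux.** Here I would invoke the interplay between the divergence operator $\delta$ and the Malliavin derivative $D$. We have $\int_0^t\sigma_s\,dW_s=\delta(\mathbf{1}_{[0,t]}\sigma)$, and by Lemma~\ref{le:embedding_gamma} the truncated integrand $\mathbf{1}_{[0,t]}\sigma$ lives in $\mathbb{H}^{1,2}(\gamma(\mathscr{H},E))$, with norm controlled by $\Vert\sigma\Vert_{\mathbb{H}^{2,2}(L^2(I;\gamma(H,E)))}$ from \eqref{eq:malliavin_hypotesis}. The key analytic tool is the commutation relation
\begin{equation}
D\,\delta(\sigma)=\sigma+\delta(D\sigma),\nonumber
\end{equation}
(the Banach-space version from \cite{maas}, \cite{pronk}), which expresses $D\delta(\sigma)$ through the integrand itself plus the divergence of its derivative. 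To apply it I must check that $D\sigma$ is again stochastically integrable: this is exactly where the \emph{second-order} Malliavin regularity $\sigma\in\mathbb{H}^{2,2}(\cdots)$ assumed in \eqref{eq:malliavin_hypotesis} is needed, since $D\sigma$ must itself lie in a space on which $\delta$ acts (an $L^2(\Omega;\gamma(\mathscr{H},\cdot))$-type space), guaranteed via Lemma~\ref{le:embedding_gamma} applied at the level $k=2$. Once $D\delta(\mathbf{1}_{[0,t]}\sigma)$ is identified and shown to lie in $L^2(\Omega;\gamma(\mathscr{H},E))$, we conclude $\delta(\mathbf{1}_{[0,t]}\sigma)\in\mathbb{H}^{1,2}(E)$.

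**The main obstacle.** I expect the hard part to be the rigorous justification that the commutation formula $D\delta=\mathrm{id}+\delta D$ applies to $\mathbf{1}_{[0,t]}\sigma$ in this UMD, type-2 Banach setting, rather than merely to elementary/adapted simple integrands. Concretely, one must either exhibit $\mathbf{1}_{[0,t]}\sigma$ as a limit of simple processes for which the formula is elementary and then pass to the limit using closability of $D$ together with the isometry-type bound \eqref{eq:embedding_gamma_norm_inequality}, or cite the precise form of this Malliavin–Skorohod relation from \cite{maas} and \cite{pronk}. The type-2 embedding $\imath_2$ and the ideal property (Lemma~\ref{le:ideal_property}) do the book-keeping that keeps every object in the correct $\gamma$-radonifying space throughout, so that the limiting argument stays within $L^2(\Omega;\gamma(\mathscr{H},E))$ and the closed graph of $D$ can be invoked to finish.
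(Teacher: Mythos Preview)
Your proposal is correct and follows essentially the same skeleton as the paper: split $\xi_t$ into its three summands, dispose of $\xi_0$ and the drift trivially, and focus on showing $\delta(\mathbbm{1}_{[0,t]}\sigma)\in\mathbb{H}^{1,2}(E)$ using the second-order Malliavin regularity of $\sigma$. The only presentational difference is that where you unpack the commutation identity $D\delta=\mathrm{id}+\delta D$ and worry about justifying it, the paper instead invokes it in packaged form as Proposition~4.4 of \cite{pronk} (which asserts that $\delta$ maps $\mathbb{H}^{2,2}(\gamma(\mathscr{H},E))$ into $\mathbb{H}^{1,2}(E)$); to feed into that proposition the paper uses the isometry $\mathbb{H}^{2,2}(\gamma(\mathscr{H},E))\cong\gamma(\mathscr{H},\mathbb{H}^{2,2}(E))$ (Theorem~2.9 of \cite{pronk}) together with the ideal property to verify that truncation by $\mathbbm{1}_{[0,t]}$ preserves membership in $\mathbb{H}^{2,2}(\gamma(\mathscr{H},E))$, a step you gloss over when you assert Lemma~\ref{le:embedding_gamma} applies directly to the truncated integrand.
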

First, we prove the following Lemma.
\begin{lemma}\label{le:xi_well_defined}
For any $t\in I$, we have that $\xi_t \in L^2(\Omega;E)$ is well defined, and moreover 
\begin{equation}
\sup_{t\in I}\Vert \xi_t \Vert^2_{L^2(\Omega;E)}< \infty. \nonumber
\end{equation}
\end{lemma}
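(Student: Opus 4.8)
The plan is to treat the three summands in the definition (\ref{eq:diffusive_dynamics_general}) of $\xi_t$ separately, bound each of them in $L^2(\Omega;E)$ by a quantity independent of $t$, and then combine them through the triangle inequality in $L^2(\Omega;E)$. The first summand $\xi_0$ belongs to $L^2(\Omega;E)$ by hypothesis, and its norm is trivially constant in $t$, so it requires no further work.

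For the drift term, I would argue pathwise and use Minkowski's integral inequality followed by the Cauchy--Schwarz inequality over the unit interval $I$, which has finite measure. This gives
\begin{equation}
\bigg\Vert \int_0^t b_s\,ds \bigg\Vert_E \leq \int_0^t \Vert b_s \Vert_E\,ds \leq \int_I \Vert b_s \Vert_E\,ds \leq \Vert b \Vert_{L^2(I;E)}, \nonumber
\end{equation}
where the last step uses $|I|=1$. Taking the $L^2(\Omega)$-norm and invoking the assumption $b \in L^2(\Omega;L^2(I;E))$ then produces a finite bound that is manifestly independent of $t$.

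The stochastic integral term carries the main difficulty, and I expect it to be the crux of the argument. Here I would first note that $\int_0^t \sigma_s\,dW_s$ is the integral of the restricted integrand $\mathbf{1}_{[0,t]}\sigma$, so that by the monotonicity of the $\gamma$-norm under restriction of the time domain one has $\Vert \mathbf{1}_{[0,t]}\sigma \Vert_{\gamma(\mathscr{H},E)} \leq \Vert \sigma \Vert_{\gamma(\mathscr{H},E)}$ pathwise. Invoking the It\^o-type inequality available for stochastically integrable processes in a UMD space of type $2$ (as developed in \cite{vanner} and \cite{vanweis}), I would then estimate
\begin{equation}
\mathbb{E} \bigg\Vert \int_0^t \sigma_s\,dW_s \bigg\Vert_E^2 \lesssim_E \mathbb{E} \Vert \mathbf{1}_{[0,t]}\sigma \Vert_{\gamma(\mathscr{H},E)}^2 \leq \Vert \sigma \Vert_{L^2(\Omega;\gamma(\mathscr{H},E))}^2. \nonumber
\end{equation}
The right-hand side is finite and $t$-independent: indeed, by the type $2$ embedding $\imath_2$ used in the proof of Lemma \ref{le:embedding_gamma}, it is controlled by $\Vert \sigma \Vert_{L^2(\Omega;L^2(I;\gamma(H,E)))}$, which is finite by assumption.

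Collecting the three estimates and applying the triangle inequality in $L^2(\Omega;E)$ shows that $\xi_t$ is well defined in $L^2(\Omega;E)$, with $\Vert \xi_t \Vert_{L^2(\Omega;E)}$ bounded by a sum of $t$-independent constants; taking the supremum over $t\in I$ then yields the claim. The main obstacle is the correct invocation of the vector-valued stochastic integral estimate in the type $2$ UMD setting, together with the observation that restricting the integrand to $[0,t]$ keeps the controlling $\gamma$-norm uniformly bounded in $t$.
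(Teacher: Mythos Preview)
Your proof is correct, but it follows a somewhat different route from the paper's own argument for the stochastic integral term. The paper controls $\int_0^t \sigma_s\,dW_s$ by invoking the continuity of the divergence operator $\delta:\mathbb{H}^{1,2}(\gamma(\mathscr{H},E))\to L^2(\Omega;E)$ (Proposition 4.3 in \cite{pronk}) together with Lemma~\ref{le:embedding_gamma}, thereby leaning on the Malliavin assumption $\sigma\in\mathbb{H}^{2,2}(L^2(I;\gamma(H,E)))$ from (\ref{eq:malliavin_hypotesis}). You instead use the It\^o-type $L^2$ estimate for adapted integrands in a UMD space, combined with the ideal property (Lemma~\ref{le:ideal_property}) applied to the projection $\mathbbm{1}_{[0,t]}$ on $\mathscr{H}$ to get the uniform-in-$t$ bound, and only then invoke the type~2 embedding. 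Your argument is more elementary in that it needs only $\sigma\in L^2(\Omega;L^2(I;\gamma(H,E)))$ and adaptedness, not Malliavin differentiability; the paper's version is shorter given the machinery already in place, and it ties Lemma~\ref{le:xi_well_defined} into the same toolkit used immediately afterwards for Lemma~\ref{le:xi_malliavin_domain}. Both approaches handle $\xi_0$ and the drift term identically.
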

\begin{proof}[Proof of Lemma \ref{le:xi_well_defined}]
Note that, since the operator $\delta$ is linear and continuous from $\mathbb{H}^{1,2}(\gamma(\mathscr{H},E))$ to $L^2(\Omega;E)$, (see \cite{pronk}, Proposition 4.3), we have by invoking Lemma \ref{le:embedding_gamma} that 
\begin{equation}
\Vert \delta(\sigma)\Vert_{L^2(\Omega;E)}\lesssim_E \Vert \sigma \Vert_{\mathbb{H}^{1,2}(\gamma(\mathscr{H},E))}\lesssim_E \Vert \sigma \Vert_{\mathbb{H}^{1,2}(L^2(I;\gamma(H,E)))}. \nonumber
\end{equation}
As a consequence, 
\begin{eqnarray}
\sup_{t\in I} \Vert \xi_t \Vert^2_{L^2(\Omega;E)} & \leq & \Vert \xi_0 \Vert^2_{L^2(\Omega;E)} + \Vert b \Vert^2_{L^2(\Omega ;L^2(I;E))} + \Vert \delta(\sigma) \Vert^2_{L^2(\Omega;E)}  \nonumber \\
&\lesssim_E & \Vert \xi_0 \Vert^2_{L^2(\Omega;E)} + \Vert b \Vert^2_{L^2(\Omega ;L^2(I;E))} + \Vert \sigma \Vert^2_{\mathbb{H}^{1,2}(L^2(I;\gamma(H,E)))},  \nonumber
\end{eqnarray}
and $\xi_t\in L^2(\Omega;E)$ is well defined, for any $t\in I$.
\end{proof}
As in \cite{pronk}, for any $t\in I$, we understand $\mathbbm{1}_{[0,t]}:\mathscr{H}\rightarrow \mathscr{H}$ as a  bounded and linear operator defined as
\begin{equation}
h\in \mathscr{H}\mapsto(\mathbbm{1}_{[0,t]} h)(\cdot) \triangleq \mathbbm{1}_{[0,t]}(\cdot)h(\cdot). \nonumber
\end{equation}
\begin{remark}\label{re:indicator_operator}
According to Lemma \ref{le:ideal_property}, we may regard $\mathbbm{1}_{[0,t]}$ as a well defined operator on $\gamma(\mathscr{H},\mathbb{H}^{2,2}(E))$, by setting $(\mathbbm{1}_B\vartheta)h\triangleq \vartheta(\mathbbm{1}_B h)$, for any $\vartheta\in \gamma(\mathscr{H},\mathbb{H}^{1,2}(E))$.
\end{remark}
\begin{proof}[Proof of Lemma \ref{le:xi_malliavin_domain}]

Note that by linearity it is enough to prove that $\delta(\mathbbm{1}_{[0,t]}\sigma)\in \mathbb{H}^{1,2}(E)$, for any $t\in I$, since $\xi_0\in \mathbb{H}^{1,2}(E)$ and $b\in \mathbb{H}^{1,2}(L^2(I,E))$.

First, we may regard $\sigma$ as an element of $\gamma(\mathscr{H},\mathbb{H}^{2,2}(E))$, since from Lemma \ref{le:embedding_gamma} we have $\sigma\in \mathbb{H}^{2,2}(\gamma(\mathscr{H};E))$ and the space $\mathbb{H}^{2,2}(\gamma(\mathscr{H},E))$ is isometric to $\gamma(\mathscr{H},\mathbb{H}^{2,2}(E))$, (see \cite{pronk}, Theorem 2.9).
Thus, according to Remark \ref{re:indicator_operator} we have that $\mathbbm{1}_{[0,t]}\sigma\in \mathbb{H}^{2,2}(\gamma(\mathscr{H},E)))$, for any $t\in I$, and hence that $\delta(\mathbbm{1}_{[0,t]}\sigma)\in \mathbb{H}^{1,2}(E)$, by Proposition 4.4 in \cite{pronk}.
\end{proof}

\

\section{Risk functional and optimization.} \label{sec:optimlaity}
Let $D$ be some UMD Banach space. A function $\psi:I\times E\rightarrow D$ is said to be of class $\mathscr{C}^{1,2}$ if it is differentiable in the first variable and twice continuously Fr\'echet differentiable in the second variable and the functions $\psi$, $\nabla_k \psi$, for $k=1,2$, and $\nabla_2^2 \psi$ are continuous on $I\times E$. 
Moreover, we shall say that $\psi$ is of class $\mathscr{C}_b^{1,2}$ when in addition the following condition is met,
\begin{equation}\label{eq:f_bound_contion}
 \Vert \nabla_2 \psi\Vert_\infty\triangleq\sup_{(t,x)\in I\times E}\Vert \nabla_2 \psi
 (t,x)\Vert_{\mathcal{L}(E,D)} < \infty.
\end{equation}
Here and in the sequel, we write $\nabla_k \psi$ to denote the derivative of $\psi$ with respect to the $k$th component, for any $k=1,2$.

Throughout this section, we always suppose that an $E$-valued process $\xi\triangleq \lbrace \xi_t : t\in I \rbrace$ as defined by means of the identity (\ref{eq:diffusive_dynamics_general}) is \emph{a priori} fixed.

\begin{definition}\label{de:BS-function}
We say that a function $\psi: I\times E\rightarrow D$ of class $\mathscr{C}^{1,2}$  is a \emph{BS-function} relative to $\xi$, if the following condition holds true a.s.
\begin{equation} \label{eq:heat_equationo_zeta}
 \nabla_1 \psi(t,\xi_t) +\frac{1}{2}\text{tr} (\nabla_2^2 \psi(t,\xi_t);\sigma_t)=0, \hspace{0.5cm}\text{for any $t\in I$.}
\end{equation}
\end{definition}
The result below characterizes the dynamics of the process $\psi(t,\xi_t)$, for $t\in I$, when $\psi$ is a BS-function relative to $\xi$.
\begin{lemma} \label{le:BL-function}
Let $\psi:I\times E\rightarrow D$ be a function of class $\mathscr{C}^{1,2}$. When $\psi$ is assumed to be a BS-function relative to $\xi$, one has 
\begin{equation} \label{eq:BL-condition}
\psi(t,\xi_t) = \psi(0,\xi_0) + \int_0^t \nabla_2 \psi(s,\xi_s) b_s ds + \int_0^t \nabla_2 \psi(s,\xi_s)\sigma_s dW_s, \hspace{.5cm} \text{a.s., for any $t\in I$} 
\end{equation}
\end{lemma}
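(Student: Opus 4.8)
The plan is to obtain (\ref{eq:BL-condition}) as a direct consequence of the It\^o formula for $\mathscr{C}^{1,2}$ functions acting on $E$-valued It\^o processes in the UMD type-2 framework, and then to cancel the two drift contributions using precisely the BS-function identity (\ref{eq:heat_equationo_zeta}). First I would record that $\xi$ is a bona fide It\^o process in the sense required: $\sigma$ is stochastically integrable by the preceding lemma, $b\in L^2(\Omega;L^2(I;E))$ and $\xi_0\in L^2(\Omega;E)$, so the representation (\ref{eq:diffusive_dynamics_general}) is well posed with $\sup_{t\in I}\Vert \xi_t\Vert_{L^2(\Omega;E)}<\infty$ by Lemma \ref{le:xi_well_defined}.

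Next I would apply the It\^o formula (see e.g.\ \cite{brz}) to the $\mathscr{C}^{1,2}$ function $\psi$ and the process $\xi$, obtaining, a.s.\ and for every $t\in I$,
\begin{equation}
\psi(t,\xi_t)=\psi(0,\xi_0)+\int_0^t\nabla_1\psi(s,\xi_s)\,ds+\int_0^t\nabla_2\psi(s,\xi_s)b_s\,ds+\int_0^t\nabla_2\psi(s,\xi_s)\sigma_s\,dW_s+\tfrac12\int_0^t\text{tr}(\nabla_2^2\psi(s,\xi_s);\sigma_s)\,ds, \nonumber
\end{equation}
where the It\^o correction is expressed through the trace operator (\ref{eq:trace_operator}), the stochastic integral is the $D$-valued integral against $W$, and the integrand $\nabla_2\psi(s,\xi_s)\sigma_s$ is read as a composition lying in $\gamma(H,D)$, which is $\gamma$-radonifying by the ideal property (Lemma \ref{le:ideal_property}). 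I would then invoke Definition \ref{de:BS-function}: since $\psi$ is a BS-function relative to $\xi$, the quantity $\nabla_1\psi(s,\xi_s)+\tfrac12\text{tr}(\nabla_2^2\psi(s,\xi_s);\sigma_s)$ vanishes a.s.\ for a.e.\ $s\in I$. Grouping the first and last Lebesgue integrals and substituting this identity removes both of them, leaving exactly (\ref{eq:BL-condition}).

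The main obstacle I anticipate is not the algebraic cancellation but the verification of the hypotheses under which the Banach-space It\^o formula applies, together with the correct bookkeeping of the objects involved. One must check that $\nabla_2\psi(\cdot,\xi)\sigma$ is stochastically integrable as an $\mathcal{L}(H,D)$-valued process; here the continuity of $\nabla_2\psi$, the ideal estimate (\ref{eq:ideal_dual}) and the membership $\sigma\in L^2(\Omega;L^2(I;\gamma(H,E)))$ enter. One must also confirm that the trace term is well defined, understanding (\ref{eq:trace_operator}) in its $D$-valued extension applied to the bounded bilinear map $\nabla_2^2\psi(s,\xi_s)\in\mathcal{L}(E,\mathcal{L}(E,D))$, namely $\sum_n[\nabla_2^2\psi(s,\xi_s)(\sigma_s h_n)](\sigma_s h_n)$ for an orthonormal basis $h_1,h_2,\dots$ of $H$.

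Finally, since the statement only assumes $\psi\in\mathscr{C}^{1,2}$ rather than $\mathscr{C}^{1,2}_b$, if the cited It\^o formula demands bounded second derivatives or a growth control I would insert a standard localization: introduce the stopping times $\tau_n\triangleq\inf\{t\in I:\Vert\xi_t\Vert_E\geq n\}$, apply the formula on each stochastic interval $[0,\tau_n]$ where the relevant derivatives are bounded on the range of $\xi$, and then pass to the limit $n\to\infty$ using $\tau_n\uparrow$ (by path regularity of $\xi$) together with the uniform bound $\sup_{t\in I}\Vert\xi_t\Vert_{L^2(\Omega;E)}<\infty$ from Lemma \ref{le:xi_well_defined} to justify convergence of each term in $L^2(\Omega;D)$.
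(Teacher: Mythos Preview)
Your proposal is correct and follows essentially the same route as the paper: apply the Banach-space It\^o formula of \cite{brz} to $\psi$ and $\xi$, and then use the BS-function identity (\ref{eq:heat_equationo_zeta}) to cancel the $\nabla_1\psi$ and trace contributions. The paper's proof is terser --- it simply cites \cite[Theorem 2.4]{brz} for both the stochastic integrability of $\nabla_2\psi(\cdot,\xi)\sigma$ and the It\^o expansion, and does not spell out any localization step --- but the underlying argument is the same.
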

\begin{proof}
Fix $t\in I$. Since the function $\psi$ is assumed to be of class $\mathscr{C}^{1,2}$, from It\^o's formula (see \cite{brz}, Theorem 2.4), we have that the process $s\mapsto\nabla_2 \psi(s,\xi_s)\sigma_s$, for $s\leq t$, is stochastically integrable and the following representation holds true,
\begin{equation} \label{eq:ito_representation_f}
\psi(t,\xi_t)=\psi(0,\xi_0) +\int_0^t a_s(\psi) ds + \int_0^t \nabla_2 \psi(s,\xi_s)\sigma_s dW_s,  \hspace{1cm} \text{a.s.}
\end{equation}
where, for any $s\leq t$, we set
\begin{equation}\label{eq:a(f)}
a_s(\psi)\triangleq \nabla_1 \psi(s,\xi_s) + \nabla_2 \psi(s,\xi_s) b_s + \frac{1}{2} \text{tr} (\nabla_2^2 \psi(s,\xi_s);\sigma_s).
\end{equation}

Thus, the result follows directly from (\ref{eq:ito_representation_f}) jointly with (\ref{eq:a(f)}), since $\psi$ is assumed to be a BS-function relative to $\xi$.
\end{proof}

We now introduce a class of $E^\ast$-valued processes that we shall use later on in this article.
\begin{definition} \label{de:Phi}
By a \emph{P-set} relative to $\xi$ we understand any set $\Phi$ of $E^\ast$-valued and adapted processes $\phi\triangleq \lbrace \phi_t : t\in I \rbrace$ such that the following conditions are met:
\begin{itemize}
\item[(i)] For any $\phi\in \Phi$, one has that $\Vert \phi \Vert_{\infty}< \infty$, where we set
\begin{equation}
\Vert \phi \Vert_\infty \triangleq \inf\lbrace C \geq 0 :  \Vert \phi_t \Vert_{E^\ast} \leq C \ \text{a.s. for any $t\in I$}\rbrace. \nonumber
\end{equation}
\item[(ii)] For any $\phi\in \Phi$, there exists a function $\varphi: I\times E \rightarrow E^\ast$ of class $\mathscr{C}_b^{1,2}$, such that 
\begin{equation}
\phi_t=\varphi(t,\xi_t), \hspace{1cm} \text{a.s., for any $t\in I$}. \nonumber
\end{equation}
\item[(iii)] For any $\phi\in \Phi$, the following identity holds true a.s.,
\begin{equation}\label{eq:self_financing}
\langle\phi_t,\xi_t\rangle_E = \langle\phi_0,\xi_0\rangle_E + \int_0^t \langle \phi_s,b_s \rangle_Eds + \int_0^t \langle\phi_s,\sigma_s \rangle_E dW_s,
\end{equation}
\end{itemize}
\end{definition}
Notice that in Definition \ref{de:Phi}, given any $\phi \in \Phi$, the variables $\langle\phi_t,\sigma_t\rangle_E$, for $t\in I$, form an adapted $H$-valued process. In this particular case, $\langle\cdot,\cdot\rangle_E$ is thus regarded as the $H$-valued dual pairing between $\gamma(H,E)$ and $E^\ast$. Besides, from the inequality (\ref{eq:ideal_dual}), one has that
\begin{equation}\label{eq:ideal_phi_sigma}
\Vert \langle\phi_t,\sigma_t\rangle_E \Vert_H \leq \Vert \phi_t\Vert_{E^\ast} \Vert \sigma_t \Vert_{\gamma(H,E)}, \hspace{1cm} \text{a.s., for any $t\in I$.}
\end{equation}

\begin{definition}
Fix a P-set $\Phi$ relative to $\xi$ and consider a function $f:I\times E\rightarrow \mathbb{R}$. For any $\phi\in\Phi$, the process $F(\phi)\triangleq \lbrace F_t(\phi) : t\in I \rbrace$ defined by
\begin{equation} \label{eq:financial exposure}
F_t(\phi) \triangleq f(t,\xi_t)- \langle \phi_t,\xi_t\rangle_E, \hspace{1cm} \text{for any $t\in I$},
\end{equation}
is said to be the \emph{discrepancy process} between $f$ and $\phi$ relative to $\xi$.
\end{definition}
If not otherwise specified, where a function $f:I\times E\rightarrow \mathbb{R}$ and a P-set $\Phi$ relative to $\xi$ are fixed, for any $\phi\in \Phi$, we always write $F(\phi)$ to denote the discrepancy process between $f$ and $\phi$ relative to $\xi$. 

\begin{lemma} \label{le:integrability_F}
Let $\Phi$ be a P-set relative to $\xi$ and consider a function $f:I\times E\rightarrow \mathbb{R}$. If $f$ is of class $\mathscr{C}_b^{1,2}$, then one has that $F_t(\phi)\in L^2(\Omega)$, for any $t\in I$ and $\phi\in\Phi$, with 
\begin{equation}
\sup_{t\in I}\Vert F_t(\phi)\Vert^2_{L^2(\Omega)} < \infty. \nonumber
\end{equation}
\begin{proof}
First, notice that since $f(\cdot,0)$ is assumed to be continuous on $I$, we have that
\begin{equation}
\Vert f(\cdot,0) \Vert_\infty\triangleq \sup_{t\in I} \vert f(t,0)\vert< +\infty \nonumber.
\end{equation}

Besides, for any $t\in I$, one has that the following inequalities a.s. hold true, 
\begin{eqnarray}
\vert f(t,\xi_t)\vert^2 &\leq &  \Vert \nabla_2 f\Vert^2_\infty \Vert \xi_t \Vert^2_E + \vert f(t,0)\vert^2  \nonumber \\
&\leq & \Vert \nabla_2 f\Vert^2_\infty \Vert \xi_t \Vert^2_E + \Vert f(\cdot,0) \Vert^2_\infty. \nonumber
\end{eqnarray}

Fix $\phi\in \Phi$, and note that $\vert \langle \phi_t , \xi_t \rangle_E \vert \leq \Vert \phi_t \Vert_{E^\ast}\Vert \xi_t \Vert_E$ a.s. for any $t\in I$.
Thus, we obtain, 
\begin{equation}
\sup_{t\in I}\Vert F_t(\phi)\Vert^2_{L^2(\Omega)}\leq ( \Vert \nabla_2 f\Vert^2_\infty+\Vert \phi \Vert^2_\infty)\sup_{t\in I}\Vert \xi_t \Vert_{L^2(\Omega;E)}^2 + \Vert f(\cdot,0) \Vert^2_\infty, \nonumber
\end{equation}
and the result holds true since the function $f$ is assumed to be of class $\mathscr{C}^{1,2}_b$, jointly with the condition (i) in Definition \ref{de:Phi} and Lemma \ref{le:xi_well_defined}.
\end{proof}
\end{lemma}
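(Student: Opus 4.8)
The plan is to bound the random variable $F_t(\phi)$ pointwise by a deterministic multiple of $\Vert \xi_t\Vert_E$ plus a constant, and then to square, integrate, and pass to the supremum over $t$, invoking the uniform $L^2$-bound on the dynamics provided by Lemma \ref{le:xi_well_defined}. Concretely, I would decompose $F_t(\phi)=f(t,\xi_t)-\langle\phi_t,\xi_t\rangle_E$ and estimate the two terms separately.

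For the first term, since $f$ is of class $\mathscr{C}^{1,2}_b$, its Fr\'echet derivative $\nabla_2 f$ is uniformly bounded by the defining condition (\ref{eq:f_bound_contion}); applying the mean value inequality in $E$ to the map $x\mapsto f(t,x)$ (whose derivative lies in $\mathcal{L}(E,\mathbb{R})=E^\ast$) gives $\vert f(t,x)-f(t,0)\vert\le \Vert\nabla_2 f\Vert_\infty \Vert x\Vert_E$ for every $(t,x)\in I\times E$, and in particular $\vert f(t,\xi_t)\vert \le \Vert\nabla_2 f\Vert_\infty\Vert\xi_t\Vert_E+\vert f(t,0)\vert$ a.s. The continuity of $f$ on $I\times E$ together with the compactness of $I$ ensures that $t\mapsto f(t,0)$ is bounded, so that $\Vert f(\cdot,0)\Vert_\infty<\infty$. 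For the pairing term, condition (i) in Definition \ref{de:Phi} yields $\Vert\phi_t\Vert_{E^\ast}\le \Vert\phi\Vert_\infty<\infty$ a.s., uniformly in $t$, whence $\vert\langle\phi_t,\xi_t\rangle_E\vert\le \Vert\phi\Vert_\infty\Vert\xi_t\Vert_E$ a.s.

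Combining the two estimates produces the a.s. bound $\vert F_t(\phi)\vert\le(\Vert\nabla_2 f\Vert_\infty+\Vert\phi\Vert_\infty)\Vert\xi_t\Vert_E+\Vert f(\cdot,0)\Vert_\infty$, valid for all $t\in I$ with constants independent of $t$. Squaring, taking expectations, and then the supremum over $t$, I would conclude from $\sup_{t\in I}\Vert\xi_t\Vert^2_{L^2(\Omega;E)}<\infty$ (Lemma \ref{le:xi_well_defined}) that $\sup_{t\in I}\Vert F_t(\phi)\Vert^2_{L^2(\Omega)}<\infty$; the same bound shows $F_t(\phi)\in L^2(\Omega)$ for each fixed $t$.

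I do not expect a genuine obstacle: this is a routine a priori estimate. The only point requiring care is the uniformity in $t$, namely that the three constants entering the bound are finite and $t$-independent. This is exactly what the three hypotheses deliver in turn: the $\mathscr{C}^{1,2}_b$ assumption bounds $\Vert\nabla_2 f\Vert_\infty$, condition (i) of Definition \ref{de:Phi} bounds $\Vert\phi\Vert_\infty$, and the compactness of $I$ bounds $\Vert f(\cdot,0)\Vert_\infty$. With these in hand the passage to the supremum is immediate.
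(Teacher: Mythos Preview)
Your proposal is correct and follows essentially the same route as the paper: both arguments exploit the Lipschitz bound on $x\mapsto f(t,x)$ furnished by $\Vert\nabla_2 f\Vert_\infty$, the boundedness of $t\mapsto f(t,0)$ on the compact interval $I$, the duality estimate $\vert\langle\phi_t,\xi_t\rangle_E\vert\le\Vert\phi\Vert_\infty\Vert\xi_t\Vert_E$ from condition~(i) in Definition~\ref{de:Phi}, and finally Lemma~\ref{le:xi_well_defined} to control $\sup_{t\in I}\Vert\xi_t\Vert^2_{L^2(\Omega;E)}$. Your version is in fact marginally tidier, since you bound $\vert F_t(\phi)\vert$ first and then square, avoiding the slightly loose passage to $\vert f(t,\xi_t)\vert^2\le\Vert\nabla_2 f\Vert_\infty^2\Vert\xi_t\Vert_E^2+\vert f(t,0)\vert^2$ that the paper writes.
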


The following result will play a relevant role later on in this Section.
\begin{lemma}\label{le:optimal_strategy_malliavin}
Let $\Phi$ be a P-set relative to $\xi$ and $f:I\times E\rightarrow \mathbb{R}$ a function of class $\mathscr{C}_b^{1,2}$. One has that $F_t(\phi)\in \mathbb{H}^{1,2}$, for any $t\in I$ and $\phi\in \Phi$, with
\begin{equation}  \label{eq:malliavin_derivative_F}
DF_t(\phi)= (\nabla_2 f(t,\xi_t) - \phi_t) D\xi_t, \hspace{1cm} \text{a.s.}
\end{equation}
\end{lemma}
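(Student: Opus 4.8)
The statement has two parts: membership $F_t(\phi) \in \mathbb{H}^{1,2}$, and the explicit product-rule formula for $DF_t(\phi)$. I should exploit the decomposition $F_t(\phi) = f(t,\xi_t) - \langle \phi_t, \xi_t\rangle_E$ together with the representation $\phi_t = \varphi(t,\xi_t)$ from Definition \ref{de:Phi}(ii), so that $F_t(\phi)$ is a genuine composition of smooth functions with the process $\xi_t$, which we already know lies in $\mathbb{H}^{1,2}(E)$ by Lemma \ref{le:xi_malliavin_domain}. The natural route is therefore a chain-rule argument for the Malliavin derivative.

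**Steps in order.** First I would rewrite $F_t(\phi) = g(t,\xi_t)$ where $g(t,x) \triangleq f(t,x) - \langle \varphi(t,x), x\rangle_E$ is a scalar function on $I \times E$; since $f$ is of class $\mathscr{C}_b^{1,2}$ and $\varphi$ is of class $\mathscr{C}_b^{1,2}$ by hypothesis, $g$ is $\mathscr{C}^{1,2}$ with Fréchet derivative in the second variable computed by the product rule,
\begin{equation}
\nabla_2 g(t,x) = \nabla_2 f(t,x) - \varphi(t,x) - \langle \nabla_2\varphi(t,x), x\rangle_E. \nonumber
\end{equation}
Second, I would invoke the chain rule for $D$ applied to the composition $g(t,\xi_t)$, giving $DF_t(\phi) = \nabla_2 g(t,\xi_t)\, D\xi_t$ once integrability is established. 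Third, and this is the crux, I must argue that the two extra terms $\varphi(t,\xi_t) + \langle \nabla_2\varphi(t,\xi_t),\xi_t\rangle_E$ do not survive: evaluating $\nabla_2 g(t,\xi_t)$ against $D\xi_t$, the contributions coming from differentiating $\varphi$ must cancel, leaving precisely $(\nabla_2 f(t,\xi_t) - \phi_t)D\xi_t$. The cleanest way is to establish the product formula $D\langle\phi_t,\xi_t\rangle_E = \langle \phi_t, D\xi_t\rangle_E + \langle D\phi_t, \xi_t\rangle_E$ and then exploit self-financing condition \eqref{eq:self_financing}: the martingale-plus-drift structure forces the $\langle D\phi_t, \xi_t\rangle_E$ contribution to match the $\nabla_2 f$-free pieces, so that only $\nabla_2 f(t,\xi_t) - \phi_t$ multiplies $D\xi_t$.

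**Main obstacle.** The delicate point is justifying the Malliavin chain rule in this Banach-space setting and, more importantly, handling the pairing $\langle \phi_t, \xi_t\rangle_E$: $\phi$ itself is a process in $\mathbb{H}^{1,2}(E^\ast)$ and one needs a product rule for $D$ of a duality pairing of two Malliavin-differentiable processes, together with the identification \eqref{eq:ideal_phi_sigma} of $\langle \phi_t, \sigma_t\rangle_E$ as an $H$-valued object. I expect the cancellation of the $D\phi_t$-terms to be the substantive step, and I would verify it by differentiating the self-financing identity \eqref{eq:self_financing} directly and comparing with the naive product rule, using that $D$ commutes with the stochastic and Bochner integrals appearing there. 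The boundedness conditions $\Vert\phi\Vert_\infty < \infty$ and $\Vert\nabla_2 f\Vert_\infty < \infty$, together with Lemma \ref{le:xi_well_defined}, supply the $L^2(\Omega)$ bounds needed to place $F_t(\phi)$ in the domain of the closed operator $D$, completing the membership claim.
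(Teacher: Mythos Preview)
Your high-level decomposition is right and mirrors the paper: treat $f(t,\xi_t)$ by the chain rule, treat $\langle\phi_t,\xi_t\rangle_E$ by the product rule for the Malliavin derivative of a duality pairing, and then argue that the contribution $\langle D\phi_t,\xi_t\rangle_E$ vanishes. But there are two concrete problems with the execution you sketch.

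First, the composite-function route through $g(t,x)=f(t,x)-\langle\varphi(t,x),x\rangle_E$ does not work as stated: its second-variable Fr\'echet derivative contains the term $x\mapsto\langle\nabla_2\varphi(t,x)(\cdot),x\rangle_E$, whose $E^\ast$-norm is of order $\Vert\nabla_2\varphi(t,x)\Vert\,\Vert x\Vert_E$ and is therefore unbounded on $I\times E$. So $g$ is not of class $\mathscr{C}_b^{1,2}$ and the Banach-space Malliavin chain rule you want to cite does not apply to $g$ directly. The paper avoids this by applying the chain rule separately to $f$ and to $\varphi$ (each genuinely $\mathscr{C}_b^{1,2}$) and then using the product rule for the pairing.

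Second, and more importantly, your mechanism for killing $\langle D\phi_t,\xi_t\rangle_E$ is not the one that actually works. You propose to Malliavin-differentiate the self-financing identity \eqref{eq:self_financing} and compare with the product rule; but $D$ acting on a stochastic integral produces a Skorohod integral plus an integrand correction, and it is far from clear how that comparison isolates $D\phi_t=0$. The paper instead works at the \emph{It\^o} level: it writes $\langle\varphi(t,\xi_t),\xi_t\rangle_E$ via the It\^o product formula, compares that decomposition with the self-financing form \eqref{eq:self_financing}, and by uniqueness of the semimartingale decomposition concludes that the diffusion coefficient $\chi_t=\nabla_2\varphi(t,\xi_t)\sigma_t$ vanishes a.s., hence $\nabla_2\varphi(t,\xi_t)=0$ a.s. Only then does the chain rule give $D\phi_t=\nabla_2\varphi(t,\xi_t)D\xi_t=0$, and the product rule collapses to $D\langle\phi_t,\xi_t\rangle_E=\langle\phi_t,D\xi_t\rangle_E$. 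That It\^o comparison is the missing idea in your outline.
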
 
It is worth to be highlighted that, for any $t\in I$, the identity (\ref{eq:malliavin_derivative_F}) is to be regarded as
\begin{equation}\label{eq:DF_H_pairing}
DF_t(\phi)=\nabla_2 f(t,\xi_t)D\xi_t - \langle \phi_t,D\xi_t \rangle_E, \hspace{1cm} \text{a.s.} 
\end{equation}
In particular, since $D\xi_t\in L^2(\Omega;\gamma(\mathscr{H},E))$, for any $t\in I$, in the identity (\ref{eq:DF_H_pairing}), we understand $\langle\cdot,\cdot\rangle_E$ as the $\mathscr{H}$-valued dual pairing between $\gamma(\mathscr{H},E)$ and $E^\ast$.
\begin{proof}[Proof of Lemma \ref{le:optimal_strategy_malliavin}]
Here and throughout, we fix $t\in I$. First of all, since the function $f$ is assumed to be of class $\mathscr{C}_b^{1,2}$ and $\xi_t\in \mathbb{H}^{1,2}(E)$ thanks to Lemma \ref{le:xi_malliavin_domain}, the chain rule for the Malliavin derivative (see \cite{pronk}, Proposition 3.8) applies and we get $f(t,\xi_t)\in \mathbb{H}^{1,2}$, with 
\begin{equation} \label{eq:malliavin_derivative_f}
Df(t,\xi_t)= \nabla_2 f(t,\xi_t) D\xi_t, \hspace{1cm} \text{a.s.}
\end{equation} 

Fix now $\phi\in\Phi$. According to the assumption (ii) in Definition \ref{de:Phi}, there exists a function $\varphi:I\times E\rightarrow E^\ast$ of class $\mathscr{C}_b^{1,2}$ such that the identity $\phi_t=\varphi(t,\xi_t)$ holds true a.s. 

According to It\^o's formula (see \cite{brz}, Theorem 2.4), we obtain that the process $\nabla_2\varphi(s,\xi_s)\sigma_s$, for $s\leq t$, is stochastically integrable and the following representation holds a.s.,
\begin{equation}
\varphi(t,\xi_t)=\varphi(0,\xi_0)+\int_0^t a_s(\varphi)ds +\int_0^t \chi_s(\varphi) dW_s, \nonumber
\end{equation}
where, for $s\leq t$, we set,
\begin{eqnarray}
a_s (\varphi)&\triangleq &\nabla_1\varphi(s,\xi_s)+ \nabla_2\varphi(s,\xi_s)b_s +\frac{1}{2} \text{tr}(\nabla_2^2 \varphi(s,\xi_s);\sigma_s); \nonumber \\
\chi_s(\varphi) &\triangleq &\nabla_2\varphi(s,\xi_s)\sigma_s. \nonumber
\end{eqnarray} 

Hence, given any orthonormal basis $h_1,h_2,...$ in $H$, the extension of the It\^o's formula (see \cite{brz}, Corollary 2.6) applied to the dual pairing $\langle \cdot,\cdot\rangle_E$ gives, 
\begin{multline} \label{eq:portfolio_ito_representation}
\langle \varphi(t,\xi_t),\xi_t \rangle_E = \langle \varphi(0,\xi_0),\xi_0\rangle_E + \int_0^t \langle \varphi(s,\xi_s),b_s\rangle_E ds + \int_0^t\langle \varphi(s,\xi_s),\sigma_s \rangle_E dW_s \\ + \int_0^t \langle a_s,\xi_s\rangle_E ds  + \int_0^t \langle \chi_s,\xi_s\rangle_E dW_s + \int_0^t \sum_{n\geq 1} \langle \chi_s h_n, \sigma_sh_n \rangle_E ds, \hspace{1cm} \text{a.s.}
\end{multline}

As a direct consequence, the condition (\ref{eq:self_financing}) jointly with the identity (\ref{eq:portfolio_ito_representation}) implies that $\chi_t=0$ a.s., and in particular 
\begin{equation}\label{eq:frechet_varphi}
\nabla_2\varphi(t,\xi_t)=0, \hspace{1cm} \text{a.s.}
\end{equation}

On the other hand, the chain rule for the Malliavin derivative (see \cite{pronk}, Proposition 3.8) applies and it gives that $\varphi(t,\xi_t)\in \mathbb{H}^{1,2}(E^\ast)$, with
\begin{equation}\label{eq:derivative_phi_zero}
D\varphi(t,\xi_t)=\nabla_2\varphi(t,\xi_t)D\xi_t, \hspace{1cm} \text{a.s.}
\end{equation}
Thus, jointly with the identity (\ref{eq:frechet_varphi}), we get 
\begin{equation}\label{eq:derivative_phi_zero_2}
D\varphi(t,\xi_t)=0, \hspace{1cm} \text{a.s.} 
\end{equation}
The product rule for the Malliavin derivative applied to the pairing $\langle\cdot,\cdot \rangle_E$ (see \cite{pronk}, Lemma 3.6), and jointly with (\ref{eq:derivative_phi_zero_2}) gives that
\begin{eqnarray} \label{eq:malliavin_derivative_varphi}
D\langle \varphi(t,\xi_t),\xi_t \rangle_E &=& \langle D\varphi(t,\xi_t),\xi_t\rangle_E + \langle \varphi(t,\xi_t),D\xi_t \rangle_E, \ \ \ \text{a.s.} \nonumber \\
&=& \langle \varphi(t,\xi_t),D\xi_t \rangle_E \ \ \ \text{a.s.} 
\end{eqnarray}
%
Then, from the inequality (\ref{eq:ideal_dual}) and by means of the natural identification $\gamma(\mathscr{H},\mathbb{R})=\mathscr{H}$, we obtain that
\begin{eqnarray}
\Vert D\langle \varphi(t,\xi_t),\xi_t\rangle_E \Vert^2_{L^2(\Omega;\mathscr{H})} &=& \mathbb{E}\lbrace\Vert \langle \varphi(t,\xi_t),D\xi_t\rangle_E \Vert^2_{\mathscr{H}}\rbrace \nonumber \\
&\leq & \mathbb{E}\lbrace \Vert \varphi(t,\xi_t)\Vert_{E^\ast}^2\Vert D\xi_t \Vert^2_{\gamma(\mathscr{H},E)}\rbrace \nonumber \\
&\leq & \Vert \phi \Vert^2_\infty\Vert D\xi_t \Vert^2_{L^2(\Omega,\gamma(\mathscr{H},E))} \nonumber
\end{eqnarray}
Thus, since $\xi_t\in\mathbb{H}^{1,2}(E)$ due to Lemma \ref{le:xi_malliavin_domain}, we have $\langle \phi_t,\xi_t\rangle_E\in \mathbb{H}^{1,2}$ and hence $F_t(\phi)\in\mathbb{H}^{1,2}$.

Finally, the identity (\ref{eq:malliavin_derivative_F}) follows by the linearity of the Malliavin derivative from equations (\ref{eq:malliavin_derivative_f}) and (\ref{eq:malliavin_derivative_varphi}).
\end{proof}

\begin{definition}\label{de:risk_functional_F}
Let $\Phi$ be a P-set relative to $\xi$. Given a function $f:I\times E\rightarrow \mathbb{R}$ of class $\mathscr{C}_b^{1,2}$, we refer to the functional $\mathcal{F}:\Phi\rightarrow \mathbb{R}$ defined by setting 
\begin{equation}\label{eq:R_risk_functional}
\mathcal{F}(\phi) \triangleq \int_I \mathbb{E}\lbrace\vert F_t(\phi) - \mathbb{E}F_t(\phi) \vert^2\rbrace dt, \hspace{1cm} \text{for any $\phi\in\Phi$,}
\end{equation}
as the \emph{risk functional} relative to $\xi$ induced by $f$ over $\Phi$. 
\end{definition}

If not otherwise specified, where a function $f:I\times E\rightarrow \mathbb{R}$ and a P-set $\Phi$ relative to $\xi$ are fixed, we always write $\mathcal{F}$ to denote the risk functional relative to $\xi$ induced by $f$ over $\Phi$.

The following theorem tells us that the risk functional (\ref{eq:R_risk_functional}) admits two different equivalent representations. 

\begin{theorem}\label{pr:optimal_strategy}
Let $\Phi$ be a P-set relative to $\xi$ and $f:I\times E \rightarrow \mathbb{R}$ a function of class $\mathscr{C}_b^{1,2}$. One has that,
\begin{itemize}
\item[$(i)$] The  functional $\mathcal{F}$ admits the following representation
\begin{equation}\label{eq:optimal_malliavin}
\mathcal{F}(\phi)=\int_I \mathbb{E}\bigg\lbrace\int_0^t\big\Vert \mathbb{E}\lbrace (\nabla_2 f(t,\xi_t) - \phi_t) D_s\xi_t \vert \mathscr{G}^W_s \rbrace \big\Vert_{H}^2 ds\bigg\rbrace dt, \hspace{1cm} \text{for any $\phi\in \Phi$.}
\end{equation}
\item[$(ii)$] In the special case when
 $f$ is assumed to be a BS-function relative to $\xi$ and $b_t=0$ a.s., for any $t\in I$, the functional $\mathcal{F}$ boils down to
\begin{equation}\label{eq:optimal_frechet}
\mathcal{F}(\phi) = \mathbb{E}\bigg \lbrace \int_I\big\Vert (\nabla_2 f(t,\xi_t)-\phi_t)\sigma_t \big\Vert_{H}^2 (1-t)dt\bigg\rbrace, \hspace{1cm} \text{for any $\phi\in \Phi$.} 
\end{equation}
\end{itemize}
\end{theorem}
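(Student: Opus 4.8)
The plan is to prove the two representations separately, each resting on a martingale-representation identity for the centered variable $F_t(\phi)-\mathbb{E}F_t(\phi)$ combined with the It\^o isometry of Lemma \ref{le:ito_isometry}.

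For part $(i)$, fix $t\in I$. By Lemma \ref{le:optimal_strategy_malliavin} the real-valued variable $F_t(\phi)$ lies in $\mathbb{H}^{1,2}$, with Malliavin derivative $DF_t(\phi)=(\nabla_2 f(t,\xi_t)-\phi_t)D\xi_t\in\mathscr{H}$. The key tool is the Clark--Ocone formula for real functionals of the cylindrical Wiener process $W$: for $Z\in\mathbb{H}^{1,2}$ one has
\begin{equation}
Z=\mathbb{E}Z+\int_0^1\mathbb{E}\lbrace D_sZ\mid\mathscr{G}^W_s\rbrace\,dW_s, \nonumber
\end{equation}
the integrand being the adapted projection of the $H$-valued density $s\mapsto D_sZ$. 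Applying this to $Z=F_t(\phi)$ centers the variable, and the It\^o isometry of Lemma \ref{le:ito_isometry} yields
\begin{equation}
\mathbb{E}\lbrace\vert F_t(\phi)-\mathbb{E}F_t(\phi)\vert^2\rbrace=\mathbb{E}\bigg\lbrace\int_0^1\big\Vert\mathbb{E}\lbrace D_sF_t(\phi)\mid\mathscr{G}^W_s\rbrace\big\Vert_H^2\,ds\bigg\rbrace. \nonumber
\end{equation}
Next I would invoke causality: since $\xi_t$ is $\mathscr{G}^W_t$-measurable, its Malliavin density satisfies $D_s\xi_t=0$ for $s>t$, so the same holds for $D_sF_t(\phi)$ and the upper limit of the inner integral collapses from $1$ to $t$. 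Substituting $D_sF_t(\phi)=(\nabla_2 f(t,\xi_t)-\phi_t)D_s\xi_t$ and integrating over $t\in I$ produces exactly (\ref{eq:optimal_malliavin}).

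For part $(ii)$, the additional hypotheses let me bypass Clark--Ocone and read off the representation directly. With $b_t=0$ a.s., Lemma \ref{le:BL-function} applied to the BS-function $f$ gives $f(t,\xi_t)=f(0,\xi_0)+\int_0^t\nabla_2 f(s,\xi_s)\sigma_s\,dW_s$, while condition $(iii)$ of Definition \ref{de:Phi} reads $\langle\phi_t,\xi_t\rangle_E=\langle\phi_0,\xi_0\rangle_E+\int_0^t\langle\phi_s,\sigma_s\rangle_E\,dW_s$. Subtracting,
\begin{equation}
F_t(\phi)=F_0(\phi)+\int_0^t(\nabla_2 f(s,\xi_s)-\phi_s)\sigma_s\,dW_s, \hspace{0.5cm}\text{a.s.}\nonumber
\end{equation}
where the $H$-valued integrand is read through the $H$-valued dual pairing between $\gamma(H,E)$ and $E^\ast$. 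Because $\mathscr{G}^W_0$ is trivial, $F_0(\phi)$ is a.s. constant, so $\mathbb{E}F_t(\phi)=F_0(\phi)$ and the centered variable equals the It\^o integral. The isometry of Lemma \ref{le:ito_isometry} then gives $\mathbb{E}\lbrace\vert F_t(\phi)-\mathbb{E}F_t(\phi)\vert^2\rbrace=\mathbb{E}\lbrace\int_0^t\Vert(\nabla_2 f(s,\xi_s)-\phi_s)\sigma_s\Vert_H^2\,ds\rbrace$; integrating over $t\in I$ and swapping the order of integration via Tonelli, using $\int_s^1 dt=1-s$, delivers the weight $(1-t)$ of (\ref{eq:optimal_frechet}).

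The main obstacle I anticipate is justifying the Clark--Ocone representation of step $(i)$ within the present cylindrical/UMD framework, together with the causality property $D_s\xi_t=0$ for $s>t$: both are standard for scalar Brownian functionals but must be reconciled here with the $H$-valued Malliavin density and the divergence-operator definition of the stochastic integral used throughout the paper. Part $(ii)$ is comparatively routine, its only delicate points being the a.s. constancy of $F_0(\phi)$ (hence the vanishing expectation of the It\^o integral) and the Tonelli interchange, both of which are controlled by the integrability bounds of Lemma \ref{le:integrability_F} and Lemma \ref{le:xi_well_defined}.
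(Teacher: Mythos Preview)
Your proposal is correct and follows essentially the same route as the paper. For part $(i)$ the paper invokes Clark--Ocone directly with upper limit $t$ by citing the $\mathscr{G}^W_t$-measurability of $F_t(\phi)$, whereas you first write the integral over $[0,1]$ and then collapse it via the causality $D_sF_t(\phi)=0$ for $s>t$; these are the same argument phrased slightly differently. For part $(ii)$ you apply the It\^o isometry of Lemma \ref{le:ito_isometry} directly to the stochastic integral representation, while the paper takes a longer detour through It\^o's formula applied to $p(x)=|x|^2$ before taking expectations---your path is in fact the more economical one, but the content is identical.
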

It is worth to be noted that, since the process $\sigma$ takes values in $\gamma(H,E)$, the right hand side of the identity (\ref{eq:optimal_frechet}) is to be understood as follows $$(\nabla_2 f(s, \xi_s)-\phi_s)\sigma_s\triangleq \nabla_2 f(s,\xi_s)\sigma_s-\langle \phi_s,\sigma_s\rangle_E,$$ where $\langle \cdot,\cdot\rangle_E$ denotes the $H$-valued dual pairing between $\gamma(H,E)$ and $E^{\ast}$. \\

The proof of Theorem \ref{pr:optimal_strategy} is based on the following lemma.
\begin{lemma}\label{le:dfritfless_and_heat_equation}
Let $\Phi$ be a P-set relative to $\xi$ and $f:I\times E \rightarrow \mathbb{R}$ a function of class $\mathscr{C}_b^{1,2}$.
In the particular case when $f$ is assumed to be a BS-function relative to $\xi$ and $b_t=0$ a.s., for any $t\in I$,  we have a.s.
\begin{equation}\label{eq:ito_representation_Ft_driftless}
F_t(\phi)=F_0(\phi) + \int_0^t (\nabla_2f(s,\xi_s)- \phi_s)\sigma_s  dW_s,
\end{equation}
for any $t\in I$ and $\phi\in \Phi$.
\end{lemma}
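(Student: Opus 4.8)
The plan is to establish the It\^o representation (\ref{eq:ito_representation_Ft_driftless}) by decomposing the discrepancy process $F_t(\phi)=f(t,\xi_t)-\langle\phi_t,\xi_t\rangle_E$ into its two summands and applying the dynamics already obtained for each. Since $f$ is assumed to be a BS-function relative to $\xi$, Lemma \ref{le:BL-function} applies directly to the first term $f(t,\xi_t)$; after setting $b_t=0$, its representation (\ref{eq:BL-condition}) collapses to
\begin{equation}
f(t,\xi_t)=f(0,\xi_0)+\int_0^t \nabla_2 f(s,\xi_s)\sigma_s\, dW_s, \qquad \text{a.s., for any } t\in I. \nonumber
\end{equation}
For the second term $\langle\phi_t,\xi_t\rangle_E$, I would invoke condition (iii) of Definition \ref{de:Phi}, namely the self-financing identity (\ref{eq:self_financing}), which under $b_t=0$ reduces to $\langle\phi_t,\xi_t\rangle_E=\langle\phi_0,\xi_0\rangle_E+\int_0^t\langle\phi_s,\sigma_s\rangle_E\,dW_s$.

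The key step is then to subtract these two representations. Using the linearity of the stochastic integral and the identification $F_0(\phi)=f(0,\xi_0)-\langle\phi_0,\xi_0\rangle_E$ coming from the definition (\ref{eq:financial exposure}) of the discrepancy process, the drift terms cancel (both being zero) and the diffusion terms combine into
\begin{equation}
F_t(\phi)=F_0(\phi)+\int_0^t\big(\nabla_2 f(s,\xi_s)\sigma_s-\langle\phi_s,\sigma_s\rangle_E\big)\,dW_s,\qquad \text{a.s.} \nonumber
\end{equation}
It remains to recognize, exactly as in the remark following the statement of Theorem \ref{pr:optimal_strategy}, that the integrand is precisely $(\nabla_2 f(s,\xi_s)-\phi_s)\sigma_s$ under the convention that $\langle\cdot,\cdot\rangle_E$ denotes the $H$-valued dual pairing between $\gamma(H,E)$ and $E^\ast$; this yields (\ref{eq:ito_representation_Ft_driftless}).

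The main point requiring care is not the algebra but ensuring the two representations are simultaneously valid and that the resulting $\mathcal{L}(H,\mathbb{R})$-valued integrand is genuinely stochastically integrable, so that the subtraction of the two It\^o integrals is legitimate. Both Lemma \ref{le:BL-function} and the self-financing assumption already assert stochastic integrability of their respective integrands $\nabla_2 f(s,\xi_s)\sigma_s$ and $\langle\phi_s,\sigma_s\rangle_E$; since $\phi$ is uniformly bounded by condition (i) of Definition \ref{de:Phi} and $\nabla_2 f$ is bounded by (\ref{eq:f_bound_contion}), the inequality (\ref{eq:ideal_phi_sigma}) together with $\sigma\in L^2(\Omega;L^2(I;\gamma(H,E)))$ controls the $H$-norm of the combined integrand, so linearity of the integral poses no difficulty. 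I would therefore expect the whole argument to be short, with the only genuinely delicate point being the consistent reading of the dual pairing $\langle\phi_s,\sigma_s\rangle_E$ as an $H$-valued (hence $\mathcal{L}(H,\mathbb{R})$-valued) process against which one integrates the scalar Wiener process, as already set up in the discussion surrounding (\ref{eq:ideal_phi_sigma}).
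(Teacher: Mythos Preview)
Your proposal is correct and follows essentially the same route as the paper: apply Lemma \ref{le:BL-function} to $f(t,\xi_t)$, invoke the self-financing identity (\ref{eq:self_financing}) for $\langle\phi_t,\xi_t\rangle_E$, subtract, and then set $b_t=0$. The only cosmetic difference is that the paper first derives the general representation (\ref{eq:ito_representation_Ft}) with the drift term $\int_0^t(\nabla_2 f(s,\xi_s)-\phi_s)b_s\,ds$ present and then specializes to $b_t=0$, whereas you impose $b_t=0$ in each summand before subtracting; the extra integrability discussion you include is sound but not needed in the paper's write-up.
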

\begin{proof}[Proof of Lemma \ref{le:dfritfless_and_heat_equation}]
Fix $t\in I$ and note that, since the function $f$ is assumed to be a BS-function relative to $\xi$,  Lemma \ref{le:BL-function} gives that 
\begin{equation}
f(t,\xi_t)=f(0,\xi_0) +\int_0^t \nabla_2 f(s,\xi_s)b_s ds + \int_0^t \nabla_2 f(s,\xi_s)\sigma_s dW_s,  \hspace{1cm} \text{a.s.} \nonumber
\end{equation}

Thus, for any $\phi\in \Phi$, from condition (\ref{eq:self_financing}) we get that the variable $F_t(\phi)$, for $t\in I$, admits the following representation, 
\begin{equation}\label{eq:ito_representation_Ft}
F_t(\phi)=F_0(\phi) + \int_0^t (\nabla_2 f(s, \xi_s)-\phi_s)b_s ds + \int_0^t (\nabla_2f(s,\xi_s)- \phi_s)\sigma_s  dW_s, \ \ \ \text{a.s.}
\end{equation}

Besides, since $b_t=0$ a.s., the representation (\ref{eq:ito_representation_Ft}) boils down to the identity (\ref{eq:ito_representation_Ft_driftless}).
\end{proof}
\begin{proof}[Proof of Theorem \ref{pr:optimal_strategy}]
First, we prove the statement $(i)$.
Since $f$ is assumed to be of class $\mathscr{C}_b^{1,2}$, from Lemma \ref{le:optimal_strategy_malliavin} we get that $F_t(\phi)\in\mathbb{H}^{1,2}$, for any $\phi\in \Phi$ and $t\in I$, and that
\begin{equation}\label{eq:malliavin_derivative_F1}
DF_t(\phi)= (\nabla_2 f(t,\xi_t) - \phi_t) D\xi_t, \hspace{1cm} \text{a.s.}
\end{equation}
Thus, fix $t\in I$ and notice that, since the variable $F_t(\phi)$ is $\mathscr{G}_t^W$-measurable, Clarke-Ocone formula (see \cite{maas}, Theorem 6.6) gives, 
\begin{equation}\label{eq:clarke_ocone_Ft}
F_t(\phi) -\mathbb{E}F_t(\phi) = \int_0^t \mathbb{E}\lbrace D_s F_t(\phi)\vert \mathscr{G}^W_s \rbrace  dW_s \hspace{1cm} \text{a.s.}
\end{equation} 

On the other hand, a direct application of Lemma \ref{le:ito_isometry} gives that,
\begin{equation}\label{eq:ito_isometry_Ft}
\mathbb{E}\bigg\lbrace \bigg\vert
\int_0^t \mathbb{E}\lbrace D_s F_t(\phi)\vert \mathscr{G}^W_s \rbrace  dW_s \bigg\vert^2\bigg\rbrace = \mathbb{E}\bigg\lbrace\int_0^t\Vert \mathbb{E}\lbrace D_s F_t(\phi)\vert \mathscr{G}^W_s \rbrace \Vert_{H}^2 ds\bigg\rbrace, 
\end{equation}

Then, when recasting the identity (\ref{eq:ito_isometry_Ft}) in terms of the representation (\ref{eq:malliavin_derivative_F1}), jointly with the identity (\ref{eq:clarke_ocone_Ft}), we obtain 
\begin{equation} \label{eq:ito_isometry_Ft2}
\mathbb{E}\lbrace \vert F_t(\phi) -\mathbb{E}F_t(\phi) \vert^2 \rbrace=\mathbb{E}\bigg\lbrace\int_0^t\Vert \mathbb{E}\lbrace (\nabla_2f(t,\xi_t)-\phi_t)D_s\xi_t\vert \mathscr{G}_s^W\rbrace \Vert_{H}^2 ds\bigg\rbrace.
\end{equation}
As a result, when integrating both the sides of the identity (\ref{eq:ito_isometry_Ft2}) with respect to the variable $t\in I$, we obtain the representation (\ref{eq:optimal_malliavin}).

We now prove the statement $(ii)$. To this purpose, fix $\phi\in\Phi$ and notice that $\mathbb{E}F_t(\phi)=\mathbb{E}F_0(\phi)$ a.s., for any $t\in I$. Thus, since $f$ is assumed to be a BS-function relative to $\xi$ and $b_t=0$, a.s. for any $t\in I$, Lemma \ref{le:dfritfless_and_heat_equation} gives that the following representation holds true a.s.
\begin{equation}\label{eq:centered_F_diffusion}
F_t(\phi) -\mathbb{E}F_t(\phi) = F_0(\phi) -\mathbb{E}F_0(\phi)  + \int_0^t (\nabla_2f(s,\xi_s)- \phi_s)\sigma_s  dW_s, \hspace{1cm} \text{for any $t\in I$.}
\end{equation}
Then, let $p(x)=\vert x \vert^2$, for any $x\in\mathbb{R}$.  According to the representation (\ref{eq:centered_F_diffusion}), for any $t\in I$, notice that a direct application of It\^o's formula (see \cite{brz}, Theorem 2.4) leads to
\begin{multline}\label{eq:ito_representation_Ft^2}
\vert F_t(\phi) -\mathbb{E}F_t(\phi)\vert^2= \vert F_0(\phi) -\mathbb{E}F_0(\phi)\vert^2 \\
+\frac{1}{2}\int_0^t\text{tr}(\nabla^2p(F_s(\phi) -\mathbb{E}F_s(\phi));(\nabla_2f(s,\xi_s)- \phi_s)\sigma_s)ds  + \int_0^t \kappa_s(\phi) dW_s, \hspace{.2cm} \text{a.s.,} 
\end{multline}
where, for any $s\leq t$, we set, 
\begin{equation}\label{eq:alpha_definition}
\kappa_{s}(\phi) \triangleq 2 (F_s(\phi)-\mathbb{E}F_s(\phi))(\nabla_2f(s,\xi_s)- \phi_s)\sigma_s. \nonumber
\end{equation}

Let $h_1,h_2,...$ be an orthonormal basis of $H$ and note that $\nabla^2 p(x)=2$, for any $x\in \mathbb{R}$. Then by definition of the trace operator $\text{tr}(\cdot  ; \cdot)$, we obtain,
\begin{eqnarray}\label{eq:trace=H^ast}
\text{tr}(\nabla^2_2p(F_s(\phi) -\mathbb{E}F_s(\phi));(\nabla_2f(s,\xi_s)- \phi_s)\sigma_s) &=& 2\sum_{n\geq 1} (\nabla_2f(s,\xi_s)- \phi_s)\sigma_sh_n)^2 \nonumber\\ 
 &=& 2\Vert  (\nabla_2f(s,\xi_s)- \phi_s)\sigma_s \Vert^2_{H}. \nonumber
\end{eqnarray}
Notice that $\mathbb{E}\vert F_0(\phi)-\mathbb{E}F_0(\phi)\vert^2=0$, since $F_0(\phi)$ is $\mathscr{G}_0^W$-measurable, and hence $F_0(\phi)=\mathbb{E}F_0(\phi)$ a.s. Thus, from the identity (\ref{eq:ito_representation_Ft^2}) we get,
\begin{equation} \label{eq:ito_isometry_Ft3}
\mathbb{E}\lbrace\vert F_t(\phi) - \mathbb{E} F_t(\phi) \vert^2 \rbrace  
= \mathbb{E}\bigg\lbrace \int_0^t \Vert(\nabla_2 f(s,\xi_s)-\phi_s)\sigma_s \Vert^2_{H} ds \bigg\rbrace.
\end{equation}

Hence, when integrating both the sides of the identity (\ref{eq:ito_isometry_Ft3}) with respect to $t\in I$, we obtain the representation (\ref{eq:optimal_frechet}).
\end{proof}

\begin{definition}\label{de:optimality}
Let $\Upsilon$ be some set and consider a functional $\mathcal{G}:\Upsilon \rightarrow \mathbb{R}$. We call \emph{$\mathcal{G}$-optimal} any element $\upsilon^\ast\in \Upsilon$ that verifies the following inequality
\begin{equation}\label{eq:optimization_problem_mean}
\mathcal{G}(\upsilon^\ast) \leq \mathcal{G}(\upsilon), \hspace{1cm} \text{for any $\upsilon\in \Upsilon$}.
\end{equation}
\end{definition}

\begin{remark}
Fix a P-set $\Phi$ relative to $\xi$ and a function $f:I\times E\rightarrow \mathbb{R}$ of class $\mathscr{C}_b^{1,2}$. 
In the particular case when $f$ is a BS-function relative to $\xi$, the statement (ii) in Theorem \ref{pr:optimal_strategy} tells
 us that a process $\phi^\ast\in \Phi$ is $\mathcal{F}$-optimal if it minimizes the functional
\begin{equation} \label{eq:risk_functinal_beta}
\mathcal{F}(\phi)= \mathbb{E} \bigg\lbrace \int_0^1 \Vert (\nabla_2 f(t,\xi_t) - \phi_t)\sigma_t \Vert_H^2 (1-t) dt \bigg\rbrace, \hspace{1cm} \text{for any $\phi\in \Phi$}. 
\end{equation}

Besides, in the particular case when fixing $x^\ast_1,...,x_n^\ast\in E^\ast$ and assuming that for any $\phi\in \Phi$ there exists $\beta=(\beta_1,...,\beta_n)\in \mathbb{R}^n$ such that  the following representation holds true a.s.
\begin{equation}\label{eq:phi_beta_representaion}
\phi_t = \sum_{i=1}^n \beta_i x_i^\ast, \hspace{1cm} \text{for any $t\in I$}, 
\end{equation}
the minimization of the functional (\ref{eq:risk_functinal_beta}) turns out to be analytically manageable, since it boils down to a quadratic form optimization problem. 

Indeed, when identifying any process $\phi\in \Phi$ with the element $\beta\in \mathbb{R}^n$ that satisfies the identity (\ref{eq:phi_beta_representaion}), \emph{mutatis mutandis} the functional (\ref{eq:risk_functinal_beta}) may be recast as follows
\begin{equation}
\mathcal{F}(\beta)= \sum_{ij=1}^n A_{ij}\beta_i \beta_j -2 \sum_{i=1}^n B_i\beta_i + C, \hspace{1cm} \text{for any $\beta \in \mathbb{R}^n$,} \nonumber
\end{equation}
where, for any $i,j=1,...,n$ we set
\begin{eqnarray}
A_{ij} &=& \mathbb{E}\bigg\lbrace \int_0^1 \bigg\langle  \langle x_i^\ast, \sigma_t \rangle_E , \langle x_j^\ast, \sigma_t \rangle_E \bigg\rangle_H (1-t)dt \bigg\rbrace, \nonumber \\
B_i &=& \mathbb{E}\bigg\lbrace \int_0^1 \bigg\langle \nabla_2 f(t,\xi_t)\sigma_t, \langle x_i^\ast, \sigma_t \rangle_E \bigg\rangle_H (1-t)dt \bigg\rbrace, \nonumber \\
 C &=&  \mathbb{E}\bigg\lbrace \int_0^1 \Vert \nabla_2 f(t,\xi_t)\sigma_t \Vert_H^2 (1-t)dt \bigg\rbrace. \nonumber
\end{eqnarray}

As a result, in the special case when the symmetric matrix $A=(A_{ij})_{ij}$ turns out to be positive definite, there is a unique $\mathcal{F}$-optimal element $\beta^\ast\in \Phi$. 
Further, when defining $B\triangleq (B_1,...,B_n)\in \mathbb{R}^n$, the element $\beta^\ast\in \Phi$ is obtained as the solution the following $n$-dimensional inverse problem,
\begin{equation}
A\beta^{\ast} = B. \nonumber
\end{equation}
\end{remark}

\

\section{The optimal portfolio problem.} \label{sec:financial_setup}

In the sequel, we discuss how the results presented in Section \ref{sec:optimlaity} may be proposed to address the problem of substituting a financial exposure by some constrained portfolio, without misrepresenting its performance in terms of the related inherent risk structure.

Here and in the sequel of the paper, we always assume that the risk factors of a given portfolio are represented by tradable observables. These include the market price of a stock or a commodity itself or some major market benchmark assessing the value of an entire class of securities, such as the interest rate term structure when dealing with the fixed income market.
In this respect, we only refer to the market risk that affects the financial exposure. No other types of risk are considered.

Any element of $E$ represents the overall discounted value of the risk factors at a certain time and we regard $I$ as the reference time interval. For sake of simplicity, we suppose $I$ to define the period of one year, and any of its fractions to be assessed according to a certain day count convention.
\begin{example}
To fix the ideas, when $E$ is chosen to be the Euclidean space of some finite dimension $n\in \mathbb{N}$, then the components of any element $x=(x_1,...,x_n)\in E$ may represent the discounted market prices of $n$ assets at a certain time. Besides, one may choose $E$ to be some space of continuous curve and interpret any of its elements as the structure of the discounted price curve at a certain time.
\end{example} 

We suppose the process (\ref{eq:diffusive_dynamics_general}) to provide a dynamics for the overall discounted value of the risk factors. Moreover, we regard any function $f:I\times E\rightarrow \mathbb{R}$ of class $\mathscr{C}_b^{1,2}$ as a fixed financial exposure and we understand the variable $f(t,\xi_t)$ as its discounted value at any time $t\in I$.

For a fixed P-set $\Phi$ relative to $\xi$, we will interpret any $\phi\in \Phi$ as the dynamics of a certain portfolio of risk factors managed by the trader. Hence, we understand the variable $\langle \phi_t,\xi_t\rangle_E$ as its discounted value at time $t\in I$. 
Notice that this term depends on both the overall discounted value of the risk factors $\xi_t$ and the portfolio composition $\phi_t$ chosen by the investor.
The assumption (ii) in Definition \ref{de:Phi} is not merely technical, since the strategy considered by a rational investor at a certain time should depend on the evolution of the reference market. 
On the other hand, for our applications, the identity (iii) in Definition \ref{de:Phi} generalizes the well-known self-financing condition.

We regard the risk functional $\mathcal{F}$ relative to $\xi$ induced by $f$ over $\Phi$ as the error that occurs when substituting the exposure represented by $f$ by some portfolio within $\Phi$. Such an error is assessed in terms of the average changes of the difference between the exposure and the selected portfolio, due to the fluctuation of the underlying risk factors within the period $I$.
In this respect, a portfolio $\phi^\ast\in\Phi$ turns out to be $\mathcal{F}$-optimal when it provides the best representation of the inherent risk of the exposure $f$, among all the possible choices within $\Phi$.

Theorem \ref{pr:optimal_strategy} shows two different formulations of the functional $\mathcal{F}$ in terms of the operator $\nabla_2 f$, which gauges the sensitivity of the financial exposure to little variations of the underlying risk factors. 
In this respect, the optimization of the functional $\mathcal{F}$  may be regarded as a notion of portfolio immunization via sensitivity analysis.

\

\noindent{\emph{Market valued exposure.}}
It is worth to be noted that in the particular case when $f$ is assumed to be a BS-function relative to $\xi$ and $b_t=0$ a.s., for any $t\in I$, Lemma \ref{le:BL-function} assures that 
\begin{equation}
\mathbb{E}f(t,\xi_t)=\mathbb{E} f(0,\xi_0), \hspace{1cm} \text{for any $t\in I$.} \nonumber
\end{equation}
As a consequence, we may regard Definition \ref{de:BS-function}
in terms of a risk-free condition. Stated differently, a function $f$ turns out to be a BS-function relative to $\xi$ when the financial exposure that it represents turns out to be market valued.

On the other hand, under the same hypothesis Lemma \ref{le:dfritfless_and_heat_equation} leads to
\begin{equation}
\mathbb{E}F_t(\phi)=\mathbb{E} F_0(\phi), \hspace{1cm} \text{for any $t\in I$ and $\phi\in\Phi$.} \nonumber
\end{equation}
Hence, in the particular case when $f(0,\xi_0)=\langle \phi_0,\xi_0 \rangle_E$ a.s., for any $\phi\in \Phi$, one has $\mathbb{E}F_t(\phi)=0$ a.s. for any $t\in I $ and $\phi\in \Phi$, and the risk functional $\mathcal{F}$ induced by $f$ over $\Phi$ boils down to
\begin{equation}
\mathcal{F}(\phi)=
\Vert F(\phi) \Vert^2_{L^2(\Omega\times I)}, \hspace{1cm} \text{for any $\phi\in \Phi$.} \nonumber
\end{equation} 

This means that when the financial exposure is market valued and it is perfectly hedged by any portfolio $\phi\in\Phi$ at time $t=0$, the $\mathcal{F}$-optimal portfolio $\phi^\ast\in\Phi$ is the one that minimizes the average squared discrepancy with the exposure over time, based on the underlining risk factors fluctuation.


\

\section{Constrained hedging and residual risk.} \label{sec:examples}

Let $f:I\times E \rightarrow \mathbb{R}$ be a function of class $\mathscr{C}_b^{1,2}$ and fix a P-set $\Phi$ relative to $\xi$. 
As a consequence of Theorem \ref{pr:optimal_strategy}, if there exists a process $\phi^\ast\in \Phi$ such that the following identity holds true a.s.
\begin{equation}\label{eq:sensitivity_hedging}
\phi_t^\ast = \nabla_2 f(t,\xi_t) \hspace{1cm} \text{for any $t\in I$,}
\end{equation}
then $\phi^\ast$ turns out to be $\mathcal{F}$-optimal, since $\mathcal{F}(\phi^\ast)=0$ in this case. This fact is consistent with the sensitivity-based hedging approach for portfolio immunization.

We discuss this interpretation later on.

\

\noindent{\emph{Sensitivity-based hedging.}}
Assume $E$ to coincide with the Euclidean space $\mathbb{R}^2$. Hence, write  $\langle \cdot,\cdot\rangle_E$ to denote the standard Euclidean product on it and let $e=(e_1,e_2)$ be its canonical basis. On the other hand, we assume $H$ to coincide with $\mathbb{R}$ and thus we regard the $H$-cylindrical process $W$ as a standard one-dimensional Brownian motion. 

Given this framework, note that the space $\gamma(H,E)$ boils down to $E$ itself. Thus, we may represent the process $\sigma\in L^2(\Omega,L^2(I,E))$ in terms of its components by writing $\sigma=(\sigma_{1},\sigma_{2})$, where, for any $i=1,2$, the real valued process $\sigma_i=\lbrace \sigma_{i,t}:t\in I \rbrace$ is defined as follows,
\begin{equation}
\sigma_{i,t}\triangleq\langle e_i, \sigma_t \rangle_E, \hspace{1cm} \text{for any $t\in I$}. \nonumber
\end{equation}

Similarly, we write $\xi_i\triangleq \lbrace \xi_{i,t}:t\in I \rbrace$, for $i=1,2$, to denote the components of the process $\xi$ that are defined by setting
\begin{equation} \label{eq:components_xi}
\xi_{i,t}\triangleq \langle e_i,\xi_t\rangle_E, \hspace{1cm} \text{for $t\in I$}.
\end{equation}
Throughout, we assume that a.s. $b_t=0$ and $\sigma_{2,t}=0$, for any $t\in I$. Moreover, for sake of simplicity, we set $\xi_{2,0}=1$ a.s.

Then, since for $i=1,2$ the following equality a.s. holds true (see \cite{vanweis}, Theorem 4.2),
\begin{equation}
\bigg\langle e_i , \int_0^t \sigma_s dW_s \bigg\rangle_E = \int_0^t \langle  e_i , \sigma_s \rangle_E dW_s, \hspace{1cm} \text{for any $t\in I$,} \nonumber
\end{equation}
we obtain the following representation for the dynamics of the components given in (\ref{eq:components_xi}),
\begin{equation}\label{eq:components_xi_hedging}
\xi_{1,t} = \xi_{1,0} + \int_0^t \sigma_{1,s} dW_s, \ \  \text{a.s.}, \hspace{1cm} \xi_{2,t} = 1, \ \  \text{a.s.}
\end{equation}

We fix a function $f:I\times E\rightarrow \mathbb{R}$ of class  $\mathscr{C}_b^{1,2}$ and a P-set $\Phi$ relative to $\xi$.
For any process $\phi\in \Phi$, we regard the variable $\phi_t$, for $t\in I$, in terms of its components defined for $i=1,2$ by the following identity,
\begin{equation}
\phi_{i,t}\triangleq \langle e_i, \phi_t \rangle_E, \hspace{1cm} \text{for any $t\in I$.} \nonumber
\end{equation}

\begin{proposition}
Let $\Phi$ be a P-set relative to $\xi$. Assume $f$ to be a BS-function relative to $\xi$. If the components of the process $\xi$ are given by (\ref{eq:components_xi_hedging}), then 
\begin{equation}\label{eq:optimization_delta_hedging}
\mathcal{F}(\phi)=\mathbb{E}\bigg \lbrace \int_I \vert (\partial_{x_1} f(t,\xi_{1,t},\xi_{2,t})-\phi_{1,t})\sigma_{1,t} \vert^2 (1-t)dt\bigg\rbrace, \hspace{1cm} \text{for any $\phi\in \Phi$.} 
\end{equation}
\end{proposition}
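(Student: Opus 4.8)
The plan is to specialize statement $(ii)$ of Theorem \ref{pr:optimal_strategy} to the present finite-dimensional setting. By hypothesis $f$ is a BS-function relative to $\xi$ and, by the standing assumption of this subsection, $b_t=0$ a.s. for any $t\in I$, so the representation (\ref{eq:optimal_frechet}) applies directly and yields
\begin{equation}
\mathcal{F}(\phi)=\mathbb{E}\bigg\lbrace \int_I \big\Vert (\nabla_2 f(t,\xi_t)-\phi_t)\sigma_t\big\Vert_H^2\,(1-t)\,dt\bigg\rbrace,\qquad \text{for any }\phi\in\Phi. \nonumber
\end{equation}
It then only remains to evaluate the integrand under the identifications proper to this example, namely $H=\mathbb{R}$ (whence $\Vert\cdot\Vert_H=\vert\cdot\vert$) and $\gamma(H,E)=\gamma(\mathbb{R},E)=E=\mathbb{R}^2$, the latter being one of the natural identifications recalled after Lemma \ref{le:ideal_property}.

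Next I would unfold the $H$-valued dual pairing $(\nabla_2 f(t,\xi_t)-\phi_t)\sigma_t=\langle \nabla_2 f(t,\xi_t)-\phi_t,\sigma_t\rangle_E$ in terms of the canonical basis $e=(e_1,e_2)$. Since $E=\mathbb{R}^2$, the Fr\'echet derivative $\nabla_2 f(t,\xi_t)\in E^\ast$ is the gradient with components $\partial_{x_i}f(t,\xi_t)$, and under $\gamma(\mathbb{R},E)=E$ the pairing of $x^\ast\in E^\ast$ with $\sigma_t$ reduces to the Euclidean product of $x^\ast$ with the vector representing $\sigma_t$. Writing $\phi_t=(\phi_{1,t},\phi_{2,t})$ and $\sigma_t=(\sigma_{1,t},\sigma_{2,t})$ as above, one thus computes
\begin{equation}
\langle \nabla_2 f(t,\xi_t)-\phi_t,\sigma_t\rangle_E=\sum_{i=1}^{2}\big(\partial_{x_i}f(t,\xi_t)-\phi_{i,t}\big)\sigma_{i,t}. \nonumber
\end{equation}
Invoking now the standing assumption $\sigma_{2,t}=0$ a.s. collapses the sum to its first term, so the integrand equals $\vert(\partial_{x_1}f(t,\xi_t)-\phi_{1,t})\sigma_{1,t}\vert^2$; substituting this and writing $\xi_t=(\xi_{1,t},\xi_{2,t})$ gives (\ref{eq:optimization_delta_hedging}).

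The only point requiring care is the bookkeeping of the several identifications at play: one must keep track that $\sigma_t$ lives in $\gamma(H,E)$ (here identified with $E$) while $\nabla_2 f(t,\xi_t)-\phi_t$ lives in $E^\ast$, so that their product is the $H=\mathbb{R}$-valued pairing and not an inner product in $E$. Once this is made explicit through the components $\sigma_{i,t}=\langle e_i,\sigma_t\rangle_E$ defined above, the computation is immediate and no analytic input beyond Theorem \ref{pr:optimal_strategy} is needed.
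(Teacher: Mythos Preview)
Your proof is correct and follows essentially the same approach as the paper: both apply statement (ii) of Theorem \ref{pr:optimal_strategy} (using that $f$ is a BS-function and $b_t=0$), identify $\nabla_2 f$ with the Euclidean gradient, and then invoke $\sigma_{2,t}=0$ to reduce the pairing to its first component. Your version is slightly more explicit about the identifications $H=\mathbb{R}$ and $\gamma(\mathbb{R},E)=E$, but the logical content is identical.
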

\begin{proof}
Notice that for any $t\in I$ and $x=(x_1,x_2)\in E$, we may regard $\nabla_2 f(t,x)$ as an element of $E$ with components, 
\begin{equation}
\nabla_2 f(t,x) = (\partial_{x_1} f(t,x_1,x_2),\partial_{x_2} f(t,x_1,x_2)), \nonumber
\end{equation} 
and thus we may set
\begin{equation}
\nabla_2 f(t,x)y \triangleq \langle\nabla_2 f(t,x), y\rangle_E, \hspace{1cm} \text{for any $y\in E$.} \nonumber
\end{equation}

Then, since according to the identities (\ref{eq:components_xi_hedging}) we have $b_t=0$ a.s., for any $t\in I$, and $f$ is a BS-function relative to $\xi$, the result follows directly from the statement (ii) in Theorem \ref{pr:optimal_strategy}, by noting that $\sigma_{2,t}=0$ a.s., for any $t\in I$.
\end{proof}

We may understand the first component of the process $\xi$ as a risk-neutral dynamics for the discounted price of some risky asset. Besides, we regard its second component as a risk-neutral model for the discounted value of the bank account.
The function $f$ represents an European contingent claim written on the risky asset and the P-set $\Phi$ relative to $\xi$ stands for the entire class of the hedging  portfolios.

It is worth to be noted that in the particular case when there exists a process $\phi^\ast\in\Phi$ such that the following equality holds true a.s., 
\begin{equation} \label{eq:delta-hedging-condition}
\phi_{1,t}^\ast=\partial_{x_1} f(t,\xi_t), \hspace{1cm} \text{for any $t\in I$,}
\end{equation}
then the process $\phi^\ast$ turns out to be $\mathcal{F}$-optimal, since $\mathcal{F}(\phi^\ast)=0$ thanks to the representation (\ref{eq:optimization_delta_hedging}). In particular, the identity (\ref{eq:delta-hedging-condition}) corresponds to the so-called delta-hedging condition for the contingent claim $f$.
\begin{remark}
Assume that the first component of the process $\phi^{\ast}\in \Phi$ satisfies the identity (\ref{eq:delta-hedging-condition}). Note that, if the following condition holds a.s.
\begin{equation}\label{eq:perfect_headge_t=0}
f(0,\xi_0)=\langle \phi_0^\ast,\xi_0 \rangle_E, 
\end{equation}
one has that $F_0(\phi^{\ast})=0$ a.s.
The identity (\ref{eq:perfect_headge_t=0}) tell us that the financial exposure is perfectly hedged by the portfolio $\phi^\ast\in\Phi$ at time $t=0$. \\
In this particular case, Lemma \ref{le:dfritfless_and_heat_equation} gives that $F_t(\phi^{\ast})=0$ a.s., for any $t\in I$, and the second component of the process $\phi^\ast$ is thus implicitly determined by the following identity
\begin{equation}
\phi_{2,t}^{\ast} = f(t,\xi_t) - \phi_{1,t}^{\ast}\xi_{1,t}\hspace{1cm} \text{a.s., for any $t\in I$.} \nonumber
\end{equation}

Moreover, it is worth to be noted that within the present setup the identity (\ref{eq:heat_equationo_zeta}) boils down to
\begin{equation}\label{eq:bl}
\nabla_1 f(t,\xi_{1,t},\xi_{2,t}) + \frac{1}{2}  \partial_{x_1x_1}  f(t,\xi_{1,t},\xi_{2,t}) \sigma_{1,t}^2= 0, \ \ \text{a.s., for any $t\in I$}, \nonumber
\end{equation}
which corresponds to a Black-Scholes type equation, whereby the risk-free rate is set to be null at any time. 
\end{remark}

%

\

\noindent{\emph{Correlation and residual risk.}} Throughout, we analyse the case when the set $\Phi$ is defined is such a way that condition (\ref{eq:sensitivity_hedging}) may not be fulfilled. This case turns out to be appealing when the financial exposure and the replication portfolio actually depend upon two different but correlated risk factors.

Assume both $E$ and $H$ to coincide with the Euclidean space $\mathbb{R}^2$. Hence, write  $\langle \cdot,\cdot\rangle_E$ to denote the standard Euclidean product on it and let $e=(e_1,e_2)$ be its canonical basis.
Further, fix a constant $\varrho\in (-1,1)$ and assume the inner product $\langle\cdot,\cdot\rangle_H$ on $H$ to be defined in such a way that $\langle e_i,e_j \rangle_H=1$, if $i=j$, and $\langle e_i,e_j \rangle_H=\varrho$ otherwise.

Let $f:I\times E\rightarrow \mathbb{R}$ be the function defined by setting
\begin{equation}\label{eq:function_f_correlation}
f(t,x)= \langle e_1, x\rangle_E, \hspace{1cm} \text{for any $t\in I$ and $x\in E$.} 
\end{equation}
Notice that $f$ is of class $\mathscr{C}_b^{1,2}$, with $\nabla_2 f(t,x)=e_1$, for any $t\in I$ and $x\in E$. 

Let $\Phi$ be a P-set relative to $\xi$ and assume that for any $\phi\in \Phi$ there exists a real-valued process $\phi_2\triangleq \lbrace \phi_{2,t} :t\in I \rbrace$ such that the following identity holds true 
\begin{equation}\label{eq:phi_bar_representation}
\phi_t= \phi_{2,t}e_2, \hspace{1cm} \text{a.s., for any $t\in I$.}
\end{equation}
Hence, we identify any $\phi\in \Phi$ with the process $\phi_{2}$ that satisfies the identity (\ref{eq:phi_bar_representation}).

\begin{proposition}\label{pr:risk_functional_correlation}
Let $f$ be the function given by the identity (\ref{eq:function_f_correlation}) and $\Phi$ a P-set relative to $\xi$ that verifies the condition (\ref{eq:phi_bar_representation}). If $b_t=0$ a.s., for any $t\in I$, then
\begin{equation}\label{eq:risk_functional_correlation}
\mathcal{F}(\phi) = \mathbb{E}\bigg \lbrace \int_I \Vert \langle e_1, \sigma_t \rangle_E - \phi_{2,t} \langle e_2,\sigma_t\rangle_E \Vert^2_H (1-t)dt\bigg\rbrace, \hspace{1cm} \text{for any $\phi\in\Phi$.}
\end{equation}
\begin{proof}
First of all, notice that  $\nabla_1 f(t,x)=0$ and $\nabla_2^2 f(t,x)=0$, for any $t\in I$ and $x\in I$. Hence, the function  $f$ appears to be a BS-function relative to $\xi$.   
Moreover, note that 
\begin{equation}
\nabla_2 f(t,\xi_t)\sigma_t = \langle e_1 , \sigma_t \rangle_E, \hspace{1cm} \text{for any $t\in I$.} \nonumber
\end{equation}
and for any $\phi\in \Phi$, and that
\begin{equation}
\langle \phi_t , \sigma_t \rangle_E = \phi_{2,t} \langle e_2,\sigma_t\rangle_E, \hspace{1cm} \text{for any $t\in I$}. \nonumber
\end{equation}

Then, the result follows directly from  Theorem \ref{pr:optimal_strategy}.
\end{proof}
\end{proposition}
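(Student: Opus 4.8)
The plan is to invoke Theorem \ref{pr:optimal_strategy}(ii) directly, since the hypotheses of that statement are exactly what the proposition furnishes: we are told $b_t=0$ a.s. for all $t\in I$, and we must first verify that the function $f$ given by (\ref{eq:function_f_correlation}) is a BS-function relative to $\xi$. The latter is the only genuine verification step, and it is immediate: since $f(t,x)=\langle e_1,x\rangle_E$ is linear in $x$ and constant in $t$, we have $\nabla_1 f(t,x)=0$ and $\nabla_2^2 f(t,x)=0$ for every $(t,x)$, so the BS-condition (\ref{eq:heat_equationo_zeta}), namely $\nabla_1 f(t,\xi_t)+\tfrac12\operatorname{tr}(\nabla_2^2 f(t,\xi_t);\sigma_t)=0$, holds trivially because both summands vanish. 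One should also note in passing that $f$ is of class $\mathscr{C}_b^{1,2}$, with the constant first-order derivative $\nabla_2 f(t,x)=e_1$, so $\Vert \nabla_2 f\Vert_\infty=\Vert e_1\Vert_{E^\ast}<\infty$ and the boundedness condition (\ref{eq:f_bound_contion}) is satisfied.

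With those hypotheses in place, representation (\ref{eq:optimal_frechet}) applies and gives
\begin{equation}
\mathcal{F}(\phi)=\mathbb{E}\bigg\lbrace \int_I \big\Vert (\nabla_2 f(t,\xi_t)-\phi_t)\sigma_t\big\Vert_H^2\,(1-t)\,dt\bigg\rbrace. \nonumber
\end{equation}
The remaining work is purely to rewrite the integrand $(\nabla_2 f(t,\xi_t)-\phi_t)\sigma_t$ in the stated form. Recalling from the remark following Theorem \ref{pr:optimal_strategy} that this expression unfolds as $\nabla_2 f(t,\xi_t)\sigma_t-\langle\phi_t,\sigma_t\rangle_E$, I would substitute the two ingredients separately. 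Since $\nabla_2 f(t,\xi_t)=e_1$, the first term is $\langle e_1,\sigma_t\rangle_E$, the $H$-valued dual pairing between $\gamma(H,E)$ and $E^\ast$. For the second term, the representation (\ref{eq:phi_bar_representation}) gives $\phi_t=\phi_{2,t}e_2$ a.s., and by linearity of the pairing $\langle\phi_t,\sigma_t\rangle_E=\phi_{2,t}\langle e_2,\sigma_t\rangle_E$. Collecting these yields $(\nabla_2 f(t,\xi_t)-\phi_t)\sigma_t=\langle e_1,\sigma_t\rangle_E-\phi_{2,t}\langle e_2,\sigma_t\rangle_E$, which is precisely the integrand of (\ref{eq:risk_functional_correlation}).

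There is no real obstacle here; the proposition is essentially a transcription of Theorem \ref{pr:optimal_strategy}(ii) into the two-dimensional correlated setting, and the substance lies entirely in confirming the BS-condition for this particular $f$. The one point deserving a moment's care is notational: the $H$-norm $\Vert\cdot\Vert_H$ appearing in the integrand is the norm of the correlated inner product on $H=\mathbb{R}^2$ defined via $\varrho$, so that $\langle e_1,\sigma_t\rangle_E$ and $\langle e_2,\sigma_t\rangle_E$ are genuinely $H$-valued objects and their difference is measured in that $\varrho$-dependent norm — this is what encodes the correlation structure and makes the residual risk nonzero in general. No further limiting or integrability argument is needed, since Theorem \ref{pr:optimal_strategy} already delivers a finite, well-posed functional under the class-$\mathscr{C}_b^{1,2}$ assumption, which we have checked.
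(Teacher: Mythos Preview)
Your proposal is correct and follows essentially the same route as the paper's proof: verify that $f$ is a BS-function via $\nabla_1 f=0$ and $\nabla_2^2 f=0$, apply Theorem \ref{pr:optimal_strategy}(ii), and substitute $\nabla_2 f(t,\xi_t)=e_1$ and $\phi_t=\phi_{2,t}e_2$ into the integrand. Your additional remarks (explicitly checking the $\mathscr{C}_b^{1,2}$ condition and commenting on the role of the $\varrho$-dependent $H$-norm) are welcome elaborations but do not alter the argument.
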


Notice that, since it is not possible to find a version of the process $\nabla_2 f(t,\xi_t)$, for $t\in I$, that belongs to $\Phi$, the condition (\ref{eq:sensitivity_hedging}) may not be recovered. 

Besides, the following result provides an explicit characterization of a $\mathcal{F}$-optimal process in this case.

\begin{proposition}
Let $f$ be the function given by the identity (\ref{eq:function_f_correlation}) and $\Phi$ a P-set relative to $\xi$ that verifies the condition (\ref{eq:phi_bar_representation}). If $b_t=0$ a.s., for any $t\in I$, then the process $\phi^\ast\in \Phi$ defined by 
\begin{equation}\label{eq:phi_correlation_replication}
\phi_{2,t}^\ast=\varrho (\sigma_{1,t}/\sigma_{2,t}), \hspace{1cm} \text{for any $t\in I$},
\end{equation}
turns out to be $\mathcal{F}$-optimal, where for $i=1,2$ the real valued process $\sigma_i\triangleq \lbrace \sigma_{i,t}:t\in I \rbrace$ is such that 
\begin{equation}\label{eq:sigma_spectre}
\langle e_i, \sigma_t \rangle_E = \sigma_{i,t}e_i, \hspace{1cm} \text{for any $t\in I$.}
\end{equation}
\end{proposition}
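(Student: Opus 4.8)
The plan is to exploit the explicit representation of $\mathcal{F}$ furnished by Proposition \ref{pr:risk_functional_correlation} and to reduce the optimization to a pointwise, one-parameter quadratic minimization inside the integral. First I would insert the spectral identities (\ref{eq:sigma_spectre}) into (\ref{eq:risk_functional_correlation}), so that for any $\phi\in\Phi$ the quantity inside the norm becomes $\sigma_{1,t}e_1 - \phi_{2,t}\sigma_{2,t}e_2$. Expanding the $H$-norm $\Vert \sigma_{1,t}e_1 - \phi_{2,t}\sigma_{2,t}e_2\Vert_H^2$ by means of the prescribed Gram relations $\langle e_i,e_i\rangle_H=1$ and $\langle e_1,e_2\rangle_H=\varrho$ shows that the integrand equals $g_t(\phi_{2,t})$, where for $y\in\mathbb{R}$ I set
\begin{equation}
g_t(y)\triangleq \sigma_{1,t}^2 - 2\varrho\, y\,\sigma_{1,t}\sigma_{2,t} + y^2\sigma_{2,t}^2. \nonumber
\end{equation}

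Next I would observe that, for almost every $(t,\omega)$, the map $y\mapsto g_t(y)$ is a real quadratic with strictly positive leading coefficient $\sigma_{2,t}^2$, and hence attains its unique minimum at $y^\ast=\varrho\sigma_{1,t}/\sigma_{2,t}$, which is precisely the value (\ref{eq:phi_correlation_replication}) prescribed for $\phi_{2,t}^\ast$. Consequently $g_t(\phi_{2,t})\geq g_t(\phi_{2,t}^\ast)$ for every $\phi\in\Phi$ and for a.e. $(t,\omega)$. Since the weight $(1-t)$ is nonnegative on $I=[0,1]$ and $\mathbb{P}$ is a probability measure, integrating this pointwise inequality against the positive measure $(1-t)\,dt\otimes d\mathbb{P}$ preserves its direction and yields $\mathcal{F}(\phi)\geq \mathcal{F}(\phi^\ast)$ for every $\phi\in\Phi$; by Definition \ref{de:optimality} this is exactly the assertion that $\phi^\ast$ is $\mathcal{F}$-optimal.

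I expect the only genuinely delicate point to be of an admissibility rather than a computational nature: one must ensure that the pointwise minimizer $\varrho\sigma_{1,t}/\sigma_{2,t}$ is well defined (i.e. $\sigma_{2,t}\neq 0$ a.e.) and that it indeed determines an element of $\Phi$, in particular that it meets the boundedness requirement (i) and the structural requirement (ii) of Definition \ref{de:Phi}, since the pointwise argument only establishes optimality among admissible competitors. As the statement already posits $\phi^\ast\in\Phi$, this may be taken as granted, so the substance of the proof is the decoupling of the minimization across $(t,\omega)$ afforded by Proposition \ref{pr:risk_functional_correlation}, after which the result reduces to completing the square in a single scalar variable.
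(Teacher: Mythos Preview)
Your proposal is correct and follows essentially the same route as the paper: invoke Proposition \ref{pr:risk_functional_correlation}, expand the $H$-norm using the spectral identities (\ref{eq:sigma_spectre}) and the Gram relations to obtain the quadratic $\sigma_{1,t}^2 - 2\varrho \phi_{2,t}\sigma_{1,t}\sigma_{2,t} + \phi_{2,t}^2\sigma_{2,t}^2$, minimize pointwise in $\phi_{2,t}$, and integrate. Your remarks on the admissibility of $\phi^\ast$ and the nonnegativity of the weight $(1-t)$ are additional sanity checks that the paper leaves implicit.
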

\begin{proof}
Notice that Proposition \ref{pr:risk_functional_correlation} applies and the functional $\mathcal{F}$ admits the representation given in (\ref{eq:risk_functional_correlation}).  

On the other hands, a direct computation shows for any $t\in I$, one has
\begin{equation}\label{eq:norm_correlation}
\Vert \langle e_1, \sigma_t \rangle_E - \phi_{2,t} \langle e_2,\sigma_t\rangle_E \Vert^2_H = \sigma_{1,t}^2 - 2\varrho \phi_{2,t} \sigma_{1,t}\sigma_{2,t} + \phi_{2,t}^2\sigma_{2,t}^2,
\end{equation}
which consists in a quadratic form in terms of $\phi_{2,t}$, that attends its minimum at
\begin{equation}
\phi_{2,t}=\varrho (\sigma_{1,t}/\sigma_{2,t}). \nonumber
\end{equation}
Thus, when letting $t$ run over $I$, the process $\phi^\ast\in \Phi$ defined by identity (\ref{eq:phi_correlation_replication}) satisfies $\mathcal{F}(\phi^\ast)\leq \mathcal{F}(\phi)$, for any $\phi\in\Phi$, and hence it turns out to be $\mathcal{F}$-optimal.
\end{proof}

Notice that the $H$-Wiener process $W$ may be understood as a $2$-dimensional Wiener process, whose components $W_i$, for $i=1,2$, are defined by setting,
\begin{equation}\label{eq:components_tew_dimensional}
W_{i,t}\triangleq W_te_i, \hspace{1cm} \text{for any $t\in I$,}
\end{equation} 
with $\varrho$ as their instantaneous correlation, since
\begin{equation}
\mathbb{E}\lbrace W_te_1 \cdot W_te_2 \rbrace = \langle e_1, e_2 \rangle_H t = \varrho t, \hspace{1cm} \text{for any $t\in I$.} \nonumber
\end{equation}

Thus, we may regard the components of the process $\xi$ as a dynamics of the discounted prices of two correlated risky assets. 

In this case, a perfect hedge may not be recovered. Indeed, notice that given $\phi^\ast\in \Phi$ as defined by the identity (\ref{eq:phi_correlation_replication}), the quantity $\mathcal{F}(\phi^\ast)$ is strictly positive for $\varrho < 1$ and it vanishes for $\varrho=1$, which is the particular case when the two assets appear to be completely correlated.

Then, we may regard $\mathcal{F}(\phi^\ast)$ as the residual hedging risk.    
\

\section{Interest rates securities portfolios.}\label{sec: Interest Rates}
In this section we take a close look at the problem of seeking an optimal representation of some fixed income portfolio by considering a portfolio of zero coupon bonds, which are those contracts that pay one unit of a certain currency at some maturity future date. 

We make the assumption to deal with idealized bonds that are unaffected by credit risk. i.e. the payment at the maturity date is always realized by the issuer of the bond.
We fix $\mathscr{T}\triangleq (1,+\infty)$. Thus, we suppose that there exists a market valued bond maturing at any future time $T\in \mathscr{T}$, and we write $p_t(T)$ to denote its risk-neutral discounted price at time $t\in I$.
Moreover, we refer to the function $T\in \mathscr{T}\mapsto p_t(T)$ as the discounted price curve at time $t\in I$.
It is worth to be highlighted that we let the variable $T$ to run over $\mathscr{T}$, since we only deal with those bonds that do not expire within the first year.

\ 

\noindent{\emph{Price curve dynamics}.}
Throughout, we consider a UMD Banach space $E$ with type $2$ that is represented by some space of continuous and real-valued functions defined on $\mathscr{T}$. We also assume the evaluation functional $\delta_T$ to be continuous and bounded on $E$, for any $T\in \mathscr{T}$.
On the other hand, in order to capture the feature of any maturity specific-risk, it may be reasonable to let $H$ be some infinite dimensional separable Hilbert space and hence $W$ to be a $H$-cylindrical process as defined in Section \ref{sec_preliminaries}.

An instance of this setup is the one suggested by Carmona and Tehranchi \cite{carteh}, by considering the space introduced in the following definition. 

\begin{definition}\label{de:H_w}
Let $w:\mathscr{T}\rightarrow \mathbb{R}^+$ be a positive and increasing function.
We write $\mathcal{H}_w$ to denote the space of absolutely continuous functions $x:\mathscr{T}\rightarrow \mathbb{R}$ with $x(s)\rightarrow 0$, as $s\rightarrow +\infty$, and such that 
\begin{equation}
\int_\mathscr{T} x'(s)^2w(s)ds < +\infty, \nonumber
\end{equation} 
where $x'$ stands for the weak derivative of $x$.
\end{definition}
When endowed with the norm 
\begin{equation}
\Vert x \Vert_{\mathcal{H}_w}\triangleq \bigg\lbrace x(1)^2 + \int_\mathscr{T} x'(s)^2w(s)ds\bigg\rbrace^{1/2},\hspace{1cm} \text{for any $x\in \mathcal{H}_w$,} \nonumber
\end{equation}
the space $\mathcal{H}_w$ turns out to be a Hilbert space, as reported in the Lemma below.
 In this respect, we recall that any Hilbert space is also a UMD Banach space with type $2$.
\begin{lemma}\label{le:condition_w_H}
Let $w:\mathscr{T}\rightarrow \mathbb{R}^+$ be a positive and increasing function such that 
\begin{equation}\label{eq:condition_w_H}
\int_\mathscr{T} w(s)^{-1} ds < +\infty
\end{equation}
then $\mathcal{H}_w$ is a separable Hilbert space and the evaluation functional $\delta_T$ is continuous and bounded on $\mathcal{H}_w$, for any $T\in\mathscr{T}$.
\end{lemma}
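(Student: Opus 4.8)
The plan is to verify in turn that $\Vert\cdot\Vert_{\mathcal{H}_w}$ comes from an inner product (so $\mathcal{H}_w$ is a pre-Hilbert space), that it is complete, and that it is separable; the boundedness of $\delta_T$ will be extracted along the way as the key estimate on which the completeness argument also rests. First I would introduce the candidate inner product $\langle x,y\rangle_{\mathcal{H}_w} \triangleq x(1)y(1) + \int_\mathscr{T} x'(s)y'(s)w(s)\,ds$, whose induced norm is exactly $\Vert\cdot\Vert_{\mathcal{H}_w}$; bilinearity and symmetry are immediate, and for positive definiteness one notes that $\Vert x\Vert_{\mathcal{H}_w}=0$ forces $x'=0$ a.e. (since $w>0$), hence $x$ is constant and equal to $x(1)=0$. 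A point that must be settled before this even makes sense is that the boundary value $x(1)$ exists, since $\mathscr{T}=(1,+\infty)$ is open on the left: for $1<a<b$ the weighted Cauchy--Schwarz inequality gives $|x(b)-x(a)|\le\big(\int_a^b w^{-1}\big)^{1/2}\big(\int_a^b (x')^2 w\big)^{1/2}$, and the integrability of $w^{-1}$ near $1$ makes the right-hand side vanish as $a,b\downarrow 1$, so $\lim_{s\to 1^+}x(s)=:x(1)$ exists.

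The key estimate is the bound on $\delta_T$. Using the decay $x(s)\to 0$ at infinity I would write $x(T)=-\int_T^\infty x'(s)\,ds$ and split $x'=(x'w^{1/2})\,w^{-1/2}$ under the integral to obtain
$$|x(T)| \le \Big(\int_T^\infty w(s)^{-1}\,ds\Big)^{1/2}\Big(\int_T^\infty x'(s)^2 w(s)\,ds\Big)^{1/2} \le c_w^{1/2}\,\Vert x\Vert_{\mathcal{H}_w},$$
where $c_w\triangleq\int_\mathscr{T} w^{-1}<+\infty$ by hypothesis (\ref{eq:condition_w_H}). This shows $\delta_T$ is bounded, uniformly in $T\in\mathscr{T}$, with $\Vert\delta_T\Vert\le c_w^{1/2}$, which settles the second assertion of the lemma.

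For completeness, given a Cauchy sequence $(x_n)$, the uniform bound above yields uniform convergence $x_n\to x$ on $\mathscr{T}$, so $x$ inherits both the decay at $+\infty$ and the value $x(1)=\lim_n x_n(1)$, while the derivatives $x_n'$ form a Cauchy sequence in the complete weighted space $L^2(\mathscr{T};w\,ds)$ and converge to some $g$. Since $w$ is bounded below by a positive constant on each $[a,b]\subset\mathscr{T}$, weighted-$L^2$ convergence forces $L^1_{\mathrm{loc}}$ convergence of $x_n'$, and passing to the limit in $x_n(T)=x_n(a)+\int_a^T x_n'$ identifies $x$ as absolutely continuous with weak derivative $x'=g$; hence $x\in\mathcal{H}_w$ and $\Vert x_n-x\Vert_{\mathcal{H}_w}\to 0$. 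For separability I would exhibit the linear map $x\mapsto x'$ from $\mathcal{H}_w$ to $L^2(\mathscr{T};w\,ds)$: it is injective (a null derivative plus decay gives $x\equiv 0$) and surjective with inverse $g\mapsto -\int_\cdot^\infty g$, and the elementary estimate $\big(\int_\mathscr{T} g\big)^2\le c_w\int_\mathscr{T} g^2 w$ shows both the map and its inverse are bounded. Thus $\mathcal{H}_w$ is topologically isomorphic to the separable space $L^2(\mathscr{T};w\,ds)$ and is therefore separable.

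The hard part will be the completeness step: one must check that the two modes of convergence -- uniform for the functions and weighted-$L^2$ for the derivatives -- are genuinely compatible, so that the candidate limit is truly absolutely continuous with the expected weak derivative while simultaneously respecting the open left endpoint $1$ and the decay at $+\infty$. The uniform evaluation bound for $\delta_T$ is precisely the glue that reconciles these two limits, which is why I would establish it before attempting completeness.
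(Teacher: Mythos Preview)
Your argument is correct. The paper itself does not prove this lemma but simply refers the reader to Proposition~6.3 in Carmona's monograph, so there is no in-paper proof to compare against; your self-contained treatment via the weighted Cauchy--Schwarz estimate $|x(T)|\le c_w^{1/2}\Vert x\Vert_{\mathcal{H}_w}$ (which simultaneously yields the boundedness of $\delta_T$ and the uniform control needed in the completeness step) is the standard route and matches what one finds in that reference.
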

\begin{proof}
See, e.g.,  Proposition 6.3 in \cite{car}.
\end{proof}

Here and in the sequel, we regard any element of $E$ as the possible structure of the discounted price curve at a certain time. More precisely, we assume the risk-neutral dynamics of the discounted price curve to be governed by a certain adapted $E$-valued process $p=\lbrace p_t: t\in I \rbrace$. In this respect, let $\sigma=\lbrace \sigma_t : t\in I \rbrace$ be an adapted $H$-strongly measurable process such that $\sigma \in \mathbb{H}^{2,2}(L^2(I;\gamma(H,E)))$, as defined as in Section \ref{sec_preliminaries}, and assume $p_0\in \mathbb{H}^{1,2}(E)$ to be some strongly $\mathscr{G}_0^W$-measurable random variable. Then, we set
\begin{equation}\label{eq:discounted_price_dynamics}
p_t = p_0 + \int_0^t \sigma_s dW_s, \hspace{1cm} \text{for any $t\in I$.}
\end{equation}

Besides, let $\delta_T\in E^\ast$ be the evaluation functional at $T\in \mathscr{T}$ and notice that 
\begin{equation}
p_t(T) = \langle \delta_T , p_t \rangle_E, \hspace{1cm} \text{for any $t\in I$.} \nonumber
\end{equation}  
Thus, we may interpret $\delta_T$ as a portfolio composed by a zero coupon bond expiring at time $T\in \mathscr{T}$, and since the following equality a.s. holds true (see \cite{vanweis}, Theorem 4.2),
\begin{equation}
 \bigg\langle \delta_T, \int_0^t \sigma_s dW_s \bigg\rangle_E=  \int_0^t \langle\delta_T,\sigma_s\rangle_E dW_s, \hspace{1cm}\text{for any $t\in I$},\nonumber
\end{equation}
the identity (\ref{eq:discounted_price_dynamics}) leads to the representation of the risk-neutral dynamics for the discounted price of the bond expiring at $T$ given by 
\begin{equation}\label{eq:T-Bond_dynamics}
p_t(T) = p_0(T) + \int_0^t \sigma_s(T)dW_s, \hspace{1cm} \text{for any $t\in I$,}
\end{equation} 
where for notation simplicity we set $\sigma_t(T)\triangleq\langle\delta_T,\sigma_s\rangle_E$.

\

\noindent{\emph{Risk functional and portfolio duration}.}
Let $f:I\times E\rightarrow \mathbb{R}$ be a function of class $\mathscr{C}_b^{1,2}$ and fix a P-set $\Phi$ relative to the process $p$ given by the identity (\ref{eq:discounted_price_dynamics}). We assume that for any $\phi\in\Phi$ there exists some $T\in \mathscr{T}$, such that 
\begin{equation} \label{eq:phi_delta_bond}
\phi_t=\alpha(T)\delta_T, \hspace{1cm} \text{a.s., for any $t\in I$,} 
\end{equation}
where we set,
\begin{equation}\label{eq:condition_alpha}
\alpha(T) = p_0(T)^{-1} f(0,p_0).
\end{equation} 
As a direct result, we may write $\Phi=\mathscr{T}$ by identifying any process $\phi\in \Phi$ with $T\in \mathscr{T}$ such that the condition (\ref{eq:phi_delta_bond}) is satisfied. 


\begin{proposition}
Let $f:I\times E\rightarrow \mathbb{R}$ be a function of class $\mathscr{C}_b^{1,2}$ and the process $p$ be given by (\ref{eq:discounted_price_dynamics}). Let $\Phi$ be a P-set relative to $p$, whose elements satisfy the identity (\ref{eq:phi_delta_bond}). If $f$ is a BS-function relative to $p$, then
\begin{equation}\label{eq:optimization_T}
\mathcal{F}(T) = \mathbb{E}\bigg \lbrace \int_I\big\Vert (\nabla_2 f(t,p_t)\sigma_t -\alpha(T)\sigma_t(T) \big\Vert_{H}^2 (1-t)dt\bigg\rbrace, \hspace{1cm} \text{for any $T\in\mathscr{T}$.}
\end{equation}
\begin{proof}
The result follows directly form the statement (ii) in Theorem \ref{pr:optimal_strategy}. Indeed, identify the process $p$ with
$\xi$, and notice that in the present case one has $b_t=0$ a.s., for any $t\in I$. 
\end{proof}
\end{proposition}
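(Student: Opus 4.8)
The plan is to derive the representation (\ref{eq:optimization_T}) as a direct specialization of statement (ii) in Theorem \ref{pr:optimal_strategy}. The first step is to verify that the hypotheses of that statement are met in the present setting. Identify the process $\xi$ in Theorem \ref{pr:optimal_strategy} with the discounted price curve process $p$ defined by (\ref{eq:discounted_price_dynamics}). Comparing (\ref{eq:discounted_price_dynamics}) with the general dynamics (\ref{eq:diffusive_dynamics_general}), observe that $p$ has no drift term, so that $b_t=0$ a.s.\ for any $t\in I$. Together with the assumption that $f$ is a BS-function relative to $p$ and of class $\mathscr{C}_b^{1,2}$, this places us exactly in the special case of part (ii) of Theorem \ref{pr:optimal_strategy}, which yields
\begin{equation}
\mathcal{F}(\phi) = \mathbb{E}\bigg\lbrace \int_I \big\Vert (\nabla_2 f(t,p_t)-\phi_t)\sigma_t \big\Vert_H^2 (1-t)\,dt\bigg\rbrace, \hspace{1cm} \text{for any $\phi\in\Phi$.} \nonumber
\end{equation}

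The second step is to substitute the specific form of $\phi\in\Phi$ dictated by (\ref{eq:phi_delta_bond}) into this expression. Recalling the convention stated just after Theorem \ref{pr:optimal_strategy}, the integrand $(\nabla_2 f(t,p_t)-\phi_t)\sigma_t$ is to be understood as the $H$-valued quantity $\nabla_2 f(t,p_t)\sigma_t - \langle\phi_t,\sigma_t\rangle_E$, where $\langle\cdot,\cdot\rangle_E$ is the $H$-valued dual pairing between $\gamma(H,E)$ and $E^\ast$. Using (\ref{eq:phi_delta_bond}), namely $\phi_t=\alpha(T)\delta_T$, together with the linearity of this pairing and the notation $\sigma_t(T)=\langle\delta_T,\sigma_t\rangle_E$ introduced in (\ref{eq:T-Bond_dynamics}), the second term becomes $\langle\alpha(T)\delta_T,\sigma_t\rangle_E = \alpha(T)\sigma_t(T)$. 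Substituting this and invoking the identification $\Phi=\mathscr{T}$ (whereby writing $\mathcal{F}(T)$ in place of $\mathcal{F}(\phi)$ is justified) gives precisely the claimed representation (\ref{eq:optimization_T}).

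This argument is essentially bookkeeping, so I do not anticipate a genuine obstacle; the only point requiring care is the consistent interpretation of the pairing $\langle\cdot,\cdot\rangle_E$ as the $H$-valued pairing on $\gamma(H,E)$ rather than the ordinary $E$-duality, since $\sigma_t$ takes values in $\gamma(H,E)$ and the norm appearing in (\ref{eq:optimization_T}) is the $H$-norm. One should also note that the BS-function hypothesis is what allows part (ii) rather than only part (i) of the theorem to be used, and that the constant $\alpha(T)$ defined in (\ref{eq:condition_alpha}) plays no role in the derivation of (\ref{eq:optimization_T}) itself beyond entering as a scalar multiplier; its specific value in terms of $p_0(T)^{-1}f(0,p_0)$ is fixed by the self-financing-type normalization at $t=0$ and is relevant only for subsequent interpretation, not for the proof of this representation.
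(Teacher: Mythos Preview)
Your proof is correct and follows essentially the same approach as the paper's own proof: identify $p$ with $\xi$, observe $b_t=0$ a.s., and apply statement (ii) of Theorem \ref{pr:optimal_strategy}. The paper's proof is a two-line invocation of that theorem, whereas you spell out the substitution $\langle\phi_t,\sigma_t\rangle_E=\alpha(T)\sigma_t(T)$ explicitly, but the underlying argument is identical.
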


We may understand the variable $f(t,p_t)$ as the discounted value of a certain interest rate securities portfolio at time $t\in I$. 
We interpret any $\phi\in \Phi$ as the dynamics of a certain portfolio composed by a single bond relative to a fixed maturity and nominal given by the identity (\ref{eq:condition_alpha}). Thus, we suppose the variable $\langle \phi_t,p_t \rangle_E$ to assess its risk-neutral discounted value at time $t\in I$. 

It is worth to be noted that the condition (\ref{eq:condition_alpha}) is reasonable, since it guarantees that 
\begin{equation}
f(0,p_0) = \langle \phi_0,p_0 \rangle_E, \nonumber
\end{equation}
and hence that any $\phi\in \Phi$ provides a perfect hedge relative to the portfolio represented by $f$ at time $t=0$.

We may regard the $E^\ast$-valued variable $\nabla_2 f(t,p_t)$ as a notion of duration of the portfolio $f$ at time $t\in I$, since it may represent the sensitivity of the portfolio to little changes in the structure of the price curve at time $t\in I$. In this respect, notice that if $f$ is a BS-function relative to $p$ and there exists $T^\ast\in \mathscr{T}$ such that 
\begin{equation}
\nabla_2 f(t,p_t)= \alpha(T^\ast)\delta_{T^\ast} \hspace{1cm} \text{a.s., for any $t\in I$,} \nonumber
\end{equation} 
then the representation (\ref{eq:optimization_T}) gives $\mathcal{F}(T^\ast)=0$, and hence $T^\ast$ turns out to be $\mathcal{F}$-optimal. 

In this respect, when the financial exposure $f$ is market valued, we may regard any $\mathcal{F}$-optimal $T^\ast\in\mathscr{T}$ as a notion of the related duration.

\

\begin{figure}
    \centering
        \includegraphics[width=0.58\textwidth]{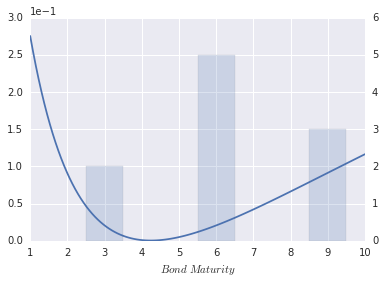}
    \caption{Visual representation of the functional (\ref{eq:blue_line_bond}) when considering the model (\ref{eq:g++2_factors}) for the short-rate dynamics. More precisely, the line represents the functional (\ref{eq:blue_line_bond}) for any bond maturity $T\in (1,10)$, the value of which is assessed by the left side vertical axis. The right side vertical axis assesses the nominals of the bonds within the fixed portfolio, that are represented by the vertical bars in the figure. The model (\ref{eq:g++2_factors}) has been simulated with the parameters $a_1=0.12$, $a_2=0.1$, $\sigma_1=0.16$ and $\sigma_2=0.15$. For sake of simplicity, we considered the case $\varphi(t)=\varphi_1 + \varphi_2 t$, for any $t\in I$, where  $\varphi_1=0.01$ and $\varphi_2=0.15$. Finally, we set $\varrho=-0.01$.} \label{fig:bond_replication}
\end{figure}

%
%

\noindent{\emph{Numerical example}.}
Figure \ref{fig:bond_replication} provides a visual representation of the functional (\ref{eq:optimization_T}), when considering a fixed portfolio constituted by three bonds with different nominals and maturities, and whose risk-neutral discounted value at time $t\in I$ is given by
\begin{equation}
f(t,p_t) = \sum_{k=1,2,3} \alpha_k p_t(T_k), \nonumber
\end{equation}
where $\alpha_k\in\mathbb{R}$ and $T_k\in\mathscr{T}$, for $k=1,2,3$.

In particular, the vertical bars stand for the nominal $\alpha_k$ of the bond related to any maturity $T_k$. The value of any $\alpha_k$ is assessed by the scale on the right side vertical axis of the figure.

On the other hand, the line shows the behaviour of the functional, 
\begin{equation} \label{eq:blue_line_bond}
\mathcal{F}(T)\triangleq\mathbb{E}\bigg \lbrace \int_I\bigg\Vert \sum_{k=1,2,3}\alpha_k\sigma_t(T_k) -\alpha(T)\sigma_t(T) \bigg\Vert_{H}^2 (1-t)dt\bigg\rbrace,  \hspace{1cm} \text{for $T\in\mathscr{T}$}. 
\end{equation}
that is derived form (\ref{eq:optimization_T}) and whose values are assessed by the scale reported on the left side vertical axis of the graph.

We considered a correlated two-additive-factor Gaussian model governing the evolution of the short-rate process. More precisely, let  $(W_1,W_2)$ be a two-dimensional Wiener process with some instantaneous correlation $\varrho \in (-1,1)$. 
The risk-neutral dynamics of the instantaneous-short-rate $r\triangleq \lbrace r_t: t\in I \rbrace$ is given by 
\begin{equation}
r_t \triangleq \chi_{1,t} + \chi_{2,t} + \varphi(t), \hspace{1cm} \text{for any $t\in I$}, \nonumber
\end{equation}
where for $i=1,2$ the process $\chi_i\triangleq \lbrace \chi_{i,t}:t\in I \rbrace$ is given by the following Vasicek-type model,
\begin{equation} \label{eq:g++2_factors}
\chi_{i,t} = \chi_{i,0} - \int_0^t a_i \chi_{i,t} dt +\int_0^t \sigma_i dW_{i,t}, \hspace{1cm} \text{for any $t\in I$,}
\end{equation}
for some given positive parameters $a_i$ and $\sigma_i$ jointly with the initial condition $\chi_{i,0}\in \mathbb{R}^+$, and where $t\in I\mapsto\varphi(t)$ is some deterministic function.
We refer to Section 4.2. in \cite{brigo} for all the details. 
 
This model is analytically manageable enough to write down an explicit formula for the discounted price curve in terms of the short-rate factors (\ref{eq:g++2_factors}), and hence to determine an analytical expression for the process $\sigma$ by considering standard calculus techniques. 
For any $T\in\mathscr{T}$, the value $\mathcal{F}(T)$ has been obtained as a combination of the Monte Carlo simulation of the factors (\ref{eq:g++2_factors}) jointly with the discretization of the integral related to the time variable $t\in I$.
All the details of the simulation are collected in the caption of Figure \ref{fig:bond_replication}.


\

\section{Life insurance model points portfolio.}  \label{sec:life_insurance}
European insurance companies are required to assess the value of their portfolios as well as to carry on the sensitivity analysis aimed at demonstrating the compliance of their models, by considering the cash flow projections on a policy-by-policy approach. Besides, they are allowed to compute these projections by replacing any homogeneous group of policies with some suitable representative contracts, usually known as the related model points. This approach is aimed at speeding up this process, that is usually carried out on a daily basis, since the complexity of the entire portfolio may lead to long computational times. This procedure is permitted under suitable conditions in such a way that the inherent risk structure of the original portfolio is not misrepresented. We refer to \cite{eiopa} for any detail.

Through this section, we assess the problem of determining an optimal model points portfolio related to some fixed policies portfolio.

\

\noindent{\emph{Framework and notations.}}
Life insurance contracts usually provide either a stream of cash flows during the lifetime of the policyholder or a unique lump sum benefit that is paid upon his death under certain conditions. 
The contract is in force when the policyholder pays a specific premium, that depending upon the case may be either regular or as one initial lump sum. 

We do not consider the specific characteristics of the contract and we assume to deal with some idealized life insurance policy that is unaffected by credit risk, i.e. the insurance company always guarantees the entire benefit that is provided for in the contract. Further, we do not analyse the revenues received by the insurance company and thus we do not take into account the premium of the contract such as any further payment which is the responsibility of the client.

Let $\mathscr{X}$ be some set in which any element $x\in \mathscr{X}$ represents a contract. To fix the ideas, any $x\in \mathscr{X}$ may collect typical characteristics such as age and gender of the policy owner, cancellation option, etc. In other terms, any element $x\in\mathscr{X}$ identifies a class of policies that is labelled by considering certain suitable characteristics.

We suppose that there exists a market value of the contract relative to any $x\in \mathscr{X}$.

\
%

\noindent{\emph{Model points portfolios.}}
Let $U$ be a UMD Banach space that is represented by some space of real valued functions defined on $\mathscr{X}$.  We write $U^\ast$ to denote the topological dual of $U$. The duality pairing between $U$ and $U^\ast$ is denoted by $\langle\cdot,\cdot\rangle_U$. 
Further, we assume the evaluation functional $\delta_x$ to be continuous and bounded on $U$, for any $x\in \mathscr{X}$.

We write $\mathscr{X}^\ast$ to denote the subsets of $U^\ast$ defined by setting 
\begin{equation}\label{eq:X^star}
\mathscr{X}^\ast\triangleq \overline{\text{span}\lbrace \delta_x :  x\in \mathscr{X}\rbrace},
\end{equation}
where the closure in (\ref{eq:X^star}) is to be understood with respect to the topology of $U^\ast$.

Here and in the sequel, we fix $\mathscr{Y}^\ast\subseteq \mathscr{X}^\ast$.

\begin{definition}
Let $v\in \mathscr{X}^\ast$ and consider a $U$-valued process $z\triangleq \lbrace z_t:t\in I \rbrace$. For any $w\in \mathscr{Y}^\ast$, the process $V_t(w)\triangleq \lbrace V_t(w):t\in I\rbrace$ defined by setting
\begin{equation}\label{eq:financial_exposure_model_point}
V_t(w) = \langle v, z_t \rangle_U- \langle w, z_t\rangle_U, \hspace{1cm} \text{for any $t\in I$,}
\end{equation}
is said to be the \emph{discrepancy process} between $v$ and $w$ relative to $z$.
\end{definition}
If not otherwise specified, where we fix a $U$-valued process $z\triangleq \lbrace z_t:t\in I \rbrace$, for any $v\in \mathscr{X}^\ast$ and $w\in \mathscr{Y}^\ast$, we always write $V(w)$ to denote the discrepancy process between $v$ and $w$ relative to $z$. 

For our application, we regard any process $z=\lbrace z_t : t\in I \rbrace$ as a dynamics for the discounted value of some specific life insurance contract. More precisely, we understand $z_t(x)$ as the risk-neutral discounted value of the contract at time $t\in I$, when it is referred to $x\in \mathscr{X}$. 
Besides, similarly to the framework of Section \ref{sec: Interest Rates}, it is worth to be noted that 
\begin{equation}
z_t(x)=\langle \delta_x,z_t \rangle_U, \hspace{1cm} \text{for any $t\in I$ and $x\in \mathscr{X}$.} \nonumber
\end{equation} 
Hence, we may regard $\delta_x$ as a portfolio composed by one policy related to $x\in \mathscr{X}$ and any $v\in \mathscr{X}^\ast$ as a portfolio composed by different policies.

On the other hand, we regard any $w\in \mathscr{Y}^\ast$ as a model points portfolio.

\begin{lemma}
Let $z\triangleq \lbrace z_t :t\in I \rbrace$ be a $U$-valued process such that $\sup_{t\in I}\Vert z_t\Vert^2_{L^2(\Omega;U)}< \infty$ and fix $v\in\mathscr{X}^\ast$.  Then, $V_t(w)\in L^2(\Omega)$, for any $t\in I$ and $w\in \mathscr{Y}^\ast$, with
\begin{equation}
\sup_{t\in I} \Vert V_t(w) \Vert^2_{L^2(\Omega)} < \infty. \nonumber
\end{equation}
\begin{proof}
Fix $w\in \mathscr{Y}^\ast$ and notice that 
\begin{equation}
\sup_{t\in I}\Vert V_t(w) \Vert_{L^2(\Omega)}^2 = \sup_{t\in I}\Vert \langle v-w,z_t \rangle_U \Vert_{L^2(\Omega)}^2  \leq \Vert v-w \Vert_{U^\ast}^2 \sup_{t\in I} \Vert z_t \Vert_{L^2(\Omega;U)}^2.  \nonumber
\end{equation}
\end{proof}
\end{lemma}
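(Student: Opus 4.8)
The plan is to exploit the bilinearity of the duality pairing $\langle\cdot,\cdot\rangle_U$ to collapse $V_t(w)$ into a single pairing, and then to apply the elementary operator-norm bound together with the standing hypothesis on $z$. This is the direct analogue of Lemma~\ref{le:integrability_F}, but considerably simpler, since here no nonlinear function $f$ intervenes and hence no $f(\cdot,0)$ type term must be controlled separately.

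First I would observe that both $v$ and $w$ lie in $U^\ast$: indeed $v\in\mathscr{X}^\ast\subseteq U^\ast$ by the very definition (\ref{eq:X^star}), and likewise $w\in\mathscr{Y}^\ast\subseteq\mathscr{X}^\ast\subseteq U^\ast$. Therefore $v-w$ is a well-defined element of $U^\ast$, and by linearity of the pairing in its first argument the discrepancy process admits the representation
\begin{equation}
V_t(w)=\langle v,z_t\rangle_U-\langle w,z_t\rangle_U=\langle v-w,z_t\rangle_U,\qquad\text{for any }t\in I. \nonumber
\end{equation}

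Next I would invoke the duality bound $\vert\langle v-w,z_t\rangle_U\vert\leq\Vert v-w\Vert_{U^\ast}\Vert z_t\Vert_U$, valid a.s. for every $t\in I$. Squaring and taking expectation gives the pointwise-in-$t$ estimate
\begin{equation}
\Vert V_t(w)\Vert^2_{L^2(\Omega)}=\mathbb{E}\lbrace\vert\langle v-w,z_t\rangle_U\vert^2\rbrace\leq\Vert v-w\Vert^2_{U^\ast}\,\mathbb{E}\lbrace\Vert z_t\Vert^2_U\rbrace=\Vert v-w\Vert^2_{U^\ast}\Vert z_t\Vert^2_{L^2(\Omega;U)}, \nonumber
\end{equation}
which already shows $V_t(w)\in L^2(\Omega)$ for each fixed $t\in I$. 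Finally, taking the supremum over $t\in I$ and using the assumption $\sup_{t\in I}\Vert z_t\Vert^2_{L^2(\Omega;U)}<\infty$ yields the claimed uniform bound $\sup_{t\in I}\Vert V_t(w)\Vert^2_{L^2(\Omega)}\leq\Vert v-w\Vert^2_{U^\ast}\sup_{t\in I}\Vert z_t\Vert^2_{L^2(\Omega;U)}<\infty$.

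I do not anticipate any genuine obstacle: the single point deserving a word of justification is the membership $v-w\in U^\ast$, which guarantees that the operator norm $\Vert v-w\Vert_{U^\ast}$ is finite and that the pairing is meaningful; everything else is the Cauchy--Schwarz-type inequality for the duality pairing combined with the finiteness hypothesis on $z$.
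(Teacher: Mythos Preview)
Your proof is correct and follows exactly the same approach as the paper: collapse $V_t(w)$ to $\langle v-w,z_t\rangle_U$ by linearity, apply the duality bound $\vert\langle v-w,z_t\rangle_U\vert\leq\Vert v-w\Vert_{U^\ast}\Vert z_t\Vert_U$, take expectation and then supremum. The paper's proof is simply the one-line version of what you have spelled out.
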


In view of our applications, we recast Definition \ref{de:risk_functional_F} and Definition \ref{de:optimality} as follows. 
\begin{definition} 
Let $z\triangleq \lbrace z_t : t\in I \rbrace$ be an $U$-valued process such that 
\begin{equation}\label{eq:z_sup_t_norm}
\sup_{t\in I}\Vert z_t\Vert^2_{L^2(\Omega;U)}< \infty.
\end{equation}
and fix $v\in\mathscr{X}^\ast$.
We refer to the functional $\mathcal{V}:\mathscr{Y}^\ast\rightarrow \mathbb{R}$ defined by 
\begin{equation}\label{eq:optimization_problem_mean_model_points}
\mathcal{V}(w)\triangleq \int_I \mathbb{E}\lbrace\vert V_t(w) - \mathbb{E}V_t(w) \vert^2\rbrace dt, \hspace{1cm} \text{for any $w\in \mathscr{Y}^\ast$,}
\end{equation}
as the \emph{model points risk functional} relative to $z$ induced by $v$ over the set $\mathscr{Y}^\ast$. 
Moreover, we call \emph{optimal model points portfolio} any $\mathcal{V}$-optimal element $w^\ast\in \mathscr{Y}^\ast$.
\end{definition}

If not otherwise specified, where we fix a process $z\triangleq \lbrace z_t: t \in I \rbrace$ satisfying the condition (\ref{eq:z_sup_t_norm}) and a portfolio $v\in \mathscr{X}^\ast$ is given, we always write $\mathcal{V}$ to denote the model points risk functional relative to $z$ induced by $v$ over $\mathscr{Y}^\ast$.

\

\noindent{\emph{Model points risk functional representation.}}
Let $E$ and $H$ be the spaces as considered in Section \ref{sec: Interest Rates} and thus assume the process $p=\lbrace p_t : t\in I \rbrace$ defined by the identity (\ref{eq:discounted_price_dynamics}) to model the risk-neutral dynamics of the discounted price curve.
Since the value of any policy is required to be estimated by considering the probability weighted average of the related future cash-flows, it is reasonable to write 
\begin{equation}
z_t=\zeta(t,p_t) \hspace{1cm} \text{for any $t\in I$,} \nonumber
\end{equation} 
for some function $\zeta:I\times E\rightarrow U$.

We discuss this approach later on.

\begin{lemma}\label{le:z_L2}
Fix a function $\zeta:I\times E \rightarrow U$ and denote by $p$ the process (\ref{eq:discounted_price_dynamics}). If $\zeta$ is of class $\mathscr{C}_b^{1,2}$, then $\zeta(t,p_t)\in L^2(\Omega;U)$ is well defined, for any $t\in I$, with
\begin{equation}
\sup_{t\in I} \Vert \zeta(t,p_t) \Vert^2_{L^2(\Omega;U)} < \infty. \nonumber
\end{equation} 
\begin{proof}
First, notice that since $\zeta(\cdot,0)$ is assumed to be continuous on $I$, we have that
\begin{equation}
\Vert \zeta(\cdot,0) \Vert_\infty\triangleq \sup_{t\in I} \Vert \zeta(t,0)\Vert_U < \infty \nonumber.
\end{equation}
Besides, since $\zeta$ is assumed to be of class $\mathscr{C}_b^{1,2}$, for any $t\in I$, the following inequalities hold true a.s., 
\begin{equation}
\Vert \zeta(t,p_t)\Vert_U^2 \leq  \Vert \nabla_2 \zeta\Vert^2_\infty \Vert p_t \Vert^2_E + \vert \zeta(t,0)\vert^2< \infty, \nonumber
\end{equation}
and hence
\begin{equation}
\sup_{t\in I}\Vert \zeta(t,p_t)\Vert_U^2 \leq  \Vert \nabla_2 \zeta\Vert^2_\infty \sup_{t\in I}\Vert p_t \Vert^2_E + \Vert \zeta(\cdot,0) \Vert^2_\infty < \infty, \nonumber
\end{equation}
since $\sup_{t\in I}\Vert p_t \Vert^2_E < \infty$ due to Lemma \ref{le:xi_well_defined}, applied to the process $p$.  
\end{proof}
\end{lemma}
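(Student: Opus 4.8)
The plan is to obtain an a.s. linear growth bound for $\zeta$ in its second argument and then transfer it to the process $p$ by taking expectations, closing the estimate with Lemma \ref{le:xi_well_defined}. First I would record the two boundedness facts furnished by the hypothesis $\zeta\in \mathscr{C}_b^{1,2}$. Since $\zeta$ is continuous, the map $t\mapsto \zeta(t,0)$ is continuous on the compact unit interval $I$, hence bounded; set $\Vert \zeta(\cdot,0)\Vert_\infty \triangleq \sup_{t\in I}\Vert \zeta(t,0)\Vert_U < \infty$. The defining condition of class $\mathscr{C}_b^{1,2}$ (stated in Section \ref{sec:optimlaity} for maps into an arbitrary UMD space, here applied with target $U$) gives $\Vert \nabla_2\zeta\Vert_\infty<\infty$, where $\nabla_2\zeta$ is the Fr\'echet derivative in the second variable, viewed as an $\mathcal{L}(E,U)$-valued map.

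Next I would invoke the mean value inequality for Banach-space-valued maps. Integrating $\nabla_2\zeta(t,\cdot)$ along the segment from $0$ to $x$ yields
\[
\Vert \zeta(t,x)\Vert_U \le \Vert \zeta(t,0)\Vert_U + \Vert \nabla_2\zeta\Vert_\infty\,\Vert x\Vert_E, \qquad t\in I,\ x\in E.
\]
Evaluating at $x=p_t$ and squaring (using $(a+b)^2\le 2a^2+2b^2$) gives, a.s.,
\[
\Vert \zeta(t,p_t)\Vert_U^2 \le 2\Vert \nabla_2\zeta\Vert_\infty^2\,\Vert p_t\Vert_E^2 + 2\Vert \zeta(\cdot,0)\Vert_\infty^2 .
\]
Continuity of $\zeta$ together with the strong measurability of $p_t$ ensures that $\omega\mapsto\zeta(t,p_t(\omega))$ is a well defined $U$-valued random variable, and the bound above shows it belongs to $L^2(\Omega;U)$.

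Finally I would take expectations and then the supremum over $t\in I$, obtaining
\[
\sup_{t\in I}\Vert \zeta(t,p_t)\Vert_{L^2(\Omega;U)}^2 \le 2\Vert \nabla_2\zeta\Vert_\infty^2\,\sup_{t\in I}\Vert p_t\Vert_{L^2(\Omega;E)}^2 + 2\Vert \zeta(\cdot,0)\Vert_\infty^2 .
\]
The right-hand side is finite: the process $p$ of (\ref{eq:discounted_price_dynamics}) is precisely the dynamics (\ref{eq:diffusive_dynamics_general}) with $b\equiv 0$, so Lemma \ref{le:xi_well_defined} applies to $p$ and yields $\sup_{t\in I}\Vert p_t\Vert_{L^2(\Omega;E)}^2<\infty$.

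I do not expect a genuine obstacle here, as the argument is essentially a growth estimate. The only points requiring a little care are (i) recognizing that the $\mathscr{C}_b^{1,2}$ hypothesis, formulated in Section \ref{sec:optimlaity} for maps into a general UMD space $D$, supplies the bounded-derivative condition with $D=U$ that drives the linear growth bound, and (ii) verifying that $p$ fits the framework of Section \ref{sec_preliminaries} (strongly $\mathscr{G}^W_0$-measurable $p_0\in\mathbb{H}^{1,2}(E)$ and $\sigma\in\mathbb{H}^{2,2}(L^2(I;\gamma(H,E)))$, with $b\equiv0$) so that Lemma \ref{le:xi_well_defined} is indeed available for it.
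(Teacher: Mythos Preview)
Your proposal is correct and follows essentially the same route as the paper: bound $\Vert\zeta(\cdot,0)\Vert_\infty$ by continuity on the compact interval $I$, use the bounded Fr\'echet derivative $\Vert\nabla_2\zeta\Vert_\infty$ to obtain a linear growth estimate in $\Vert p_t\Vert_E$, and then close the argument with Lemma~\ref{le:xi_well_defined} applied to $p$ (which is the dynamics (\ref{eq:diffusive_dynamics_general}) with $b\equiv0$). If anything, your version is slightly more careful, as you make the mean value inequality and the squaring step (with the factor $2$) explicit.
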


\begin{proposition}\label{pr:model_point_characterization}
Fix $v\in \mathscr{X}^\ast$ and let $\zeta: I\times E\rightarrow U$ be a function of class $\mathscr{C}_b^{1,2}$. If $\zeta$ is a BS-function relative to $p$, where the process $p$ is given by (\ref{eq:discounted_price_dynamics}), and we set $z_t=\zeta(t,p_t)$, for any $t\in I,$ then
\begin{equation}\label{eq:model_point_characterization}
\mathcal{V}(w) = \mathbb{E}\bigg\lbrace\int_I\Vert \langle v - w,\nabla_2 \zeta(t,p_t)\sigma_t \rangle_U \Vert_{H}^2 (1-t)dt\bigg\rbrace, \hspace{1cm} \text{for any $w\in \mathscr{Y}^\ast$.}
\end{equation}
\end{proposition}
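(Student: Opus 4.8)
The plan is to reduce the statement to Theorem~\ref{pr:optimal_strategy}(ii), applied with the process $\xi$ identified with $p$ (so that $b_t=0$ a.s.\ throughout). The key observation is that the discrepancy process in (\ref{eq:financial_exposure_model_point}) is scalar-valued and may be written as $V_t(w)=\langle v-w,\zeta(t,p_t)\rangle_U=f_w(t,p_t)$, where for a \emph{fixed} $w\in\mathscr{Y}^\ast$ one sets
\[
f_w(t,x)\triangleq \langle v-w,\zeta(t,x)\rangle_U, \qquad (t,x)\in I\times E.
\]
Taking the trivial P-set $\Phi=\lbrace 0\rbrace$ relative to $p$ (conditions (i)--(iii) of Definition~\ref{de:Phi} hold vacuously for the zero process), one has $F_t(0)=f_w(t,p_t)=V_t(w)$ and hence $\mathcal{F}(0)=\mathcal{V}(w)$. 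It then remains to check that $f_w$ satisfies the hypotheses of Theorem~\ref{pr:optimal_strategy}(ii) and to match the resulting integrand with the one in (\ref{eq:model_point_characterization}).

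First I would verify that $f_w$ is of class $\mathscr{C}_b^{1,2}$. Since $f_w$ is the composition of $\zeta$ with the fixed bounded linear functional $\langle v-w,\cdot\rangle_U\in U^\ast$, the chain rule gives $\nabla_1 f_w(t,x)=\langle v-w,\nabla_1\zeta(t,x)\rangle_U$, $\nabla_2 f_w(t,x)=\langle v-w,\nabla_2\zeta(t,x)(\cdot)\rangle_U$ and $\nabla_2^2 f_w(t,x)=\langle v-w,\nabla_2^2\zeta(t,x)(\cdot,\cdot)\rangle_U$, all continuous on $I\times E$; moreover $\Vert\nabla_2 f_w\Vert_\infty\leq\Vert v-w\Vert_{U^\ast}\Vert\nabla_2\zeta\Vert_\infty<\infty$ since $\zeta\in\mathscr{C}_b^{1,2}$, so $f_w\in\mathscr{C}_b^{1,2}$. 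Next I would transfer the BS-property: using the continuity of $\langle v-w,\cdot\rangle_U$ to pass it inside the convergent series defining the trace, one gets
\[
\nabla_1 f_w(t,p_t)+\tfrac{1}{2}\text{tr}(\nabla_2^2 f_w(t,p_t);\sigma_t)=\big\langle v-w,\ \nabla_1\zeta(t,p_t)+\tfrac{1}{2}\text{tr}(\nabla_2^2\zeta(t,p_t);\sigma_t)\big\rangle_U=0 \quad\text{a.s.,}
\]
the last equality holding because $\zeta$ is a BS-function relative to $p$ in the sense of Definition~\ref{de:BS-function}. Thus $f_w$ is a BS-function relative to $p$.

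Since $b_t=0$ a.s., Theorem~\ref{pr:optimal_strategy}(ii) now applies to $f_w$ and the P-set $\lbrace 0\rbrace$, yielding
\[
\mathcal{V}(w)=\mathcal{F}(0)=\mathbb{E}\bigg\lbrace\int_I \Vert \nabla_2 f_w(t,p_t)\sigma_t\Vert_H^2\,(1-t)\,dt\bigg\rbrace.
\]
The final and most delicate step is the identification of the integrand. By the ideal property (Lemma~\ref{le:ideal_property}) the operator $\nabla_2\zeta(t,p_t)\sigma_t$ belongs to $\gamma(H,U)$, and evaluating both $H$-valued dual pairings on an orthonormal basis $h_1,h_2,\ldots$ of $H$ gives, for each $n$,
\[
(\nabla_2 f_w(t,p_t)\sigma_t)h_n=\langle v-w,\ \nabla_2\zeta(t,p_t)\sigma_t h_n\rangle_U,
\]
whence $\nabla_2 f_w(t,p_t)\sigma_t=\langle v-w,\nabla_2\zeta(t,p_t)\sigma_t\rangle_U$ as elements of $H$. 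Substituting this identity into the displayed expression for $\mathcal{V}(w)$ produces exactly (\ref{eq:model_point_characterization}). I expect this matching of the $E$-pairing $\nabla_2 f_w(t,p_t)\sigma_t$ with the $U$-pairing $\langle v-w,\nabla_2\zeta(t,p_t)\sigma_t\rangle_U$ to be the main (if modest) obstacle, as it is the only place where the $\gamma$-radonifying structure and the ideal property genuinely enter; the regularity and BS-function transfers are routine consequences of composing with a fixed bounded functional.
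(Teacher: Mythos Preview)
Your argument is correct. The paper, however, reaches the same conclusion by a different reduction to Theorem~\ref{pr:optimal_strategy}(ii): instead of keeping $\xi=p$ and absorbing $v-w$ into the payoff via $f_w(t,x)=\langle v-w,\zeta(t,x)\rangle_U$, it takes $\xi_t=z_t=\zeta(t,p_t)$ as the underlying $U$-valued process (using Lemma~\ref{le:BL-function} to obtain the driftless representation with new diffusion $\nabla_2\zeta(s,p_s)\sigma_s$), chooses the linear $f(t,u)=\langle v,u\rangle_U$, and lets the P-set consist of the constant processes $\phi_t\equiv w$. In the paper's route the BS-property of $f$ and the identification of the integrand are immediate because $f$ is linear, but one must argue that $z$ again fits the framework of Section~\ref{sec_preliminaries}; in your route you stay with the original $E$-valued process $p$---so no new integrability or Malliavin checks on $z$ are needed---at the modest price of transferring the $\mathscr{C}_b^{1,2}$ and BS properties through the fixed functional $\langle v-w,\cdot\rangle_U$ and then matching the $E$- and $U$-valued pairings, which you handle correctly.
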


It is worth to be noted that, since the process $\sigma$ takes values in  $\gamma(H,E)$, Lemma \ref{le:ideal_property} gives that the process $\nabla_2 \zeta (t,p_t)\sigma_t$, for $t\in I$, takes values in $\gamma(H,U)$. Thus, in the representation (\ref{eq:model_point_characterization}) we regard $\langle\cdot,\cdot \rangle_U$ as the $H$-valued pairing between $\gamma(H,U)$ and $U^\ast$.

\begin{proof}[Proof of Proposition \ref{pr:model_point_characterization}]
First, notice that $\sup_{t\in I}\Vert z_t\Vert^2_{L^2(\Omega;U)}< \infty$ thanks to Lemma \ref{le:z_L2}, since the function $\zeta: I\times E\rightarrow U$ is assumed to be of class $\mathscr{C}_b^{1,2}$. 

On the other hand, notice that since $\zeta$ is assumed to be a BS-function relative to $p$, and the process $p$ is given by (\ref{eq:discounted_price_dynamics}), Lemma \ref{le:BL-function} gives that a.s.
\begin{equation}\label{eq:price_surface}
\zeta(t,p_t) = \zeta(0,p_0) + \int_0^t \nabla_2 \zeta(s,p_s)\sigma_s dW_s, \hspace{.5cm} \text{for any $t\in I$.}  
\end{equation}

Let $\Phi$ to be a P-set relative to $z$ defined in such a way that for any $\phi\in \Phi$ there exists $w\in\mathscr{Y}^\ast$ with $\phi_t= w$, a.s., for any $t\in I$. 
Then, the result follows from the statement (ii) in Theorem \ref{pr:optimal_strategy} by setting $\xi_t=\zeta(t,p_t)$ and $f(t,u)= \langle v, u \rangle_U$, for any $u\in U$. In this respect, notice that $b_t=0$ a.s., for any $t\in I$, and $f$ appears to be a BS-function relative to $z$. 
\end{proof}

\ 

\noindent{\emph{Whole life insurance.}}
Throughout, we discuss the problem of the model points selection when dealing with a portfolio of whole life insurance policies, i.e. a contract that provides for a one unit benefit on the death of the policy owner. 

In view of our application, we assume $\mathscr{X}$ to coincide with some closed interval of $\mathbb{R}^+$ and we regard any $x\in\mathscr{X}$ as the age of the policy owner at time $t=0$. 
Moreover, here and in the sequel of this part we define $U\triangleq \mathbb{W}^{1,2}(\mathscr{X})$ to be the space of absolutely continuous functions $u:\mathscr{X}\rightarrow \mathbb{R}$ such that $\Vert u' \Vert_{L^2(\mathscr{X})}<\infty$, where we denoted by $u'$ the weak derivative of $u$.
In this respect, recall that $U$ is a Hilbert space, when endowed with the norm
\begin{equation}
\Vert u \Vert_{U} \triangleq \lbrace \Vert u \Vert^2_{L^2(\mathscr{X})} + \Vert u'\Vert^2_{L^2(\mathscr{X})} \rbrace^{1/2}, \hspace{1cm} \text{for any $u\in U$,} \nonumber
\end{equation}
such that the evaluation functional $\delta_x(u)\triangleq u(x)$, for $u\in U$, is bounded for any $x\in\mathscr{X}$. 

We write $\mu(s,x+s)$ to denote the force of mortality relative to $x\in \mathscr{X}$ at any time $s\geq 0$. Moreover, we do not consider those deaths that occur during the first year by setting
\begin{equation}\label{eq:mortality_condition}
\mu(s,x+s)=0, \hspace{1cm}\text{for any $x\in \mathscr{X}$ and for $s\in I$,} 
\end{equation} 
and hence defining the survivor index as
\begin{equation}\label{eq:survivir_index}
S(x,T)\triangleq \exp \bigg\lbrace - \int_1^T \mu(s,x+s)ds \bigg\rbrace, \hspace{1cm} \text{for any $x\in \mathscr{X}$ and $T\in \mathscr{T}$.}
\end{equation}
We regard $S(x,T)$ as the proportion of individuals that are $x$-aged at time $t=0$ and that survive to age $x+T$. 

It is worth to be highlighted that the condition (\ref{eq:mortality_condition}) is convenient for our application, since it implies that the policy portfolio does not change during the time interval $I$ due to the death of the policy owners. Such an assumption is acceptable, since the events occurring within the first year only cause a minimal impact on the performance of the overall portfolio.

The discounted value of a whole life insurance policy relative to $x\in\mathscr{X}$ at time $t\in I$ may be written as
\begin{equation}\label{eq:while-life-insurance}
z_t(x)= \int_\mathscr{T}  S(x,T) \mu(T,x+T)
p_t(T) dT. 
\end{equation}
The following Lemma tell us that, under mild conditions, the identity (\ref{eq:while-life-insurance}) provides a well defined $U$-valued process.
\begin{lemma}
Let $w:\mathscr{T}\rightarrow \mathbb{R}^+$ be an increasing function satisfying (\ref{eq:condition_w_H}) and set $E=\mathcal{H}_w$ according to Definition \ref{de:H_w}. 
Further, assume the function $x\in \mathscr{X}\mapsto \mu(s,x+s)$ to be continuously differentiable, for any $s\in \mathscr{T}$. 
If $w\geq 1$ everywhere on $\mathscr{T}$ and the following condition holds
\begin{equation}\label{eq:S_bar_condition}
\sup_{T\in \mathscr{T}} w(T)^{-1} \int_{\mathscr{X}} \vert \partial_x S(x,T) \vert^2 dx < \infty, 
\end{equation}
then the process $z\triangleq \lbrace z_t : t\in I \rbrace$ given by the identity (\ref{eq:while-life-insurance}) is $U$-valued, with
\begin{equation}\label{eq:sup_z_norm_lemma}
\sup_{t\in I}\Vert z_t\Vert^2_{L^2(\Omega;U)}< \infty. 
\end{equation}
\end{lemma}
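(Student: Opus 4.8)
The plan is to exhibit $z_t$ as the image of $p_t$ under a single deterministic bounded operator and thereby reduce the claim to the bound $\sup_{t\in I}\Vert p_t\Vert^2_{L^2(\Omega;E)}<\infty$, which holds by Lemma \ref{le:xi_well_defined} applied to the process $p$. Concretely, I would define the integral operator $\Theta:E\to U$ by $(\Theta g)(x)\triangleq\int_\mathscr{T} S(x,T)\mu(T,x+T)g(T)\,dT$, so that $z_t=\Theta p_t$ for every $t\in I$, since $p_t(T)=\langle\delta_T,p_t\rangle_E$. If $\Theta\in\mathcal{L}(E,U)$, then $\Vert z_t\Vert_{L^2(\Omega;U)}\le\Vert\Theta\Vert\,\Vert p_t\Vert_{L^2(\Omega;E)}$, and taking the supremum over $t\in I$ yields both that $z_t$ is $U$-valued and the estimate (\ref{eq:sup_z_norm_lemma}). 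The whole problem thus collapses to the deterministic operator bound $\Vert\Theta g\Vert_U\lesssim\Vert g\Vert_{\mathcal{H}_w}$.

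The computational engine is the identity $S(x,T)\mu(T,x+T)=-\partial_T S(x,T)$, read off from (\ref{eq:survivir_index}). I would use it to integrate by parts in $T$: since $S(x,1)=1$, and $S(x,T)\to0$ and $g(T)\to0$ as $T\to\infty$ (the latter because $g\in\mathcal{H}_w$), the boundary contributions collapse to the value at $T=1$, giving $(\Theta g)(x)=g(1)+\int_\mathscr{T} S(x,T)g'(T)\,dT$. For the $L^2(\mathscr{X})$ part of the $U$-norm I would then use $0\le S\le1$ together with Cauchy--Schwarz in $T$ split by the weight $w$, namely $\big|\int_\mathscr{T} S g'\,dT\big|^2\le\big(\int_\mathscr{T} w^{-1}\,dT\big)\big(\int_\mathscr{T} |g'|^2 w\,dT\big)$; the first factor is finite by (\ref{eq:condition_w_H}) and the second is dominated by $\Vert g\Vert^2_{\mathcal{H}_w}$. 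Since $\mathscr{X}$ is a bounded interval, the resulting pointwise bound $|(\Theta g)(x)|\lesssim\Vert g\Vert_{\mathcal{H}_w}$ integrates to $\Vert\Theta g\Vert_{L^2(\mathscr{X})}\lesssim\Vert g\Vert_{\mathcal{H}_w}$.

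The delicate term is the weak $x$-derivative. Differentiating under the integral sign (which must be justified from the assumed continuous differentiability of $x\mapsto\mu(s,x+s)$, whence that of $x\mapsto S(x,T)$), the constant $g(1)$ drops out and $\partial_x(\Theta g)(x)=\int_\mathscr{T}\partial_x S(x,T)g'(T)\,dT$. Applying Minkowski's integral inequality in $x$ and then the sup-bound of (\ref{eq:S_bar_condition}), which says precisely that $\Vert\partial_x S(\cdot,T)\Vert_{L^2(\mathscr{X})}\lesssim w(T)^{1/2}$, I would reduce the derivative estimate to the finiteness of $\int_\mathscr{T}\int_\mathscr{X}|\partial_x S(x,T)|^2 w(T)^{-1}\,dx\,dT$, and would obtain $\big\Vert\partial_x(\Theta g)\big\Vert_{L^2(\mathscr{X})}\lesssim\Vert g\Vert_{\mathcal{H}_w}$ once that integral is shown finite.

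The main obstacle is exactly this derivative estimate: first, rigorously justifying differentiation under the integral sign and the vanishing of the boundary terms in the integration by parts (using the decay of $S$ and $g$ at infinity and $S(x,1)=1$); and second, securing the weighted integrability in $T$. Condition (\ref{eq:S_bar_condition}) controls $\int_\mathscr{X}|\partial_x S(\cdot,T)|^2\,dx$ only up to the growing factor $w(T)$, so I expect to have to combine it with the integrability $\int_\mathscr{T} w^{-1}\,dT<\infty$ of (\ref{eq:condition_w_H}), the hypothesis $w\ge1$, and the exponential decay of $S$ (hence of $\partial_x S$) in $T$ afforded by (\ref{eq:survivir_index}) and the smoothness of $\mu$, in order to close the bound; everything else is a routine application of Fubini and Cauchy--Schwarz.
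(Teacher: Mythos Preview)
Your approach is essentially the paper's, recast through the single deterministic operator $\Theta$ (indeed, the paper later introduces exactly this operator, calling it $Z$, in the proof of Proposition~\ref{pr:blue_line_life}). The integration by parts via $S(x,T)\mu(T,x+T)=-\partial_T S(x,T)$, the $L^2(\mathscr{X})$ bound using $0\le S\le 1$ together with $m(\mathscr{X})<\infty$, and the final appeal to Lemma~\ref{le:xi_well_defined} for $\sup_{t\in I}\Vert p_t\Vert_{L^2(\Omega;E)}^2<\infty$ all mirror the paper's proof line for line.

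The one substantive divergence is the derivative estimate, and here your Minkowski-then-Cauchy--Schwarz route leads you to need $\int_\mathscr{T} w(T)^{-1}\int_\mathscr{X}|\partial_x S(x,T)|^2\,dx\,dT<\infty$, which, as you correctly flag, does \emph{not} follow from~(\ref{eq:S_bar_condition}) and~(\ref{eq:condition_w_H}) alone. The paper does not separate $\partial_x S$ from $g'$ by Minkowski; instead it asserts the bound
\[
\int_\mathscr{X}\Big[\int_\mathscr{T}\partial_x S(x,T)\,p'_t(T)\,dT\Big]^2dx
\;\le\;
\int_\mathscr{T}\Big[\int_\mathscr{X}|\partial_x S(x,T)|^2\,dx\Big]\,p'_t(T)^2\,dT,
\]
and then applies~(\ref{eq:S_bar_condition}) \emph{inside} the $T$-integral, so that $\int_\mathscr{X}|\partial_x S(\cdot,T)|^2dx\le Cw(T)$ supplies precisely the weight needed to reconstitute $\int_\mathscr{T} p'_t(T)^2w(T)\,dT\le\Vert p_t\Vert_E^2$. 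In other words, the hypothesis is used to \emph{produce} the weight $w(T)$ next to $p'_t(T)^2$, not to cancel it against $w^{-1}$. That said, the displayed inequality is not a consequence of any standard Cauchy--Schwarz or Minkowski argument (a kernel constant in $x$ already violates it), so the paper's own proof is not airtight at exactly the point you identified as delicate; your instinct that additional decay of $S$ in $T$ may be needed to close the bound rigorously is well founded.
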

\begin{proof}
First, notice that since $p_t(T)\rightarrow 0$ a.s., as $T\rightarrow +\infty$, for any $t\in I$, and 
\begin{equation}
S(x,T)\mu(T,x+T) = - \partial_T S(x,T), \hspace{1cm} \text{for any $x\in \mathscr{X}$ and $T\in \mathscr{T}$}, \nonumber
\end{equation}
the identity (\ref{eq:while-life-insurance}) may be recast by invoking integration by part arguments as follows,  
\begin{equation}
z_t(x)= -S(x,1)p_t(1) + \int_{\mathscr{T}} S(x,T)p'_t(T) dT, \hspace{1cm} \text{for any $t\in I$ and $x\in \mathscr{X}$}. \nonumber
\end{equation}
As a direct consequence, for any fixed $t\in I$, 
\begin{eqnarray} \label{eq:z_L2_norm}
\int_\mathscr{X} z_t(x)^2 dx &\leq & \int_{\mathscr{X}} S(x,1)^2p_t(1)^2 dx  + \int_\mathscr{X}\int_{\mathscr{T}} S(x,T)^2p_t'(T)^2 dTdx \nonumber  \\
&\stackrel{\text{(i)}}{\leq} &  m(\mathscr{X})\bigg \lbrace p_t(1)^2 + \int_{\mathscr{T}} p_t'(T)^2 dT \bigg\rbrace,
\end{eqnarray}
where the inequality (i) holds true since $S(x,T)\leq 1$, for any $x\in \mathscr{X}$ and $T\in \mathscr{T}$, and the function $x\in\mathscr{X}\mapsto \mu(s,x+s)$ in continuous, for any $s\in \mathscr{T}$. Moreover, we denoted by $m(\mathscr{X})$ the Lebesgue measure of the interval $\mathscr{X}$. 

Notice that, Lemma \ref{le:xi_well_defined} applied to the process $p$ gives
\begin{equation}\label{eq:p_L2_z_proof}
\sup_{t\in I} \mathbb{E}\Vert p_t \Vert^2_E = \sup_{t\in I} \mathbb{E}\bigg \lbrace p_t(1)^2 + \int_\mathscr{T} p'_t(T)^2w(T)dT \bigg \rbrace < \infty. 
\end{equation}
Thus, inequality (\ref{eq:p_L2_z_proof}) combined with (\ref{eq:z_L2_norm}) gives
\begin{equation}
\sup_{t\in I} \mathbb{E} \Vert z_t \Vert^2_{L^2(\mathscr{X})}= \sup_{t\in I} \mathbb{E}\bigg\lbrace \int_\mathscr{X} z_t(x)^2 dx \bigg \rbrace < \infty. \nonumber
\end{equation}

On the other hand, since the function $x\in \mathscr{X}\mapsto \mu(s,x+s)$ is assumed to be continuously differentiable,
\begin{equation}
T\in\mathscr{T}\mapsto\int_{\mathscr{X}} \vert \partial_x S(x,T) \vert^2 dx, \nonumber
\end{equation} 
is well defined everywhere on $\mathscr{T}$. Further,
\begin{equation} \label{eq:z_prime_parts}
z'_t(x)= -\partial_x S(x,1)p_t(1) + \int_{\mathscr{T}} \partial_x S(x,T)p'_t(T) dT, \hspace{1cm} \text{for any $t\in I$ and $x\in \mathscr{X}$}, 
\end{equation}
and hence, for any fixed $t\in I$,
\begin{equation}\label{eq:z_prime_bound_1}
\int_\mathscr{X} z'_t(x)^2 dx \leq p_t(1)^2 \int_{\mathscr{X}} \vert\partial_x S(x,1)\vert^2 dx  + \int_\mathscr{T}\bigg\lbrace\int_{\mathscr{X}} \vert\partial_x S(x,T)\vert^2 dx \bigg\rbrace p_t'(T)^2 dT. 
\end{equation}
Then, since 
\begin{multline} 
\sup_{t\in I} \mathbb{E} \bigg\lbrace  \int_{\mathscr{T}} \bigg\lbrace\int_{\mathscr{X}} \vert\partial_x S(x,T)\vert^2 dx \bigg\rbrace p_t'(T)^2 dT \bigg\rbrace\leq \\
\sup_{t\in I} \mathbb{E} \bigg\lbrace  \int_{\mathscr{T}} p_t'(T)^2  w(T) dT \bigg\rbrace \sup_{T\in\mathscr{T}} w(T)^{-1} \int_{\mathscr{X}} \vert \partial_x S(x,T) \vert^2 dx < \infty, \nonumber
\end{multline}
thanks to H\"older's inequality again combined with the assumption (\ref{eq:S_bar_condition}) and the condition (\ref{eq:p_L2_z_proof}), according to (\ref{eq:p1})
 we get that (\ref{eq:z_prime_bound_1}) leads to
\begin{equation}
\sup_{t\in I} \mathbb{E}  \Vert z'_t \Vert^2_{L^2(\mathscr{X})}= \sup_{t\in I} \mathbb{E}\bigg\lbrace \int_\mathscr{X} z'_t(x)^2 dx \bigg \rbrace < \infty. \nonumber
\end{equation}
\end{proof}

For some $K\in \mathbb{N}$, we fix $x_1,...,x_K\in \mathscr{X}$ and $\alpha_1,...,\alpha_K \in \mathbb{R}^+$ and we consider the portfolio $v\in \mathscr{X}^\ast$ defined by setting
\begin{equation}\label{eq:life_insurance_fixed_portfolio}
v\triangleq \sum_{k=1}^K \alpha_k \delta_{x_k}.
\end{equation}
On the other hand, assume that any $w\in \mathscr{Y}^\ast$ admits the following representation 
\begin{equation} \label{eq:model_point_representaion_y}
w = \alpha(x)\delta_x
\end{equation}
for some $x\in \mathscr{X}$, where we set, 
\begin{equation} \label{eq:model_point_time_0}
\alpha(x)\triangleq z_0(x)^{-1} \langle v, z_0 \rangle_U. 
\end{equation}
As a direct consequence, we may write $\mathscr{Y}^\ast=\mathscr{X}$, by identifying any $w\in \mathscr{Y}^\ast$ with the element $x\in \mathscr{X}$ that satisfies the representation (\ref{eq:model_point_representaion_y}).

It is worth to be noted that the condition (\ref{eq:model_point_time_0}) is reasonable, since it guarantees that 
\begin{equation}
\langle v,z_0 \rangle_U = \langle w,z_0 \rangle_U \nonumber
\end{equation}
and hence that any model points portfolio $w\in \mathscr{Y}^\ast$ admits the same discounted value as $v\in\mathscr{X}^\ast$, at time $t=0$.

%
%

\begin{proposition}\label{pr:blue_line_life}
Let $v\in \mathscr{X}^\ast$ as defined by the identity (\ref{eq:life_insurance_fixed_portfolio}) and $\mathscr{Y}^\ast$ such that the representation (\ref{eq:model_point_representaion_y}) holds true for any $w\in\mathscr{Y}^\ast$. Further, let $z=\lbrace z_t : t\in I \rbrace$ be the process defined by the identity (\ref{eq:while-life-insurance}). Then,
\begin{multline} \label{eq:blue_line_life}
\mathcal{V}(x)=\mathbb{E}\bigg\lbrace\int_I\bigg\Vert 
\int_\mathscr{T} \bigg(
\sum_{k=1}^K \alpha_k \kappa(x_k,T)
  -  \alpha(x) \kappa(x,T)\bigg)\sigma_t(T)
 dT \bigg\Vert_{H}^2 (1-t)dt\bigg\rbrace, \\\hspace{1cm}\text{for any $x\in \mathscr{X}$,} 
\end{multline}
where we set
\begin{equation}
\kappa(x,T)\triangleq S(x,T)\mu(T,x+T), \hspace{1cm} \text{for any $x\in\mathscr{X}$ and $T\in \mathscr{T}$.} \nonumber
\end{equation}
\end{proposition}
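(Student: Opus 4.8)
The plan is to recognise the process $z$ as a linear, time-independent observation of the price curve and then to invoke Proposition \ref{pr:model_point_characterization}. To this end I would introduce the operator $\Lambda : E\rightarrow U$ defined by
\[
(\Lambda p)(x)\triangleq \int_\mathscr{T}\kappa(x,T)\,p(T)\,dT, \hspace{1cm}\text{for any $p\in E$ and $x\in\mathscr{X}$,}
\]
and set $\zeta:I\times E\rightarrow U$ equal to $\zeta(t,p)\triangleq \Lambda p$, so that $z_t=\zeta(t,p_t)$ reproduces exactly the identity (\ref{eq:while-life-insurance}). The estimates carried out in the preceding Lemma — namely the bounds (\ref{eq:z_L2_norm}) and (\ref{eq:z_prime_bound_1}), now read pathwise for a generic $p\in E$ rather than for $p_t$ — show that $\Lambda$ is a bounded linear operator, i.e. $\Vert\Lambda p\Vert_U\lesssim_E\Vert p\Vert_E$.

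First I would check that $\zeta$ meets the hypotheses of Proposition \ref{pr:model_point_characterization}. Since $\zeta$ is independent of $t$ and linear, hence smooth, in its second argument, one has $\nabla_1\zeta\equiv 0$, $\nabla_2\zeta(t,p)=\Lambda$ for every $(t,p)$, and $\nabla_2^2\zeta\equiv 0$. Boundedness of $\Lambda$ then gives $\Vert\nabla_2\zeta\Vert_\infty=\Vert\Lambda\Vert_{\mathcal{L}(E,U)}<\infty$, so $\zeta$ is of class $\mathscr{C}_b^{1,2}$; moreover the defining relation (\ref{eq:heat_equationo_zeta}) of a BS-function holds trivially, because both $\nabla_1\zeta(t,p_t)$ and $\text{tr}(\nabla_2^2\zeta(t,p_t);\sigma_t)$ vanish identically. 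Proposition \ref{pr:model_point_characterization} thus applies with $\nabla_2\zeta(t,p_t)=\Lambda$ and yields
\[
\mathcal{V}(w)=\mathbb{E}\bigg\{\int_I\big\Vert\langle v-w,\Lambda\sigma_t\rangle_U\big\Vert_H^2\,(1-t)\,dt\bigg\}, \hspace{1cm}\text{for any $w\in\mathscr{Y}^\ast$,}
\]
where $\Lambda\sigma_t\in\gamma(H,U)$ by the ideal property (Lemma \ref{le:ideal_property}).

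It then remains to make the $H$-valued pairing explicit. By linearity it suffices to treat a single evaluation functional $\delta_y$, $y\in\mathscr{X}$, and to establish that $\langle\delta_y,\Lambda\sigma_t\rangle_U=\int_\mathscr{T}\kappa(y,T)\,\sigma_t(T)\,dT$, the right-hand side being an $H$-valued Bochner integral with $\sigma_t(T)=\langle\delta_T,\sigma_t\rangle_E$. I would prove this by testing against an arbitrary $h\in H$: on the left, $\langle\langle\delta_y,\Lambda\sigma_t\rangle_U,h\rangle_H=\langle\delta_y,(\Lambda\sigma_t)h\rangle_U=(\Lambda(\sigma_t h))(y)=\int_\mathscr{T}\kappa(y,T)(\sigma_t h)(T)\,dT$, while $(\sigma_t h)(T)=\langle\delta_T,\sigma_t h\rangle_E=\langle\sigma_t(T),h\rangle_H$; commuting the $T$-integral with the inner product against $h$ then identifies the two sides. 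Expanding $v-w=\sum_{k=1}^K\alpha_k\delta_{x_k}-\alpha(x)\delta_x$ by linearity and substituting into the displayed representation of $\mathcal{V}$ produces exactly (\ref{eq:blue_line_life}).

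The main obstacle is precisely this last identity: justifying that the $H$-valued dual pairing commutes with the Bochner integral over $\mathscr{T}$ — equivalently, that $T\mapsto\kappa(y,T)\sigma_t(T)$ is Bochner integrable in $H$ and that $\int_\mathscr{T}\kappa(y,T)\langle\sigma_t(T),h\rangle_H\,dT=\langle\int_\mathscr{T}\kappa(y,T)\sigma_t(T)\,dT,h\rangle_H$ is legitimate. This is the deterministic, $H$-valued counterpart of the commutation between evaluation functionals and integration already exploited in (\ref{eq:T-Bond_dynamics}). Integrability cannot be read off from the crude bound $\Vert\sigma_t(T)\Vert_H\leq\Vert\delta_T\Vert_{E^\ast}\Vert\sigma_t\Vert_{\gamma(H,E)}$ alone; it rests on the same integration-by-parts/decay control of $\kappa(y,T)=-\partial_T S(y,T)$ that underlies the preceding Lemma, after which the interchange follows from Fubini's theorem.
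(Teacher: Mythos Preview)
Your proposal is correct and follows essentially the same route as the paper: define the linear operator $Z$ (your $\Lambda$) by $(\Lambda p)(x)=\int_\mathscr{T}\kappa(x,T)p(T)\,dT$, set $\zeta(t,p)=\Lambda p$, verify that $\zeta$ is of class $\mathscr{C}_b^{1,2}$ and a BS-function relative to $p$, apply Proposition \ref{pr:model_point_characterization}, and then identify $\langle\delta_y,\Lambda\sigma_t\rangle_U$ with $\int_\mathscr{T}\kappa(y,T)\sigma_t(T)\,dT$. If anything, you are more scrupulous than the paper, which simply asserts that $Z\in\mathcal{L}(E,U)$ and obtains the final pairing identity ``by a direct computation'', whereas you explicitly invoke the estimates of the preceding Lemma for boundedness and flag the Bochner-integrability and Fubini issues needed to justify the interchange.
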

\begin{proof}
First of all, we introduce the functional $Z\in \mathcal{L}(E,U)$ defined by setting 
\begin{equation}
Z(q)= \int_\mathscr{T} \kappa(\cdot, T) q(T) dT, \hspace{1cm} \text{for any $q\in E$.} \nonumber
\end{equation} 
Hence, when considering the function $\zeta:I\times E\rightarrow U$ obtained by the identity 
\begin{equation}\label{eq:Z_zeta_representation}
\zeta(t,q)\triangleq Z(q), \hspace{1cm} \text{for any $t\in I$ and $q\in E$,} 
\end{equation}
we get that $\zeta$ is of class $\mathscr{C}_b^{1,2}$ and it turns out to be a BS-function relative to $p$.  

Moreover, note that the process $z=\lbrace z_t:t\in I \rbrace$ defined by the identity (\ref{eq:while-life-insurance}) is recovered by setting $z_t=Z(p_t)$, for any $t\in I$.
In this respect, it is worth to be highlighted that the identification (\ref{eq:Z_zeta_representation}) is allowed since the survivor index (\ref{eq:survivir_index}) does not depend on $t\in I$, that is the case when imposing the condition (\ref{eq:mortality_condition}).

Thus, when writing $\nabla Z$ to denote the operator $\nabla_2\zeta$,  Proposition \ref{pr:model_point_characterization} applies and gives that
\begin{multline}\label{eq:blue_line_proof}
\mathcal{V}(x)=\mathbb{E}\bigg\lbrace\int_I\bigg\Vert \sum_{k=1}^K \alpha_k \nabla Z(p_t)\sigma_t(x_k)- \alpha(x)\nabla Z(p_t)\sigma_t(x) \bigg\Vert_{H}^2 (1-t)dt\bigg\rbrace, \\ \text{for any $x\in \mathscr{X}$.} 
\end{multline}
where for notation simplicity, in the identity (\ref{eq:blue_line_proof}) we set 
\begin{equation}
\nabla Z(p_t)\sigma_t(x)\triangleq \langle \delta_x,\nabla Z(
p_t)\sigma_t \rangle_U, \hspace{1cm} \text{for any $t\in I$ and $x\in \mathscr{X}$.} \nonumber
\end{equation}

On the other hand, by a direct computation, one has that a.s.
\begin{equation} \label{eq:model_point_deirivative}
\nabla Z(p_t)\sigma_t(x) = \int_\mathscr{T}  \kappa(x,T)\sigma_t(T)dT, \hspace{1cm} \text{for any $x\in\mathscr{X}$} \nonumber  
\end{equation}
and hence, jointly with the identity (\ref{eq:blue_line_proof}) we obtain the representation (\ref{eq:blue_line_life}).
\end{proof}

\begin{figure}
    \centering
        \includegraphics[width=0.58\textwidth]{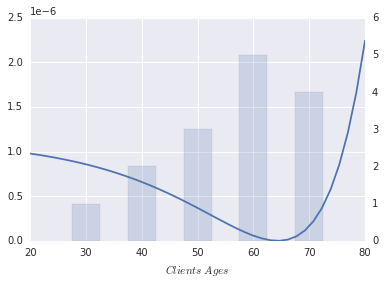}
    \caption{Visual representation of the functional (\ref{eq:model_point_characterization}), where $K=5$, with $x_1=30$, $x_2=40$,...,$x_K=70$. 
In particular, the values $\alpha_k$, for any $k=1,...,K$, are represented by the vertical bar and they are assessed by the right side vertical axis. The value $\mathcal{V}(x)$, for $20\leq x \leq 80$, is represented by the blue line and it is assessed by the left side vertical axis.
The process (\ref{eq:g++2_factors}) has been considered to model the short rate dynamics with same parameters as described in the caption of Figure \ref{fig:bond_replication}. For simplicity, the mortality force (\ref{eq:mortality_force}) has been computed by setting $a(s)=0.0003$ and $b(s)=0.06$, for any $s\geq 1$.} \label{fig:whole_life_insurance}
\end{figure}

\noindent{\emph{Numerical example}.}
Figure \ref{fig:whole_life_insurance} provides a visual representation of the functional (\ref{eq:blue_line_life}). In particular, each bar represents the amount $\alpha_k$ associated to the age $x_k$, for $k=1,...,K$, and it is assessed by the right side vertical axis of the figure. On the other hand, the functional $\mathcal{V}(x)$, varying $x\in \mathscr{X}$, is represented by the line and its value is assessed by the scale on the left side vertical axis of the figure.

The process $p$ has been simulated by considering the same model  governing the evolution of the instantaneous-short-rate as in Section \ref{sec: Interest Rates}. 
The survivor index (\ref{eq:survivir_index}) has been derived by considering a Gompertz-type law modelling the force of mortality \cite{gom}, defined by setting
\begin{equation}\label{eq:mortality_force}
\mu(s,x+s)= a(s) \exp{\lbrace(x+s) b(s)}\rbrace, \hspace{1cm} \text{for any $x\in \mathscr{X}$ and $s\in\mathscr{T}$.} 
\end{equation} 
where $a(s)$ and $b(s)$, varying $s\in\mathscr{T}$, are positive functions.

All the details of the simulation are collected in the caption of Figure \ref{fig:whole_life_insurance}.

%
%

\ 

\noindent{\textbf{Acknowledgements.}} The author would like to thank Jos\'e L. Fern\'andez for fruitful discussions and providing valuable suggestions which have greatly improved the exposition. The author also thanks Ana M. Ferreiro and Jos\'e A. Garc\'ia for a constructive help related to the computational issues of the paper.

\bibliographystyle{acm}
\bibliography{bibfile}
\end{document}